\newif\iflong
\newif\ifshort
\newif\ifdraft%
\theoremstyle{plain}
\newtheorem*{theorem*}{Theorem}
\crefname{observation}{Observation}{Observations}
\newcommand{\E}[1]{\mathbb{E}\left[#1\right]}
\newcommand{\Exp}{\mathbb E}
\newcommand{\Prob}[1]{\mathbb{P}\left(#1\right)}
\renewcommand{\P}{\mathbb P}
\newcommand{\eps}{\varepsilon}
\newcommand{\whp}{w.e.h.p.\xspace}
\newcommand{\dd}{\,\mathrm d}
\newcommand{\ee}{\mathrm e}
\newcommand{\abs}[1]{|#1|}
\newcommand{\B}{\mathcal B}
\newcommand{\Hb}{\ensuremath{\mathbb H^2}\xspace}
\newcommand{\bH}{\Hb}
\newcommand{\disk}{\ensuremath{\mathcal D_R}\xspace}
\newcommand{\G}{\ensuremath{\mathcal{G}(n, \alpha, C)}\xspace}
\newcommand{\GIRG}{\ensuremath{\mathcal{G}(n, \beta, \lambda)}\xspace}
\newcommand{\given}{\;|\;}
\newcommand{\ustar}{\ensuremath{u^*}\xspace}
\newcommand{\inner}{inner-neighbourhood\xspace}
\newcommand{\indeg}{inner-degree\xspace}
\newcommand{\derive}[1]{\frac{\mathrm d}{\mathrm d #1}}
\newcommand{\clique}{\ensuremath{\omega(G)}\xspace}
\newcommand{\chrom}{\ensuremath{\chi(G)}\xspace}
\newcommand{\degen}{\ensuremath{\kappa(G)}\xspace}
\newcommand{\coresize}{\ensuremath{\sigma(G)}\xspace}
\DeclarePairedDelimiter\floor{\lfloor}{\rfloor}
\DeclareMathOperator{\dist}{d_h}
\DeclareMathOperator{\innerdeg}{\Gamma}
\DeclareMathOperator{\indicator}{\mathds{1}}
\DeclareMathOperator{\forbidden}{\mathcal{F}_u}
\DeclareMathOperator{\hypercycle}{\mathcal{H}}
\newcommand{\sam}[1]{\ifdraft\textcolor[rgb]{0.9,0.1,0.1}{Sam: #1}\fi}
\title{Hyperbolic Random Graphs: Clique Number and Degeneracy with Implications for Colouring} 
\titlerunning{Hyperbolic Random Graphs:
Clique Number and Degeneracy} 
 \author{Samuel Baguley}{Hasso Plattner Institute, University of Potsdam, Germany}{Samuel.Baguley@hpi.de}{}{}
\author{Yannic Maus}{TU Graz, Austria}{yannic.maus@tugraz.at}{}{}
\author{Janosch Ruff}{Hasso Plattner Institute, University of Potsdam, Germany}{Janosch.Ruff@hpi.de}{}{}
\author{George Skretas}{Hasso Plattner Institute, University of Potsdam, Germany}{Georgios.Skretas@hpi.de}{}{}
\authorrunning{S.~Baguley, Y.~Maus, J.~Ruff, G.~Skretas} 
\keywords{hyperbolic random graphs, scale-free networks, power-law graphs, cliques, degeneracy, vertex colouring, chromatic number} 
\begin{document}

\maketitle
	
\begin{abstract}

Hyperbolic random graphs inherit many properties that are present in real-world networks. The hyperbolic geometry imposes a scale-free network  with a strong clustering coefficient. Other properties like a giant component, the small world phenomena and others follow. This motivates the design of simple algorithms for hyperbolic random graphs. 

In this paper we consider threshold hyperbolic random graphs (HRGs). 
Greedy heuristics are commonly used in practice as they deliver a good approximations to the optimal solution even though their theoretical analysis would suggest otherwise. A typical example for HRGs are degeneracy-based greedy algorithms [Bläsius, Fischbeck; Transactions of Algorithms '24].
In an attempt to bridge this theory-practice gap we characterise the parameter of degeneracy yielding a simple approximation algorithm for colouring HRGs. The approximation ratio of our algorithm ranges from $(2/\sqrt{3})$ to $4/3$ depending on the power-law exponent of the model.    We complement our findings for the degeneracy with new insights on the clique number of hyperbolic random graphs. We show that degeneracy and clique number are substantially different and derive an improved  upper bound on the clique number. Additionally, we show that the core of HRGs does not constitute the largest clique. 

Lastly we demonstrate that the degeneracy of the closely related standard model of geometric inhomogeneous random graphs behaves inherently different compared to the one of hyperbolic random graphs.
\end{abstract}

\newpage

\section{Introduction}

Many real-world networks have a heterogeneous degree distribution, close to a power-law, as well as a constant clustering coefficient. The \emph{hyperbolic random graph} model (HRG) introduced by Krioukov~et.~al.~\cite{PhysRevE.82.036106} combines both properties \cite{gpp-hrg-12}, which has led to considerable interest in recent years. Various aspects of HRGs have been studied, including clique size \cite{bfm-cliques-17, Stegehuis-cliques-2023, Schiller2024}, treewidth \cite{bfk-tw-2016}, minimum vertex cover size \cite{katzmann-exactvc-2023, katzmann-approxvc-2023}, and diameter \cite{fk-dhrg-18, KM-diameter-15, ms-k-19}. A hyperbolic random graph is a graph embedded in the hyperbolic plane where pairs of vertices have an edge if 
they are close according to the hyperbolic distance. It is generated by randomly throwing $n$ vertices on a disk of radius $R$ (dependent on $n$). Since hyperbolic space grows exponentially, most vertices are of small degree and located close to the boundary of the disk, while few vertices have large degree and lie close to the centre. 
This distribution of vertices leads to the power-law degree distribution of HRGs.

In this paper, we study the vertex colouring problem on HRGs, along with the related concepts   of  clique number and degeneracy. The $k$-colouring problem asks to colour the vertices of a graph with $k$ colours, while assigning different colours to adjacent vertices. The \emph{chromatic number} \chrom is the minimum number of colours needed to colour a graph in such a way. For $k\geq 3$ the problem is one of the original NP-hard problems \cite{stoc-cook71, Karp1972}. On general graphs, even loosely approximating the chromatic number is particularly hard \cite{z-colourhardness-07}. 

The answer to the colouring problem is closely related to the clique number \clique and the degeneracy of a graph. The \emph{clique number} is the number of vertices of the largest clique of the graph and it serves as a natural lower bound to the chromatic number. The \emph{degeneracy \degen} is the minimum integer $k'$ for which there exists an ordering of the vertex set of $G$, $V = (v_1, v_2, \cdots, v_n)$, such that for every index $i \in [n-1]$, $v_i$ has at most $k'$ neighbours with greater index.  Any graph $G$ can be \emph{easily} coloured with $\degen+1$ colours by iterating through the vertices in  reverse order and simply colouring a vertex with a colour not used by any of its higher ranked neighbours.

We aim to study these structural parameters that are not only fundamental to the model but also in general for algorithm design in various models of computation \cite{Barenboim2013,CNR23,GG24}. The most prominent large clique in an HRG is formed by the vertices in the graph's core \cite{bfm-cliques-17}.
Simply put, the core emerges among polynomially many vertices of distance at most $R/2$ from the centre of the disk, which due to the triangle inequality form a clique. We denote the size of this clique by \coresize. 
At this point one may wonder whether the core forms the largest clique of the graph, a statement that we disprove in \Cref{prop:larger-cliques}.
Nevertheless we show that the largest clique can at most be small constant factor larger than the core. 
\begin{theorem*}[Simplified version of \cref{the:clique-upper-bound}]
\label{thm:informalCliqueUpper}
There exists a constant $\delta>0$ such that for any threshold HRG $G$,  
$\coresize+1\leq \clique\leq \sqrt{4/3-\delta}\cdot \coresize$ holds \whp\footnote{An event holds \emph{with extremely high probability} (\whp), if for every $c > 1$, there exists an $n_0$ such that for every $n \geq n_0$ the event holds with probability at least $1 - n^{-c}$.}
\end{theorem*}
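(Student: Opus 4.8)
The plan is to prove the two inequalities separately, with the upper bound carrying the real weight. Throughout I work in polar coordinates $(r,\theta)$, where the radial density is proportional to $\sinh(\alpha r)\approx\tfrac12 e^{\alpha r}$ on $[0,R]$, so that the expected number of points inside a centred ball $B(O,\rho)$ scales like $n\,e^{\alpha(\rho-R)}$; in particular $\coresize = n\,e^{-\alpha R/2}(1+o(1))$, and since it is a sum of independent indicators with mean $\Theta(n^{1-\alpha})\to\infty$ it concentrates around its expectation \whp by a Chernoff bound.

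For the lower bound $\coresize+1\le\clique$ I would not enlarge the core by a single vertex directly but exhibit a whole \emph{clique region} of strictly larger measure. Writing $a=r-R/2$ and using the exact hyperbolic law of cosines, two points at depths $a_1,a_2$ are adjacent iff their angular separation is at most $\Theta(a_1+a_2):=\arccos\!\bigl(1-2e^{-(a_1+a_2)}\bigr)$, capped at $\pi$ once $a_1+a_2\le 0$; this is convenient because $\Theta'(u)=-1/\sqrt{e^{u}-1}$. Trimming the angular half-width of the core slightly below $\pi$ near $a=0$ frees adjacency budget to admit points with $a>0$, and since the density $e^{\alpha a}$ is larger for $a>0$ than for the sacrificed $a<0$, the net change in measure is $\int(e^{-\alpha w}-e^{\alpha w})\,\eta(w)\,dw>0$ for a suitable trimming profile $\eta$. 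As both vertex counts concentrate around their means, the trimmed-and-extended region \whp contains strictly more than $\coresize$ vertices, all pairwise adjacent, giving a clique of size $\ge\coresize+1$.

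For the upper bound I argue geometrically. A clique $K$ has hyperbolic diameter at most $R$, so by the hyperbolic analogue of Jung's theorem $K$ lies in a ball $B(c,\rho^{*})$ whose radius is the circumradius of an equilateral triangle of side $R$; from $\sinh^{2}\rho^{*}=\tfrac23(\cosh R-1)$ one gets $\rho^{*}=R/2+\tfrac12\ln(4/3)+o(1)$, which is exactly where the constant $4/3$ enters. Hence $\clique\le\max_{c}\,|B(c,\rho^{*})\cap V|$. The first key lemma is that this maximum is attained, up to lower-order terms, at $c=O$: pushing the ball outward trades a thin high-density sliver for the full angular range it loses, and a second-order expansion (the relevant function is $e^{-\alpha s}+\tfrac\alpha\pi\int e^{\alpha x}\Theta\,dx$, whose second derivative at $s=0$ has sign $\alpha(\alpha-1)<0$) shows the origin is the maximiser. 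Evaluating the centred count gives $|B(O,\rho^{*})\cap V|=\tfrac{\cosh(\alpha\rho^{*})-1}{\cosh(\alpha R/2)-1}\,\coresize=(4/3)^{\alpha/2}\,\coresize(1+o(1))\le\sqrt{4/3}\,\coresize$.

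The remaining and hardest point is to upgrade $\sqrt{4/3}$ to $\sqrt{4/3-\delta}$ with a \emph{uniform} $\delta>0$, since $(4/3)^{\alpha/2}\to\sqrt{4/3}$ as $\alpha\to1$. Here one must exploit that $K$ cannot fill its enclosing ball: points of $K$ on the outer shell $r\approx\rho^{*}$ are pairwise within distance $R$ only inside an angular wedge of opening $\arccos(-\tfrac12)=2\pi/3$, so $K$ omits a constant fraction of the shell, precisely where the $\mu$-mass of the ball concentrates. Quantifying this through the clique-region optimisation of the second paragraph — now with the exact $\Theta$, whose convexity ($\Theta''>0$) forces the extremal angular profile to be pinned by off-diagonal rather than same-radius adjacency constraints — should shave a multiplicative factor off $(4/3)^{\alpha/2}$ uniformly in $\alpha$. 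I expect this variational step, namely identifying the measure-maximising region of diameter $\le R$ and controlling it uniformly in the transition regime $r\approx R/2$ where the small-angle approximation fails and the exact geometry is essential, to be the main obstacle; by comparison the concentration of the counts and the reduction to the continuous optimisation are routine.
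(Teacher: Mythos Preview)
Your upper-bound skeleton---enclose a clique in a hyperbolic ball via Jung/Dekster, bound the ball's radius by $R/2+\tfrac12\log(4/3)$, and observe that among all balls of that radius the one centred at the origin has the largest $\mu$-measure---is exactly the paper's route. Two points, however, deserve correction.

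First, the step you flag as ``the remaining and hardest point'' is a phantom difficulty. In the HRG model $\alpha\in(1/2,1)$ is a \emph{fixed} parameter; the constant $\delta$ in the simplified statement is allowed to depend on it. Since $(4/3)^{\alpha/2}<(4/3)^{1/2}$ strictly for every fixed $\alpha<1$, you may simply set $\delta=4/3-(4/3)^{\alpha}>0$ and absorb the $o(1)$ error for large $n$. No variational argument over clique shapes, and no control of the regime $\alpha\to1$, is needed. The paper stops at $\clique\le((4/3)^{\alpha/2}+o(1))\coresize$ and that is already the full theorem.

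Second, you write $\clique\le\max_c|B(c,\rho^*)\cap V|$ and then argue only about $\mu(B(c,\rho^*))$. Your second-derivative computation (at best) shows the \emph{expectation} is maximised at $c=O$; it says nothing about the random supremum over an uncountable family of centres, and a local-max test at $c=O$ does not even give a global statement. The paper closes this gap cleanly: any clique of size $\ge3$ lies in the circumscribed ball of some triple of its vertices, so only $\binom{n}{3}$ balls are in play. A Chernoff bound on each (using $\mu(B_c(\rho^*))\le\mu(B_0(\rho^*))$, which the paper records as a one-line monotonicity observation rather than a calculus argument) followed by a union bound over the $O(n^3)$ triples finishes the proof.

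For the lower bound, your clique-region deformation is more elaborate than necessary and, as sketched, does not quantify the gain in measure well enough to beat the fluctuations. The paper instead shows directly that \whp there is a vertex $u$ at radius $R/2+o(1)$ whose ``forbidden area'' in $\B_0(R/2)$ is empty, so $u$ is adjacent to the entire core and $\{u\}\cup\text{core}$ is a clique of size $\coresize+1$; in fact they exhibit $\omega(\log n)$ such vertices by a sectoring argument.
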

This upper bound improves on prior work \cite{bfm-cliques-17}, which showed that there exists some constant $c>1$ such that $\clique\leq c\cdot\coresize$ holds \whp, but without providing any upper bound on $c$.

We can now see that the core and clique are not the same. Nonetheless, (i) this theorem shows that the largest clique size and the core size are closely related, and (ii) the core of HRGs is a very well understood object, whereas the largest clique is not. Thus the natural approach to bound the degeneracy is to use the core. We show the following theorem.

\begin{theorem*}[Simplified version of \cref{the:degeneracy-upper,the:degeneracy-upper}]
\label{thm:informalDegen} 
There exist constants $\delta_1,\delta_2>0$ such that for any threshold HRG $G$,  $(1+\delta_1)\cdot \coresize\leq \degen\leq (4/3-\delta_2)\cdot \coresize$ holds \whp
\end{theorem*}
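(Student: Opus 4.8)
The plan is to show that both inequalities are driven by a single quantity: the expected \indeg profile of $G$ under the \emph{radius ordering}, in which vertices are processed from largest radius to smallest. Under this ordering the forward-neighbours of a vertex $v$ are exactly its neighbours closer to the centre of \disk, i.e. its \inner. Writing $f$ for the radial density and $p(r,r')$ for the probability that two vertices at radii $r,r'$ are adjacent, the expected \indeg of a vertex at radius $r$ is
\[
D(r)\;=\;n\int_0^{r} f(r')\,p(r,r')\dd r',
\]
where $p(r,r')=1$ whenever $r+r'\le R$ (such pairs are automatically within distance $R$) and $p(r,r')\approx \tfrac{2}{\pi}\ee^{(R-r-r')/2}$ in the angular-limited regime $r+r'>R$. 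Splitting the integral at $r'=R-r$ and evaluating it shows that, writing $r=R/2+y$, one has $D(R/2+y)=h(y)\,\coresize\,(1+o(1))$ for an explicit function $h$ with $h(0)=1$, $h'(0)>0$, and $h(y)\to 0$ as $y\to\infty$. Hence $h$ is maximised at some $y^\ast>0$ (a radius just \emph{outside} the core), and the crucial analytic fact is $1<\max_y h(y)=h(y^\ast)<4/3$. The upper bound on \degen will follow because $h(y^\ast)$ bounds the maximum \indeg, and the lower bound because $h(y^\ast)$ is also realised as the minimum degree of a concrete subgraph.

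For the upper bound I would use that for any fixed vertex ordering the degeneracy is at most the largest forward-degree; under the radius ordering this is the largest \indeg over all vertices. Conditioned on its radius, the \indeg of a vertex is a sum of independent indicators, one per other vertex, so a Chernoff bound controls its deviation from $D(r)$. Near the peak the mean is $\Theta(\coresize)=\Theta(n^{1-\alpha})=\omega(\log n)$, so the upper-tail probability is $\exp(-\Theta(\coresize))$; a union bound over the $n$ vertices then shows that \whp no vertex has \indeg exceeding $(h(y^\ast)+o(1))\coresize$. Since $h(y^\ast)<4/3$, this yields $\degen\le(4/3-\delta_2)\coresize$ \whp.

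For the lower bound I would exhibit a subgraph of minimum degree at least $(1+\delta_1)\coresize$, which forces $\degen$ to be at least this large. The natural candidate is $S=\{v:\,r_v\le R/2+y^\ast\}$, the core together with the annulus up to the peak radius. For $v\in S$ at radius $R/2+y$ with $0<y<y^\ast$, its degree inside $S$ is the sum of its \indeg (neighbours at radius $<R/2+y$) and the neighbours it has in the outer part of $S$ (radius in $(R/2+y,\,R/2+y^\ast)$); evaluating both integrals, the growing $y$-dependent terms combine into a function of $y$ that is \emph{decreasing} on $(0,y^\ast)$, so its minimum is at the outer boundary $y=y^\ast$, where $v$ has only inner-neighbours and hence degree $D(R/2+y^\ast)=h(y^\ast)\coresize$. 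Core vertices ($r_v\le R/2$) are only better off, being adjacent to the whole core clique and reaching part of the annulus as well. A Chernoff and union bound over the $\Theta(\coresize)$ vertices of $S$ gives minimum degree at least $(h(y^\ast)-o(1))\coresize\ge(1+\delta_1)\coresize$ \whp, and therefore $\degen\ge(1+\delta_1)\coresize$.

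The main obstacle is the geometric input feeding into $h$: one must make the two-regime connection probability precise — in particular control the error of the distance approximation near the angular threshold $r+r'\approx R$ — so that the leading-order value of $D(r)$, the location of the maximiser $y^\ast$, the decreasing-degree computation on $S$, and the concentration estimates are all valid uniformly; and one must verify the clean inequality $\max_y h(y)<4/3$ over the whole parameter range $\alpha\in(1/2,1)$, so that $\delta_1,\delta_2>0$ (with the $\delta_i$ allowed to depend on $\alpha$). Note that the two bounds meet at $h(y^\ast)$, which in fact pins down $\degen=(1+o(1))\,h(y^\ast)\coresize$; the content of the theorem is that this common value lies strictly between $\coresize$ and $(4/3)\coresize$.
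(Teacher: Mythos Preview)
Your plan is correct and follows the same overall strategy as the paper: both inequalities are controlled by the maximum \indeg under the radius ordering, with the upper bound coming directly from that ordering (the paper's \cref{obs:deg-upper-inner}) and the lower bound coming from the minimum degree of the induced subgraph on $\{v:r_v\le R/2+y^\ast\}$ (the paper's \cref{lem:deg-lower-inner}, where the monotonicity you derive by ``evaluating both integrals'' is obtained more directly via \cref{lem:ball-lower-bound}).

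The one place where the paper diverges from your sketch is exactly at the obstacle you flag. You treat $h$ as a single explicit function and assert that the two bounds \emph{meet} at $h(y^\ast)$, pinning down $\degen=(1+o(1))h(y^\ast)\coresize$. The paper does not achieve this: because the connection angle $\theta_R(r,r')$ involves an arccos whose leading constant varies near $r+r'\approx R$, the paper bounds $\theta_R$ above by a \emph{piecewise} function (with breakpoints at $\Delta=\log(4/3),\log 2,2\log 2$; see \cref{lem:theta-bounds,lem:inner-neighoubrhood-upper}) and below by the cruder $2\sqrt{\ee^{R-r-r'}}$ bound (\cref{lem:inner-neighoubrhood-lower}). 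Optimising these separately yields the upper constant $(4/3)^\alpha$ and a different, smaller lower constant; the case split on $\alpha\lessgtr 9/10$ in the proof of \cref{the:degeneracy-upper} is what actually verifies $\max_y h(y)<4/3$ across the whole range. So your plan would, if carried out with a uniform control of the approximation error in $\theta_R$, give a sharper statement than the paper proves---but that uniform control is precisely the step that the paper sidesteps with the piecewise argument.
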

The main surprise of this theorem is that the degeneracy is bounded away from the coresize by a constant factor. 
As the chromatic number is  lower bounded by the core size, the upper bound on the degeneracy in this theorem implies a simple algorithm colouring with at most $(4/3-\delta_2)\chrom$ colours. The approximation guarantee of this algorithm ranges from $2/\sqrt{3}$ to $4/3$ depending on further model parameters, see \Cref{sec:preliminaries} and \Cref{the:approx-algo} for details. In any case, this improves on the previously best approximation ratio of $2$  \cite[Lemma 7]{bfks-unitdisk-23}. 
The algorithm iteratively removes vertices of degree at most $(4/3-\delta_2)\coresize-1$ and then colours them in the reverse order. 
The next thing one would hope is to be able colour the graph with \clique colours using the same process.  In \cite{bf-externalval-2024}, the authors conducted experiments where they were iteratively removing the vertex with the smallest degree of the graph, up to vertices with residual degree equal to \clique. In their findings, this process did not remove every vertex, implying \clique< \degen for their generated graphs. We substantiate their findings by providing a rigorous proof demonstrating that the clique number is, in fact, a constant factor smaller than the degeneracy. 

\begin{theorem*}[Simplified version of \cref{the:clique-deg-gap}]
\label{thm:informalClique}
There exists a constant $\eps>0$ such that for any threshold HRG $G$,  
$\clique\leq (1-\eps)\cdot\degen$ holds \whp 
\end{theorem*}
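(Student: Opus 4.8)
The plan is to derive the gap by comparing a sharp upper bound on \clique with a sharp lower bound on \degen, both normalised by the core size \coresize. Concretely, I would express each bound as an explicit constant depending on the power-law parameter $\alpha$ (equivalently $\beta=2\alpha+1$): write the clique bound of \cref{the:clique-upper-bound} as $\clique \le f_\omega(\alpha)\,\coresize$ \whp{} with $f_\omega(\alpha)\le\sqrt{4/3-\delta}$, and prove a matching degeneracy bound $\degen \ge f_\kappa(\alpha)\,\coresize$ \whp. The theorem then follows with $1-\eps = \sup_{\alpha} f_\omega(\alpha)/f_\kappa(\alpha)$, \emph{provided} this supremum is strictly below $1$. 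Note that both bounds are stated against the same realised $\coresize$, so on the intersection of the two \whp{} events one gets $\clique \le f_\omega(\alpha)\coresize \le (f_\omega(\alpha)/f_\kappa(\alpha))\degen = (1-\eps)\degen$ and the random \coresize{} cancels. The whole content is therefore to produce $f_\omega,f_\kappa$ sharply enough that $f_\omega(\alpha)<f_\kappa(\alpha)$ holds for every admissible $\alpha\in(1/2,1)$, i.e.\ the regime in which $\coresize\to\infty$.

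For the degeneracy lower bound I would use $\degen=\max_H \delta(H)$ and exhibit an explicit dense subgraph: the vertices lying in a hyperbolic ball of radius $R/2+x$, with the offset $x$ (and possibly the centre) optimised. The minimum degree of this subgraph is attained at its outer boundary; a boundary vertex is adjacent to the entire deep core together with an angular wedge of the surrounding shell, so its degree reduces to a one-dimensional integral of the radial density $\propto e^{\alpha(r-R)}$ against an angular-window factor $\asymp e^{(R-r_u-r_v)/2}$. Carrying out this integral gives the boundary degree as $g_\alpha(x)\,\coresize$; since $x=0$ recovers the core (so $g_\alpha(0)=1$) and one checks $g_\alpha'(0)>0$, maximising over $x$ yields $f_\kappa(\alpha)=1+\delta_1(\alpha)>1$ as an explicit expression.

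The key step, and the main obstacle, is the quantitative comparison $f_\omega(\alpha)<f_\kappa(\alpha)$. The crude worst-case constants are dangerously close — $\sqrt{4/3}\approx 1.155$ for the clique against a degeneracy multiple only slightly above $1$ — so combining the \emph{simplified} bounds verbatim does not suffice: one must retain the full $\alpha$-dependence on both sides and exploit that the clique value carries a square root (coming from the pairwise, diameter-$\le R$ constraint) whereas the degeneracy value, governed only by a per-vertex degree, does not. I would reduce the claim to showing that the single explicit function $h(\alpha):=f_\kappa(\alpha)-f_\omega(\alpha)$ is bounded away from $0$ on $(1/2,1)$, proving this by monotonicity together with careful control at the two endpoints $\alpha\to 1/2^+$ and $\alpha\to 1^-$, where the core and hence both optimisations degenerate. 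I expect this endpoint analysis to be the delicate part, and I anticipate that if the centred-ball construction is too weak at some $\alpha$ (my back-of-envelope estimate of $f_\kappa$ is uncomfortably close to $\sqrt{4/3}$), then $f_\kappa$ must be strengthened — by allowing an off-centre ball or by discarding the lowest-degree boundary vertices of $H$ before measuring $\delta(H)$ — so that $f_\kappa$ provably dominates $f_\omega$ across the entire range.

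Finally, the probabilistic content is routine. The core size, the minimum degree of the explicit subgraph $H$, and the clique count all concentrate: after Poissonisation each relevant quantity is a sum of independent (or negatively associated) indicators, so Chernoff bounds and a union bound over the polynomially many vertices give that $\degen \ge f_\kappa(\alpha)\coresize$ and $\clique \le f_\omega(\alpha)\coresize$ hold simultaneously \whp. Intersecting these two \whp{} events then yields $\clique \le (1-\eps)\degen$ \whp, as required.
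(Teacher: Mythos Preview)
Your approach differs substantially from the paper's, and it has a genuine gap. The paper explicitly remarks (opening of \cref{sec:clique-upper-bound}) that the Jung-based clique bound of \cref{the:clique-upper-bound}, namely $f_\omega(\alpha)=(4/3)^{\alpha/2}$, implies \cref{the:clique-deg-gap} only for $\alpha$ large enough; for small $\alpha$ the clique upper bound \emph{exceeds} the degeneracy lower bound of \cref{the:degeneracy-upper}, so $f_\omega(\alpha)>f_\kappa(\alpha)$ there and your supremum is at least $1$. Concretely, as $\alpha\to 1/2^+$ one has $f_\omega(\alpha)\to(4/3)^{1/4}\approx 1.075$ while the paper's $f_\kappa(\alpha)\to (4/\pi)\exp((\pi-4)/4)\approx 1.028$. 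Your contingency of ``strengthening $f_\kappa$'' is not a fix you can count on: the degeneracy is pinned to the maximum inner-degree up to $(1\pm o(1))$ by \cref{obs:deg-upper-inner} and \cref{lem:deg-lower-inner}, so improving $f_\kappa$ means sharpening the inner-ball measure estimate, and there is no a priori reason the true value exceeds $(4/3)^{1/4}$ for $\alpha$ near $1/2$.

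The paper avoids this comparison altogether. Instead of routing both quantities through \coresize, it compares the clique directly to the inner-degree of its outermost vertex $u$. The key idea (\cref{lem:case3}) is a geometric separator: the line through the origin and $u$ defines a hypercycle of width $R/2$ that splits the inner-ball $\mathcal I_u$ into three pieces $\mathcal S_0\cup\mathcal S_1\cup\mathcal S_2$, with no edges between $\mathcal S_1$ and $\mathcal S_2$. Hence any clique through $u$ misses one of $\mathcal S_1,\mathcal S_2$, each of which carries a constant fraction of $|\innerdeg(u)|$ \whp, giving $|K|\le(1-\eps)|\innerdeg(u)|\le(1-\eps)|\innerdeg(\ustar)|$. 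Combining with $\degen\ge(1-o(1))|\innerdeg(\ustar)|$ from \cref{lem:deg-lower-inner} yields the gap uniformly in $\alpha$, without ever needing $f_\omega<f_\kappa$. The cases $r(u)\le R/2+o(1)$ and $r(u)\in R/2+\omega(1)$ are handled separately and are easy.
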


Our final contribution is to study the degeneracy  of \emph{geometric inhomogeneous random graphs} (GIRGs) \cite{k-girg-18}, a sibling to HRGs. The GIRGs also combine heterogeneity and high clustering. For most properties GIRGs and HRGs exhibit the same behaviour. Perhaps the first paper to find a difference between them is \cite{Stegehuis-cliques-2023}, where the authors show that the minimum number of maximal cliques in the two models differ. We show a significant discrepancy for the degeneracy of GIRGs compared to that of HRGs, see \Cref{fig:plots} and \Cref{cor:hrg-girg-gap}.

\medskip
\noindent\textbf{Outline.} See \cref{fig:plots} for a table with our results, as well as a plot comparing the bounds of our theorems for various model parameters. In \cref{sec:summary} we provide a detailed discussion of our results and techniques. \cref{sec:inner-neighbourhood} contains bounds on the degeneracy of HRGs (\Cref{the:degeneracy-upper}). In \cref{sec:clique-number}, we show the gap between clique number and degeneracy (\Cref{the:clique-deg-gap}), as well as bounds on the clique number (\Cref{the:clique-upper-bound}). Finally, \cref{sec:girg} contains results about the degeneracy of GIRGs. Statements where proofs or technical details are omitted can be found in the~\cref{appendix:start}.

\begin{figure}
     \centering
     \begin{subfigure}[b]{0.55\textwidth}
         \centering
         \includegraphics[width=\textwidth]{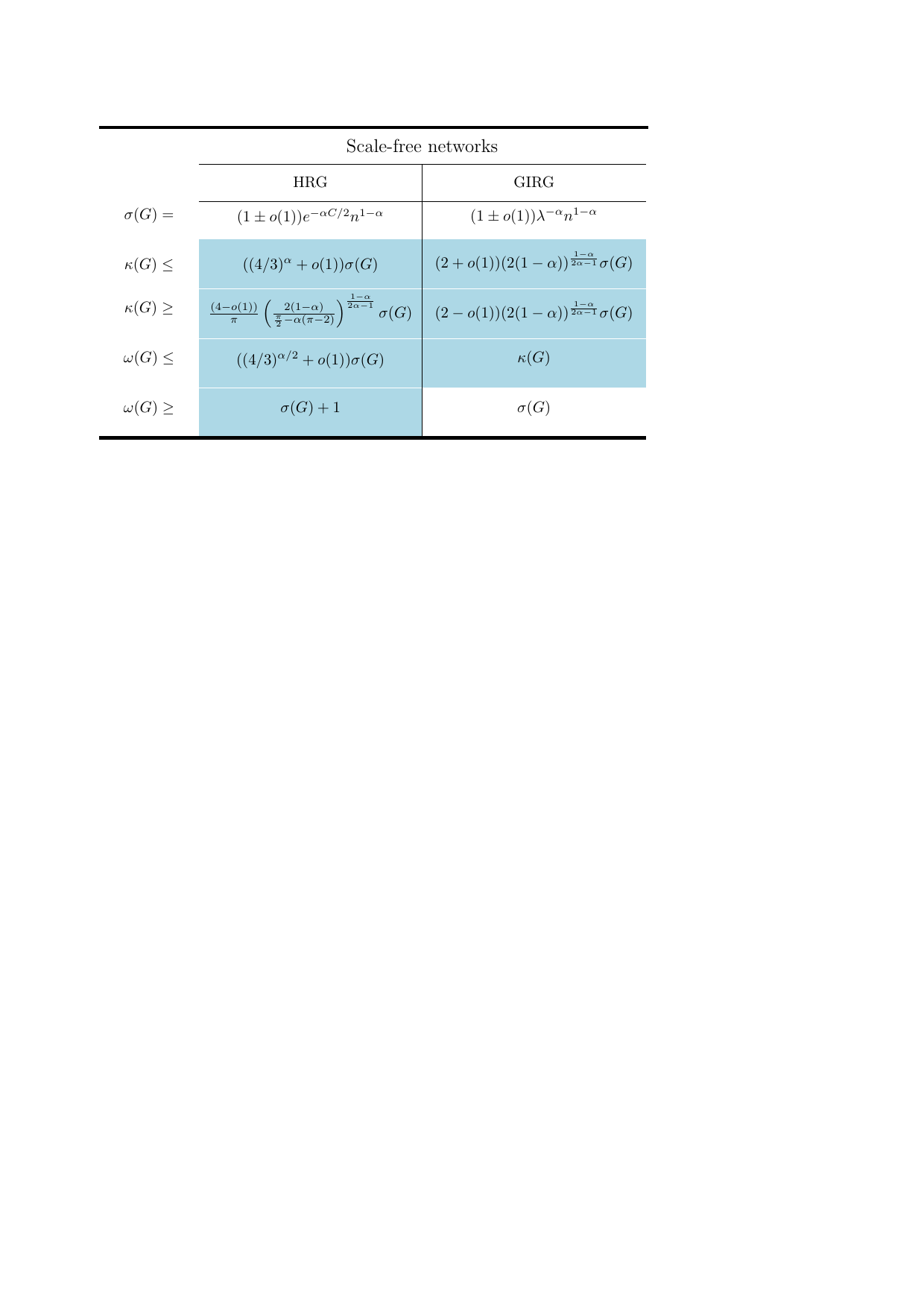}
         \label{fig:table}
     \end{subfigure}
     \hfill
     \begin{subfigure}[b]{0.44\textwidth}
         \centering
         \includegraphics[width=\textwidth]{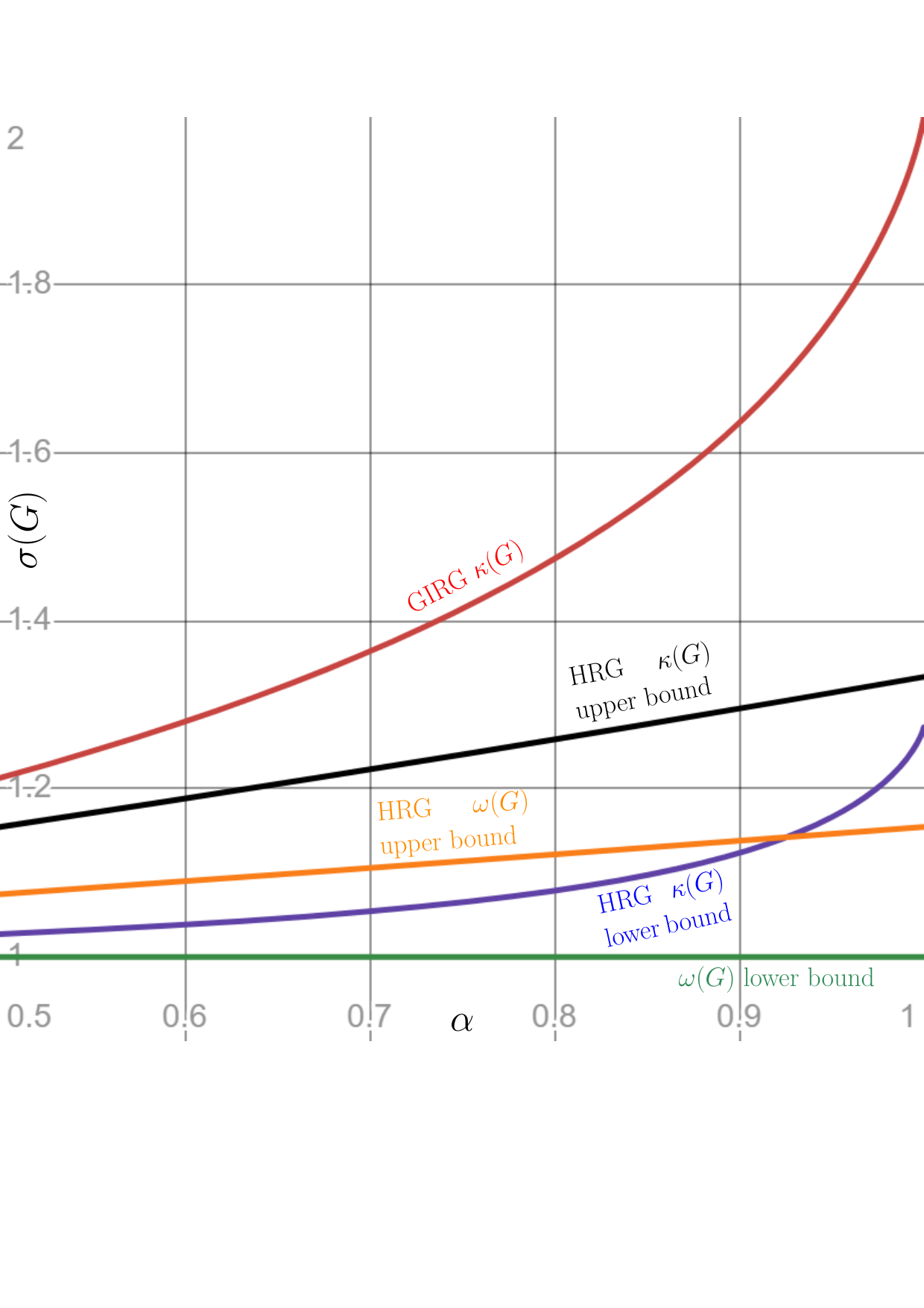}
         \label{fig:plot}
     \end{subfigure}
        \caption{Results on the degeneracy $\degen$ and the size of the largest clique \clique in hyperbolic random graphs (HRG) and geometric inhomogeneous random graphs (GIRG). The bounds hold \whp and are stated in comparison to the core size \coresize. Each curve represents the multiplicative factor in front of $\coresize$ for $\degen$ and $\clique$ (y-axis) depending on the parameter $\alpha \in (1/2, 1)$ (x-axis). Prior work is listed with white background, whereas our results are listed with blue background.}
        \label{fig:plots}
\end{figure}

\subsection{Discussion of our Results and Techniques}\label{sec:summary}
HRGs have a power-law degree distribution \cite{gpp-hrg-12, Papadopoulos2010}, that is, the probability that a vertex has degree $k$ is given by $\sim k^{-(2\alpha + 1)}$ . The model parameter $\alpha\in (1/2,1)$ controls the power-law exponent and all our results, particularly the size of the aforementioned constant factor gap depends on the choice of $\alpha$. For the ease of presentation this overview largely omits this dependence, but the summary of our results in \Cref{fig:plots} plots it in detail. 

\smallskip

\noindent\textbf{Upper bound on degeneracy (\Cref{the:degeneracy-upper}).} One consequence of generating a graph in hyperbolic space is that vertices tend to have fewer neighbours with increasing radius, i.e., the expected number of neighbours of a vertex decreases with the distance from the centre of the hyperbolic disc. This produces the power-law degree distribution that is valuable in modelling real-world networks. It also leads to a simple approach for upper bounding degeneracy: instead of removing vertices ordered by (increasing) degree, we remove them by (decreasing) radius. 
If $k$ is such that each vertex has at most $k$ neighbours of smaller radius, then $k$ is an upper bound on the degeneracy.

The notion governing this approach is the \emph{\inner} of a vertex, see \cref{fig:inner-neighbourhood}. The \inner of a vertex $u$ with radius $r$, denoted $\innerdeg(u)$, is the set of vertices of distance at most $R$ from $u$, i.e., they are neighbours of $u$, and with radius at most $r$, i.e., they are closer to the centre of the disc than $u$. The size $|\innerdeg(u)|$ of the \inner is called the \emph{\indeg} of $u$. The expected value of $|\innerdeg(u)|$ scales with the area of $u$'s \emph{inner-ball}  $\mathcal I(r) = \B_u(R)\cap \B_0(r)$. More precisely, 
$\E{|\innerdeg(u)|}=(n-1)\cdot\mu(\mathcal I(r))$. See \cref{fig:inner-neighbourhood}~a for a visual representation. The vertex of largest \indeg is denoted $\ustar$; the choice of $\ustar$ and the value of $|\innerdeg(\ustar)|$ depend on the random distribution of the vertices.

\begin{figure}[t]
    \centering \includegraphics[height=0.24\textheight]{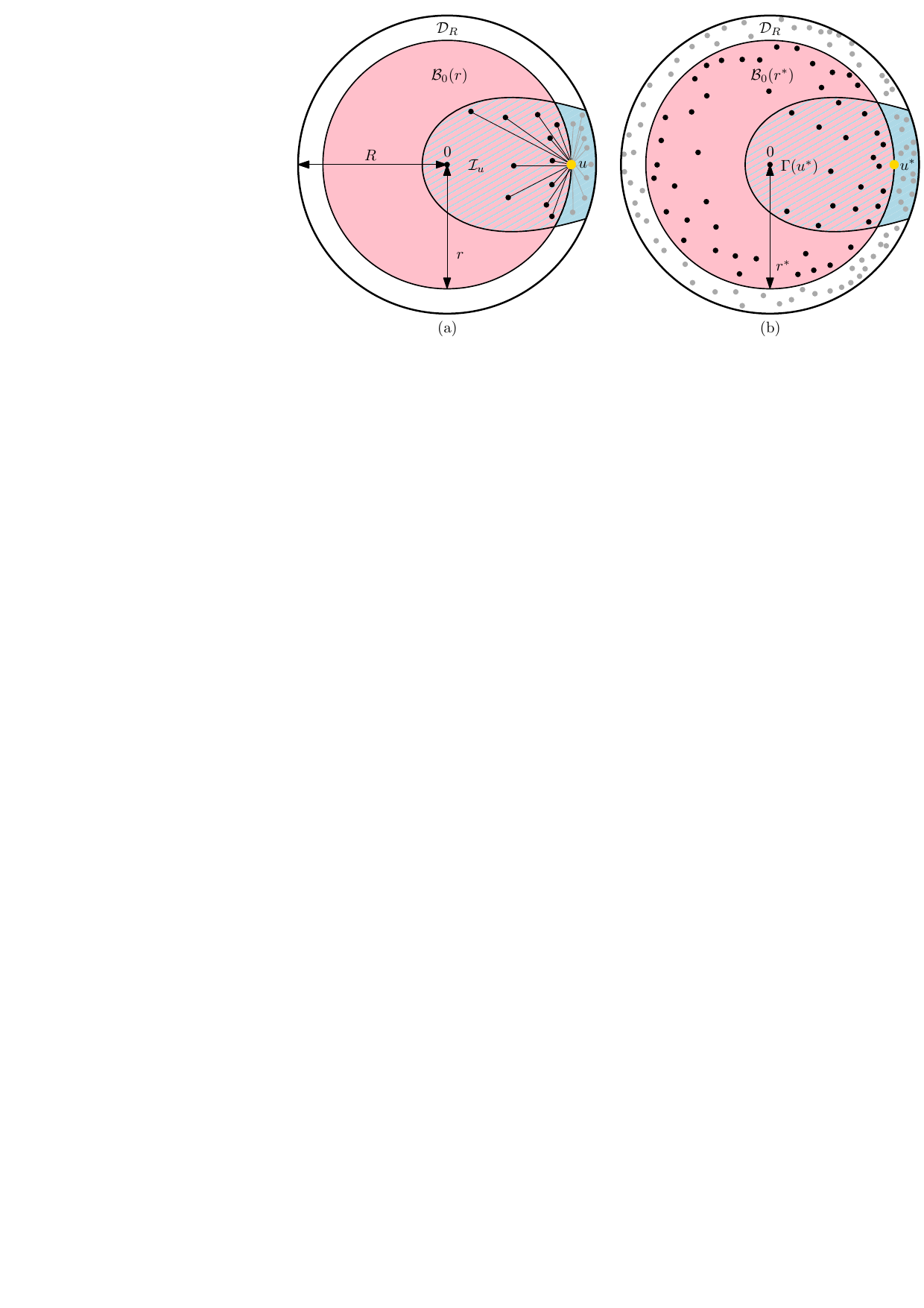}
    \caption{Illustration of the inner-neighbourhood. (a) The pink area is the ball $\mathcal{B}_0(r)$. The hatched area $\mathcal{I}_u = \mathcal{B}_u(R) \cap \mathcal{B}_0(r)$ is the inner-ball of $u$. The vertices $V \cap \mathcal{I}_u$ form the \inner $\innerdeg(u)$.
    (b) Sketch of proof of \cref{lem:deg-lower-inner}. Vertex $u^*$ is the vertex with the largest inner-degree. Each vertex $v \in U = V \cap \mathcal{B}_0(r^*)$ has at least nearly as many neighbours within $\mathcal{B}_0(r^*)$ as $|\innerdeg(\ustar)|$ (\whp)}
    \label{fig:inner-neighbourhood}
\end{figure}

If vertices are removed from outside to inside, then each vertex at the time of its removal will have degree less or equal to the \indeg of $\ustar$.
To bound the degeneracy, we derive a probabilistic upper bound for $|\innerdeg(\ustar)|$. We do this by finding the radius that maximises $\mu(\mathcal I(r))$, which upper bounds $\E{|\innerdeg(u)|}$ (for every vertex $u$). 
Since $|\innerdeg(u)|$ is concentrated we can apply a Chernoff and a union bound to obtain a high probability upper bound for $|\innerdeg(\ustar)|$.  Despite the simplicity of the \inner, we elaborate on this key concept as it is not only crucial to our upper bound on degeneracy but also for most of our other results discussed later on.

\smallskip
\noindent\textbf{Lower bound on degeneracy (\Cref{the:degeneracy-upper}).} In \cref{lem:deg-lower-inner}  we show that the maximum \indeg also produces a lower bound on the degeneracy \degen, in the sense that $\degen\geq  (1-o(1))|\innerdeg(u^*)|$ \whp This yields asymptotically tight bounds on \degen. We prove the lower bound of \degen by considering the subgraph $G'$ induced by the vertices that have smaller radius than \ustar (see \cref{fig:inner-neighbourhood}~b). Note that $G'$ contains vertices that are not neighbours of $u^*$. We show that every vertex $u$ in $G'$ has, \whp, at least $(1-o(1))|\innerdeg(u^*)|$ neighbours in $G'$; this statement uses the choice of $u^*$ and is not true if $u$ was an arbitrary vertex. Now, in any ordering of the vertices of $G'$, 
the first vertex has at least $(1-o(1))|\innerdeg(u^*)|$ neighbours of greater index, implying  $(1-o(1))|\innerdeg(u^*)|\le\kappa(G')\le\degen$. 
While this provides a lower bound that asymptotically matches our upper bound, a lower order gap remains.

\smallskip
\noindent\textbf{Gap between clique number and degeneracy (\Cref{the:clique-deg-gap}).} The most immediate lower bound for degeneracy is the clique number \cite{Barenboim2013}, because in every ordering of the vertices of $G$, the vertex of the clique with the lowest index has at least $\clique-1$ neighbours of higher index.
We prove that the lower bounds on the degeneracy that are derived from the clique number are strictly worse than the bounds discussed above obtained via analysing the inner-degree. Our approach to show this is the following: we take an arbitrary clique $K$ and the vertex $u$ with the largest radius in $K$. Let $G''$ be the subgraph induced by $\innerdeg(u)$. Next, we partition $G''$ intro three sets of vertices, each containing at least a constant fraction of the vertices of $G''$ \whp See \Cref{fig:separator}~b for an illustration of this partition. Lastly, we use purely geometric arguments to show that $K$ cannot contain vertices from all three sets. The gap then follows as the left out set contains a constant fraction of $u$'s inner neighbourhood. 
This gap between \clique and \degen makes approaches for computing the clique number via degeneracy, like that of Walteros and Buchanan \cite{Walteros2020}, unsuitable for HRGs.

\smallskip
\noindent\textbf{Clique number and the core (\Cref{the:clique-upper-bound}).}
The upper bound on the clique number is proven via a geometric approach. 
Let $u, v, w$ be any three vertices. If the vertices are far apart they cannot be contained in a clique. Otherwise their pairwise distance is at most $R$. Using the hyperbolic of version Jung's theorem \cite{Jung1901,Dekster1995,Dekster1997} implies that they are contained in a ball $\B$ of small radius and we show that $\B$ also has a small area. Hence the expected number of vertices in $\B$ is small as well. As this expectation is well concentrated whenever the area is significantly large, this bound holds with extremely high probability when introducing lower order deviations from the expectation.  Via a union bound over all possible $\binom{n}{3}$ triples of vertices we rule out that any clique contained in such a covering ball is large. The main claim now follows because any clique has to be contained in one of these coverings balls. 
The question of whether $\clique/\coresize\to 1$ as $n\to\infty$ remains open.

\section{Preliminaries}\label{sec:preliminaries}

\subparagraph{Hyperbolic Random Graphs.}
We follow the formalisation of hyperbolic random graphs introduced in \cite{Papadopoulos2010}, which is known as the \emph{native representation}.
We denote by $\bH = [0,\infty)\times[0,2\pi)$ the hyperbolic plane in the polar coordinate system, where a point $x\in \bH$ is parametrised by a radius $r(x)$ and an angle $\varphi(x)$.
We equip \bH with a metric $\dist(x,y)$ characterised by 
\begin{align}\label{eq:hyperbolic_distance}
    \cosh(\dist(x,y)) = \cosh(r(x))\cosh(r(y)) - \sinh(r(y))\sinh(r(x))\cos(\varphi(x) - \varphi(y)).
\end{align}
This metric is what gives \bH a hyperbolic geometry, of curvature -1, as opposed to the Euclidean metric. We equip \bH with the topology induced by $\dist$.

The geometric space of most importance in this work is the bounded disk in \bH defined by $\disk = [0,R]\times[0,2\pi)$, where $R = 2\log(n) + C$ with $C \in \Theta(1)$.
We refer to point $(0,0)$ as the \emph{centre of this disk}. 
The space \disk inherits the topology of \bH, and from now on we shall only consider subsets of this space -- thus, for example, a ball around a point $x\in\disk$ is defined by the set $\B_x(\eps) = \{y\in\disk : \dist(x,y)\leq\eps\}\subseteq \disk$.

We now introduce a probability measure $\mu$ on \disk, which is parametrised by the model parameter $\alpha\in(1/2,1)$, and was first defined by Papadopoulos et.~al.~\cite{Papadopoulos2010}.
For measurable $\mathcal S\subseteq \disk$, define
\begin{align*}
	\mu(\mathcal S) = \int_S \rho(x) \dd x, 
    \qquad \rho(x) = \frac{\alpha\sinh(\alpha x)}{2\pi(\cosh(\alpha R) - 1)}, 
\end{align*}
where $\rho$ is the density of $\mu$ with respect to the Lebesgue measure on \disk. This measure differs from the uniform probability measure on \disk in that it puts more mass at the centre of the disk; both measures coincide at $\alpha=1$. The benefit of $\mu$ lies in the properties it induces in our central object of study, the hyperbolic random graph.

\subparagraph{Threshold hyperbolic random graph (HRG).} A \emph{(threshold) hyperbolic random graph} or \emph{HRG} is a pair $G = (V,E)$ defined by the following procedure. First, $n$ vertices are sampled independently at random in \disk according to $\mu$. Then any two vertices $u,v\in V$ are connected by an edge if and only if their distance $\dist(u,v)$ is at most $R$. We write $G \sim \G$ to denote a graph generated in this way. A vertex $u \in V$ is identified by its point coordinates in $\bH$ and we write $V \cap \mathcal{A}$ to denote the set of vertices that are located in an area $\mathcal{A} \subseteq \disk$.

The use of $\mu$ to distribute vertices in \disk has the effect of giving $G$ a power-law degree distribution, as was shown in \cite{gpp-hrg-12, Papadopoulos2010}. It is sometimes convenient to characterise connection of vertices in terms of their \emph{angular distance}, and to that end we define
\begin{align*}
    \theta_R(r_1,r_2) = \arccos\left(\frac{\cosh(r_1)\cosh(r_2) - \cosh(R)}
    {\sinh(r_1\sinh(r_2)}\right),
\end{align*}
which per \eqref{eq:hyperbolic_distance} yields the following observation.
\begin{observation}
Two vertices $u$ and $v$ are connected if and only if their angular distance is less than $\theta_R(r(u),r(v))$.
\end{observation}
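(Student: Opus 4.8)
The plan is to derive the observation directly from the connection rule $\dist(u,v)\le R$ by unwinding the distance formula~\eqref{eq:hyperbolic_distance}. Write $r_1 = r(u)$, $r_2 = r(v)$ and let $\Delta\varphi = |\varphi(u)-\varphi(v)|\in[0,\pi]$ be the angular distance. Since $\cosh$ is strictly increasing on $[0,\infty)$, the edge condition is equivalent to $\cosh(\dist(u,v))\le\cosh(R)$, and substituting~\eqref{eq:hyperbolic_distance} gives the equivalent inequality
\begin{align*}
\cosh(r_1)\cosh(r_2) - \sinh(r_1)\sinh(r_2)\cos(\Delta\varphi) \le \cosh(R).
\end{align*}

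First I would isolate the cosine term. Assuming $r_1,r_2>0$, so that $\sinh(r_1)\sinh(r_2)>0$, dividing through and rearranging yields the equivalent statement
\begin{align*}
\cos(\Delta\varphi) \ge \frac{\cosh(r_1)\cosh(r_2) - \cosh(R)}{\sinh(r_1)\sinh(r_2)} = \cos\!\big(\theta_R(r_1,r_2)\big),
\end{align*}
where the final equality is just the definition of $\theta_R$. Since $\cos$ is strictly decreasing on $[0,\pi]$ and both $\Delta\varphi$ and $\theta_R(r_1,r_2)$ lie in this interval, applying the (decreasing) arccosine to both sides reverses the inequality and gives $\Delta\varphi \le \theta_R(r_1,r_2)$, which is the claimed equivalence.

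There is no substantial obstacle here, since the statement is a direct algebraic consequence of the metric; the only care needed concerns boundary behaviour. The fraction above lies in $[-1,1]$, so that $\theta_R$ is well defined via the arccosine, exactly when $|r_1-r_2|\le R\le r_1+r_2$; these thresholds come from the identities $\cosh(r_1\mp r_2)=\cosh(r_1)\cosh(r_2)\mp\sinh(r_1)\sinh(r_2)$, which encode the minimum and maximum possible distance between the two points over all $\Delta\varphi$. When the fraction exceeds $1$ (i.e.\ $|r_1-r_2|>R$) the points are too far apart to ever be connected, and setting $\theta_R=0$ keeps the strict version of the equivalence valid; when it falls below $-1$ (i.e.\ $r_1+r_2<R$) they are always connected and one takes $\theta_R=\pi$. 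Finally, the gap between the strict inequality in the statement and the non-strict edge rule $\dist(u,v)\le R$, as well as the measure-zero cases $r_1=0$ or $r_2=0$, are immaterial for the random graph: the coordinates are drawn from the continuous measure $\mu$, so the events $\Delta\varphi=\theta_R(r_1,r_2)$ and $r_i=0$ each occur with probability zero.
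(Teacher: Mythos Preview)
Your derivation is correct and matches the paper's reasoning: the observation is stated immediately after the definition of $\theta_R$ with the phrase ``which per~\eqref{eq:hyperbolic_distance} yields the following observation,'' and no further proof is given. Your argument is exactly the intended one-line unwinding of the distance formula via the monotonicity of $\cosh$ and of $\cos$ on $[0,\pi]$; the additional care you take with the boundary cases and the strict-versus-nonstrict inequality goes beyond what the paper spells out but is entirely in keeping with its implicit reasoning.
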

We also make use of the following expression of the distribution of the radius of a vertex $u$.
\begin{align}\label{eq:measure-of-origin-ball}
    \Prob{r(u) \leq r} = \mu\left(\mathcal{B}_0(r)\right) = \int_0^r\int_{-\pi}^{\pi} \rho(x) \dd\theta \dd x = \frac{\cosh(\alpha r) - 1}{\cosh(\alpha R) - 1} = (1-o(1))e^{-\alpha(R-r)}
\end{align}

We briefly note that a variant of HRGs exists in which vertices are not connected purely according to whether their distance is below a threshold, but rather with probability $p(u,v) = (1 + \exp\left(\frac{1}{2T} \left(\dist(u, v) - R\right)\right))^{-1}$ determined by both distance and a ``temperature'' parameter $T$ (see e.g. \cite[§3.1]{Krohmer2016}).

\smallskip
\noindent\textbf{Degeneracy, clique number, chromatic number and core.}
For a graph $G=(V,E)$, the \emph{degeneracy} $\kappa(G)$ is the minimum integer
$k$ for which there exists an ordering of the vertex set of $G$, $V = (v_1, v_2, \cdots, v_n)$, such that for every index
$i \in [n-1]$, $v_i$ has at most $k$ neighbours with greater index. 
The \emph{clique number} $\omega(G)$ is the size of the largest clique of $G$. The \emph{chromatic number} $\chi(G)$ is the smallest number of colours required so that a conflict-free vertex colouring is possible for $G$. 
The \emph{core} of a hyperbolic random graph is the set of vertices with radius at most $R/2$ and we denote its size by $\coresize$. Since for any points $u,v \in \mathcal{B}_0(R/2)$ the distance is at most $\dist(u,v) \leq R$, the core forms a clique. Finally, since the core is a clique, any vertex of a clique needs a different colour in a conflict-free colouring, and $\chrom \leq \degen + 1$ (see e.g. \cite[Lemma 4]{Matula1983}), we have the following chain of inequalities.
\begin{observation}\label{lem:lower-clique-number}
Let $G \sim \mathcal{G}(n, \alpha, C)$ be a threshold HRG. Then, $$\coresize\leq \omega(G)\leq\chi(G)\leq\kappa(G)+1.$$
\end{observation}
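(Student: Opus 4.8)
The plan is to verify the three inequalities in the chain independently, as each is either an immediate consequence of the definitions recalled in the preliminaries or a standard fact about proper colourings; the statement is essentially a bookkeeping exercise assembling these facts. First I would establish the leftmost inequality $\coresize \leq \clique$ by showing that the core is a clique. Let $u$ and $v$ be any two core vertices, so that $r(u), r(v) \leq R/2$. Since $\dist$ is a metric on \disk it obeys the triangle inequality, and specialising the distance formula~\eqref{eq:hyperbolic_distance} to the centre $(0,0)$ gives $\dist(x,(0,0)) = r(x)$ for every $x \in \disk$. Hence $\dist(u,v) \leq \dist(u,(0,0)) + \dist((0,0),v) = r(u) + r(v) \leq R$, so $u$ and $v$ are adjacent. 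As this holds for every pair, the core vertices induce a clique of size \coresize, whence $\coresize \leq \clique$.

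For the middle inequality $\clique \leq \chrom$, it suffices to note that in any proper colouring the vertices of a maximum clique must receive pairwise distinct colours, so at least $\clique$ colours are used. For the rightmost inequality $\chrom \leq \degen + 1$, I would use the greedy colouring induced by a degeneracy ordering. Fix an ordering $v_1, \dots, v_n$ witnessing the degeneracy, so that each $v_i$ has at most $\degen$ neighbours of larger index, and colour the vertices in reverse order $v_n, v_{n-1}, \dots, v_1$ using a palette of $\degen + 1$ colours. When $v_i$ is reached, its already-coloured neighbours form a subset of its neighbours of larger index and hence number at most $\degen$, so at least one colour in the palette is free; this yields a proper colouring with $\degen + 1$ colours.

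I do not expect any genuine obstacle in this proof: the only point requiring verification is the distance bound $\dist(u,v) \leq R$ for the core, and this follows at once from the triangle inequality together with $\dist(x,(0,0)) = r(x)$. All three steps are classical, and the references cited in the statement already record the colouring inequalities (for instance \cite[Lemma~4]{Matula1983} for $\chrom \leq \degen + 1$).
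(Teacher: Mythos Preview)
Your proposal is correct and follows exactly the paper's own reasoning: the core is a clique by the triangle inequality, a clique forces pairwise distinct colours so $\clique\leq\chrom$, and $\chrom\leq\degen+1$ is the standard greedy bound (the paper simply cites \cite[Lemma~4]{Matula1983}). There is nothing to add.
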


\smallskip
\noindent\textbf{Concentration bounds.}
We use the following Chernoff bounds~\cite[Theorem 4.4]{mu-pc-05}.

\begin{theorem}[Chernoff bound]\label{lem:Chernoff}
For $i \in [k]$, let $X_i \in \{0,1\}$ be independent random variables and $X = \sum_i X_i$. Then for $\eps \in (0,1)$,
\begin{align*}
    \Prob{X\geq (1+\eps)\mathbb{E}[X]} \leq e^{-\eps^2 \cdot \mathbb{E}[X]/3} \text{ and }
    \Prob{X\leq (1-\eps)\mathbb{E}[X]} \leq e^{-\eps^2 \cdot \mathbb{E}[X] /2}.
\end{align*}
\end{theorem}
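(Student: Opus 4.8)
The plan is to prove both tail bounds by the standard exponential-moment (Chernoff) method: apply Markov's inequality to $e^{tX}$ and optimise over the free parameter $t$. The structural fact I would exploit throughout is that independence factorises the moment generating function, $\E{e^{tX}} = \prod_i \E{e^{tX_i}}$.

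First I would treat the upper tail. Write $p_i = \Prob{X_i = 1}$ and $m = \E{X} = \sum_i p_i$. For any $t > 0$, Markov's inequality applied to the nonnegative variable $e^{tX}$ gives
\begin{align*}
\Prob{X \geq (1+\eps)m} = \Prob{e^{tX} \geq e^{t(1+\eps)m}} \leq e^{-t(1+\eps)m}\,\E{e^{tX}}.
\end{align*}
For each Bernoulli factor I would bound $\E{e^{tX_i}} = 1 + p_i(e^t - 1) \leq \exp\!\bigl(p_i(e^t - 1)\bigr)$ using $1 + x \leq e^x$, so that $\E{e^{tX}} \leq \exp\!\bigl(m(e^t - 1)\bigr)$. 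Substituting and choosing the minimiser $t = \ln(1+\eps)$ yields
\begin{align*}
\Prob{X \geq (1+\eps)m} \leq \left(\frac{e^{\eps}}{(1+\eps)^{1+\eps}}\right)^{\!m}.
\end{align*}
The lower tail is symmetric: I would apply Markov to $e^{-sX}$ with $s > 0$, obtain the bound $\exp\!\bigl(m(e^{-s} - 1) + s(1-\eps)m\bigr)$, and optimise at $s = -\ln(1-\eps)$ to reach $\bigl(e^{-\eps}/(1-\eps)^{1-\eps}\bigr)^{m}$.

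The remaining, and only genuinely analytic, step is to turn these sharp-but-unwieldy expressions into the clean forms $e^{-\eps^2 m/3}$ and $e^{-\eps^2 m/2}$ for $\eps \in (0,1)$. After taking logarithms and dividing by $m$, this reduces to the two scalar inequalities $(1+\eps)\ln(1+\eps) - \eps \geq \eps^2/3$ and $(1-\eps)\ln(1-\eps) + \eps \geq \eps^2/2$ on the interval $(0,1)$. I expect this to be the main (though still routine) obstacle, because the two cases behave differently. The lower-tail inequality is immediate from the power series $(1-\eps)\ln(1-\eps) + \eps = \sum_{k \geq 2} \eps^k/(k(k-1))$, whose every term is nonnegative and whose leading term is exactly $\eps^2/2$. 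The upper-tail inequality is more delicate, since the analogous series $\sum_{k \geq 2} (-1)^k \eps^k/(k(k-1))$ is alternating and its $\eps^2/2$ leading term is eroded by the negative $\eps^3$ term; this is precisely why the weaker constant $1/3$ appears. Rather than estimate the alternating tail, I would set $f(\eps) = (1+\eps)\ln(1+\eps) - \eps - \eps^2/3$ and check $f \geq 0$ by differentiation: $f(0) = f'(0) = 0$ and $f''(\eps) = (1 - 2\eps)/(3(1+\eps))$, so $f'$ increases on $(0,1/2)$ and decreases on $(1/2,1)$ while remaining positive since $f'(1) = \ln 2 - 2/3 > 0$; hence $f$ is increasing on $(0,1)$ and thus nonnegative. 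Raising the per-coordinate bounds to the power $m$ then gives the two stated inequalities.
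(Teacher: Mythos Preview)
Your proof is correct and complete: the exponential-moment argument, the factorisation via independence, the optimisation in $t$, and the two scalar inequalities are all handled properly. The derivative analysis for the upper-tail constant $1/3$ is accurate; in particular $f'(0)=0$, $f'(1)=\ln 2 - 2/3 > 0$, and the unimodality of $f'$ on $[0,1]$ (from the sign change of $f''$ at $\eps=1/2$) together force $f'\ge 0$, hence $f\ge 0$.

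However, the paper does not prove this theorem at all: it simply cites it as \cite[Theorem~4.4]{mu-pc-05} (Mitzenmacher--Upfal) and uses it as a black box throughout. So there is no ``paper's own proof'' to compare against. Your write-up is the standard textbook derivation and would serve perfectly well as a self-contained proof of the cited result.
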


\section{Degeneracy of Hyperbolic Random Graphs}\label{sec:inner-neighbourhood}

A tool we make use of several times is the \emph{inner-ball} of a point $x \in \disk$, defined by $\mathcal{I}_x=\B_x(R)\cap\B_0(r(x))$. 
The inner-ball of a vertex is the inner-ball of the point using the vertex' coordinates. 
The \emph{\inner} of a vertex $u$ is the set of vertices (excluding $u$) contained in its inner-ball, that is, the neighbours of $u$ that have a smaller radius than $u$ (see \cref{fig:inner-neighbourhood}~a), and is denoted $\innerdeg(u)$. 

We upper bound the degeneracy \degen via the \inner by using the following informal process. Consider a graph $G$ and order its vertices $(v_1, v_2, \cdots, v_n)$ by decreasing radius, so $r(v_i) \geq r(v_{i+1})$, and iteratively remove vertices from $G$ one-by-one, from lower to higher index.
Note that the set of neighbours of $v_i$ that have greater index than $i$ coincide with $\innerdeg(v_i)$. This implies that the degree of each vertex $v_i$ at the time of its removal is $|\innerdeg(v_i)|$. Let $u^*$ be the vertex $v_k$ that maximises $|\innerdeg(v_k)|$. As the largest degree of a vertex at the time of its removal is given by $|\innerdeg(\ustar)|$, we get the following upper bound for the degeneracy.

\begin{observation}\label{obs:deg-upper-inner}
Let $G\sim\mathcal{G}(n, \alpha, C)$ be a threshold HRG and let $u^*$ be the vertex of $G$ with the largest inner-degree in $G$. Then $\kappa(G) \leq |\innerdeg(u^*)|$.  
\end{observation}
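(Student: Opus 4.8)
The plan is to certify the bound by exhibiting a single explicit vertex ordering rather than reasoning about the degeneracy-optimal one. Since $\degen$ is defined as the \emph{minimum}, over all vertex orderings, of the largest back-degree (the number of greater-indexed neighbours of a vertex), it suffices to produce one ordering in which every vertex has at most $|\innerdeg(\ustar)|$ greater-indexed neighbours.

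First I would order the vertices by non-increasing radius, writing $V = (v_1, \dots, v_n)$ with $r(v_1) \geq \cdots \geq r(v_n)$ and breaking any ties arbitrarily. The essential step is then to identify the greater-indexed neighbours of a fixed vertex $v_i$ with (a subset of) its \inner. Indeed, if $v_j$ is a neighbour of $v_i$ with $j > i$, then $\dist(v_i, v_j) \leq R$ and, by the ordering, $r(v_j) \leq r(v_i)$; hence $v_j \in \B_{v_i}(R) \cap \B_0(r(v_i)) = \mathcal{I}_{v_i}$, and since $v_j \neq v_i$ this places $v_j \in \innerdeg(v_i)$. Consequently the set of greater-indexed neighbours of $v_i$ is contained in $\innerdeg(v_i)$, so $v_i$ has at most $|\innerdeg(v_i)|$ of them.

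Taking the maximum over $i$, the largest back-degree of this ordering is at most $\max_i |\innerdeg(v_i)| = |\innerdeg(\ustar)|$ by the choice of $\ustar$ as the vertex of maximum \indeg. This ordering therefore witnesses $\degen \leq |\innerdeg(\ustar)|$, which is the claim.

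I do not expect any genuine obstacle here: the entire content is the observation that ``appearing later in a non-increasing-radius ordering'' coincides with ``having smaller radius,'' which is precisely the defining feature of the \inner. The only point meriting a word of care is coincident radii, but these occur with probability zero under the absolutely continuous measure $\mu$; and in any event the inner-ball $\mathcal{I}_{v_i}$ is defined with a non-strict radius constraint via $\B_0(r(v_i))$, so the subset inclusion above holds verbatim regardless of how ties are broken.
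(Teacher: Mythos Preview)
Your argument is correct and is essentially the same as the paper's: both order the vertices by decreasing radius and observe that the greater-indexed neighbours of any $v_i$ are precisely (or at least contained in) its \inner $\innerdeg(v_i)$, so this ordering witnesses $\degen\le|\innerdeg(\ustar)|$. Your handling of ties is slightly more careful than the paper's informal discussion, but the content is identical.
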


We will now show that the largest \indeg does not only yield an immediate upper bound on the degeneracy, but also a lower bound. Informally, this lower bound follows from the following argument. Order the vertices of the graph in descending order of their radius, that is, $(v_1,v_2,\ldots,v_n)$ such that $r(v_i) \geq r(v_{i+1})$. For $i \in [n]$, let $G_i=G\setminus\{v_1,v_2,\ldots, v_i\}$.
Let $v_k$ be the node with the maximum inner neighbourhood in $G$. We show (with a probabilistic guarantee) that the graph $G_k$ has minimum degree $(1-o(1))|\innerdeg(v_k)|$. Before we make this formal, we introduce a slightly modified version of \cite[Lemma 3.3]{gpp-hrg-12}, that implies the following: the closer a vertex is to the origin, the more neighbours it has in expectation.  
This can be derived via the fact that the angle $\theta_R(r,x)$ is monotonically decreasing in $x$.  

\begin{corollary}\label{lem:ball-lower-bound}

    Let $r, s, t \in[0,R)$ with $s < t$. Then
    $\mu(\mathcal{B}_s(R) \cap \mathcal{B}_0(r)) \geq \mu(\mathcal{B}_t(R) \cap \mathcal{B}_0(r))$.
\end{corollary}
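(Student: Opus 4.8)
The plan is to exploit the rotational invariance of $\mu$ to reduce the claim to a pointwise comparison of angular widths, and then to invoke the monotonicity of $\theta_R$ stated in the corollary. Since the density $\rho$ depends only on the radial coordinate, $\mu(\B_s(R)\cap\B_0(r))$ does not depend on the angle of the centre point, so I may place it at angle $0$. For a fixed radius $x\le r$, the points of radius $x$ lying in $\B_s(R)$ are exactly those whose angular distance to the centre is at most $\theta_R(s,x)$ (capped at $\pi$ when the whole circle of radius $x$ lies within distance $R$, and empty when the two radii are never within distance $R$). Writing the measure as a radial integral of this angular extent gives
\begin{align*}
\mu(\B_s(R)\cap\B_0(r)) = \int_0^r \frac{\alpha\sinh(\alpha x)}{\pi(\cosh(\alpha R)-1)}\,\min\{\theta_R(s,x),\pi\}\dd x,
\end{align*}
and the analogous expression holds with $s$ replaced by $t$.

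It therefore suffices to show $\theta_R(s,x)\ge\theta_R(t,x)$ for every fixed $x$, because then the integrands are ordered and the nonnegative weight $\alpha\sinh(\alpha x)/(\pi(\cosh(\alpha R)-1))$ preserves the inequality upon integration. As $\theta_R$ is symmetric in its two arguments, this is precisely the monotonicity noted in the statement. To verify it I would differentiate $\cos\theta_R(r_1,x) = (\cosh r_1\cosh x-\cosh R)/(\sinh r_1\sinh x)$ in $r_1$; a short computation yields
\begin{align*}
\frac{\partial}{\partial r_1}\cos\theta_R(r_1,x) = \frac{\cosh R\,\cosh r_1-\cosh x}{\sinh^2 r_1\,\sinh x}.
\end{align*}
Since $x<R$ and $\cosh r_1\ge 1$, the numerator is at least $\cosh R-\cosh x\ge 0$, so $\cos\theta_R(r_1,x)$ is nondecreasing in $r_1$, and hence $\theta_R(r_1,x)$ is nonincreasing (arccos being decreasing). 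For $s<t$ this gives $\theta_R(s,x)\ge\theta_R(t,x)$.

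Integrating the pointwise inequality $\min\{\theta_R(s,x),\pi\}\ge\min\{\theta_R(t,x),\pi\}$ over $x\in[0,r]$ then gives the claim. Alternatively, as this is phrased as a corollary, I could simply cite the monotonicity already recorded in \cite[Lemma 3.3]{gpp-hrg-12} instead of rederiving the derivative. I expect the only delicate point to be bookkeeping rather than mathematical depth: one must handle the capping at $\pi$ uniformly across both integrals and treat the regime where the arccos argument leaves $[-1,1]$ (so that the angular extent is $0$). These edge cases are consistent with the ordering — if the circle of radius $x$ is entirely within distance $R$ of the point at radius $t$, it is a fortiori within distance $R$ of the closer point at radius $s$, and similarly for the empty case — so they do not disturb the pointwise inequality; verifying the sign of the derivative cleanly while keeping the $\pi$-cap consistent is the main, and fairly modest, obstacle.
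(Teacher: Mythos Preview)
Your argument is correct and matches the paper's approach: the paper does not give a standalone proof but simply records this as a corollary of the monotonicity of $\theta_R(r,x)$ in its first argument (citing \cite[Lemma 3.3]{gpp-hrg-12}), which is exactly the fact you rederive via the derivative computation and then integrate against the radial density. Your explicit handling of the $\pi$-cap and the degenerate angular regimes is more careful than anything the paper spells out, but the underlying idea is identical.
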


\cref{lem:ball-lower-bound} tells us  that any vertex with radius at most $r$ has in expectation at least as many neighbours up to radius $r$, as the expected inner-degree of a vertex with radius exactly $r$.  In order to derive a high-probability bound on $|\innerdeg(\ustar)|$
it now suffices to show concentration around the expectation of all considered neighbourhoods.

Since the size of every clique $K$ is upper bounded by the \indeg of the vertex in $u \in K$ that has the largest radius, \clique is a lower bound for the maximum \indeg. We can now lower bound the degeneracy \degen based on the largest inner-degree. Note that, in contrast to the upper bound of \cref{obs:deg-upper-inner}, the lower bound is not deterministic.

\begin{lemma}\label{lem:deg-lower-inner}
Let $G \sim \mathcal{G}(n, \alpha, C)$ be a threshold HRG. Then $\degen \ge (1-o(1))|\innerdeg(u^*)|$ \whp
\end{lemma}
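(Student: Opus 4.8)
The plan is to realise the lower bound on $\degen$ by exhibiting a subgraph whose \emph{minimum} degree is large, since the degeneracy of any subgraph lower bounds the degeneracy of the whole graph, and the minimum degree lower bounds the degeneracy of that subgraph. Concretely, let $r^* = r(u^*)$ be the radius of the vertex of maximum \indeg, and let $U = V \cap \mathcal{B}_0(r^*)$ be the set of all vertices of radius at most $r^*$; let $G' = G[U]$ be the induced subgraph. The goal is to show that \whp every vertex $v \in U$ has at least $(1-o(1))|\innerdeg(u^*)|$ neighbours \emph{inside} $U$. Granting this, the first vertex in any vertex ordering of $G'$ has that many higher-indexed neighbours, so $\kappa(G') \ge (1-o(1))|\innerdeg(u^*)|$, and since $G'$ is a subgraph of $G$ we get $\degen \ge \kappa(G') \ge (1-o(1))|\innerdeg(u^*)|$ \whp, which is exactly the claim.

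For the key per-vertex degree bound, fix $v \in U$ with radius $r(v) \le r^*$. The neighbours of $v$ lying in $U$ are exactly the vertices in $\mathcal{B}_v(R) \cap \mathcal{B}_0(r^*)$. By \cref{lem:ball-lower-bound} (applied with $s = r(v)$, $t = r^*$, and target radius $r = r^*$), we have $\mu(\mathcal{B}_v(R) \cap \mathcal{B}_0(r^*)) \ge \mu(\mathcal{B}_{r^*}(R) \cap \mathcal{B}_0(r^*)) = \mu(\mathcal{I}_{u^*})$, so in expectation $v$ has at least $\E{|\innerdeg(u^*)|}$ neighbours in $U$. The count of neighbours of $v$ in $U$ is a sum of independent indicators over the other $n-1$ vertices (conditioning on the positions of $v$ and $u^*$), so I would apply the lower-tail Chernoff bound of \cref{lem:Chernoff} with a suitable $\eps = o(1)$ to deduce that, with probability $1 - n^{-\omega(1)}$, this count is at least $(1-\eps)(n-1)\mu(\mathcal{I}_{u^*})$. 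A union bound over the at most $n$ vertices $v \in U$ then yields the statement simultaneously for all of $U$ \whp.

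The main obstacle is reconciling the expectation-based bound $(n-1)\mu(\mathcal{I}_{u^*})$ with the \emph{realised} value $|\innerdeg(u^*)|$ that appears in the statement, since $u^*$ and its inner-degree are random and the two need not coincide deterministically. The clean route is to argue that $|\innerdeg(u^*)|$ is itself concentrated: because $u^*$ maximises the \indeg, $|\innerdeg(u^*)| \le (1+o(1))\max_u (n-1)\mu(\mathcal{I}_{r(u)})$ \whp by the upper-bound analysis already sketched for \cref{obs:deg-upper-inner}, while the lower-tail Chernoff bound gives every vertex's neighbour count in $U$ at least $(1-o(1))(n-1)\mu(\mathcal{I}_{u^*})$. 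I must therefore ensure the $(n-1)\mu$ quantities in the upper and lower bounds are the same (both anchored at radius $r^*$), so the $o(1)$ factors combine to the claimed $(1-o(1))$. A subtlety to handle carefully is that $u^*$, being selected as the argmax, is not independent of the point configuration; I would circumvent this by proving the concentration statement uniformly over \emph{all} candidate radii (or all vertices) via the union bound, so that the event holds for whichever vertex turns out to be $u^*$, rather than conditioning on a fixed $u^*$.
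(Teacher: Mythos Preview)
Your proposal is correct and follows essentially the same strategy as the paper: exhibit the induced subgraph on vertices of small radius, use \cref{lem:ball-lower-bound} to lower-bound each vertex's expected degree in that subgraph by the expected inner-degree at the cutoff radius, apply Chernoff plus a union bound, and then reconcile the expectation with the realised $|\innerdeg(u^*)|$ via concentration.

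The one noteworthy difference is the choice of cutoff radius. You take the \emph{random} radius $r^* = r(u^*)$, which forces you to confront the conditioning issue you flag (since $u^*$ is configuration-dependent) and to resolve it by proving concentration uniformly over all vertices. The paper instead fixes the \emph{deterministic} radius $r$ that maximises $r \mapsto (n-1)\mu(\mathcal I(r))$; with this choice the subgraph $H$ is defined independently of the point process, the Chernoff bound applies without any delicate conditioning, and one directly obtains $\degen \ge (1-\eps)\gamma$ for $\gamma = (n-1)\max_r \mu(\mathcal I(r))$. The link to $|\innerdeg(u^*)|$ then only needs the (easy) upper-tail fact that $|\innerdeg(u^*)| \le (1+o(1))\gamma$ \whp, which is exactly the reconciliation step you describe in your final paragraph. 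So your workaround is valid, but the paper's deterministic cutoff buys a cleaner argument.
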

\begin{proof}
For any subgraph $H\subseteq G$, it is clear that $\degen \ge \min_{v\in V(H)} \mathrm{deg}|_H (v)$, where $\mathrm{deg}|_H (v) = \sum_{w\in V \setminus \{v\}}\indicator\{\{v,w\} \in E(H)\}$. We let $H$ be the (random) subgraph of $G$ created by restricting $G$ to vertices that land in $\B_0(r)$, that is, that have radius at most $r$, and keeping the same edges. Then for any $v\in V$, 
\begin{align*}
	\Exp[\mathrm{deg}|_H (v) | r(v)]
	&=\sum_{w\in V \setminus \{v\}} \P(w \in \B_0(r)\cap \B_{r(v)}(R) \given r(v))\indicator\{r(v)\le r\}\\
	&= (n-1) \mu(\B_0(r)\cap \B_{r(v)}(R))\indicator\{r(v)\le r\}\\
    &\ge (n-1) \mu(\B_0(r)\cap \B_r(R))\indicator\{r(v)\le r\}. \tag{\cref{lem:ball-lower-bound}}
\end{align*}

We write $\gamma \coloneqq (n-1) \mu(\B_0(r)\cap \B_r(R))$.
Since $\mathrm{deg}|_H (v)$ is a sum of Bernoulli random variables that are all independent under $\P( \,\cdot\, |\, r(v))$, a Chernoff bound (\cref{lem:Chernoff}) gives 
\begin{align*}
	\P\Big(\min_{v\in V(H)} \mathrm{deg}|_H (v) &< (1-\varepsilon)\gamma\Big)
	= \P\Big(\bigcup_{v\in V} \{\mathrm{deg}|_H (v) < (1-\varepsilon)\gamma, r(v) \le r\}\Big)\\
	&\le n \P(\mathrm{deg}|_H (v) < (1-\varepsilon)\gamma, r(v) \le r) \tag{Union bound}\\
	&= n \E{\P(\mathrm{deg}|_H (v) < (1-\varepsilon)\gamma\given r(v))\indicator\{r(v)\le r\}} \tag{Tower rule}\\
	&\le n \Exp\Big[ \ee^{-\eps^2\Exp[\mathrm{deg}|_H (v) | r(v)]/2}\indicator\{r(v)\le r\}\Big] \tag{Chernoff bound}\\
    &\le n \ee^{-\gamma\varepsilon^2/2}\P(r(v)\le r)
    \le n \ee^{-\gamma\varepsilon^2/2}.
\end{align*}
Taking $r$ to be the argmax of $\gamma$ yields $\gamma(r) \ge \Exp[\Gamma(\ustar)]$. By \cref{lem:lower-clique-number} and applying \eqref{eq:measure-of-origin-ball} at $R/2$ we have that
$
   \gamma(r) \ge \Exp[\Gamma(\ustar)] \ge \Exp[\coresize] \in n^{\Omega(1)}.
$
Thus, choosing $\varepsilon = 1/\log(n)$, we obtain for any constant $c$ as long as $n$ is large enough
\begin{align*}
    \P(\degen< (1-\varepsilon)|\innerdeg(u^*)|) 
    \le \P(\degen<(1-\varepsilon)\gamma(r))
    \le n \ee^{-\gamma(r)\varepsilon^2/2}
    \le n \ee^{-n^{\Omega(1)}/\log(n)}
    \le n^{1-c},
\end{align*}
that is, $\degen \ge (1-o(1))|\innerdeg(u^*)|$ \whp
\end{proof}

In the rest of this section we derive bounds for the largest \indeg on HRGs that hold \whp which, by \cref{obs:deg-upper-inner,lem:deg-lower-inner}, translate into results for the degeneracy. Since a vertex $v$ belongs to the \inner of a vertex $u$, if and only if it resides in the inner-ball of $u$, we can use the measure of an inner-ball $\mathcal{I}_u$ to bound the maximum \indeg of a graph. Moreover, the measure of the inner-ball is invariant under rotation around the origin, which is why we write $\mu\left(\mathcal{I}(r)\right)$ instead of $\mu\left(\mathcal{I}_u\right)$ for $r = r(u)$. We sum up our results for the area of the inner-ball in the following technical lemma.

\begin{lemma}[Volume of the inner-ball]\label{cor:inner-neighbourhood-equal}
    Let $\Delta \in \Theta(1)$ and let $u \in \disk$ with radius $r = R/2 + \Delta$. Then, depending on $\Delta$, there exist constants $\gamma, \eta$, such that 
\begin{align*}
     \mu(\mathcal{I}(r)) &\geq \left(1+ \Theta\left(e^{-\alpha R}\right)\right)\frac{\alpha e^{-\alpha r}}{(\alpha - 1/2)}\left(\frac{2}{\pi}e^{\frac{1}{2}(2\alpha - 1)(2r-R)} - \left(\frac{2}{\pi} - \frac{(\alpha - 1/2)}{\alpha}\right)\right) &&\text{and}\\ 
    \mu(\mathcal{I}(r))
     &\leq \left(1+ \Theta\left(e^{-\alpha R}\right)\right)\frac{\alpha e^{-\alpha r}}{\alpha - 1/2}\left(\gamma e^{\frac{1}{2}(2\alpha - 1)(2r-R)} - \eta\right),
\end{align*}

where
\begin{align*}
     \gamma, \eta =
\begin{cases}
1, \frac{1}{2\alpha} &\text{ for } \Delta \geq 0,\\
\frac{4}{3\sqrt{3}},\frac{1}{2\alpha} - \left(1-\frac{4}{3\sqrt{3}}\right)\left(\frac{4}{3}\right)^{(\alpha - 1/2)}  &\text{ for }  \Delta \geq \log(\sqrt{4/3}),\\
\frac{1}{\sqrt{2}},\frac{1}{2\alpha} - \left(1-\frac{4}{3\sqrt{3}}\right)\left(\frac{4}{3}\right)^{(\alpha - 1/2)}  -\left(\frac{4}{3\sqrt{3}} - \frac{1}{\sqrt{2}}\right)2^{(\alpha - 1/2)} &\text{ for } \Delta \geq \log(\sqrt{2}), \\
\frac{2}{3},\frac{1}{2\alpha} - \left(1-\frac{4}{3\sqrt{3}}\right)\left(\frac{4}{3}\right)^{(\alpha - 1/2)}  -\left(\frac{4}{3\sqrt{3}} - \frac{1}{\sqrt{2}}\right)2^{(\alpha - 1/2)} -\left(\frac{1}{\sqrt{2}} -\frac{2}{3}\right)2^{(2\alpha -1)} &\text{ for }  \Delta \geq \log(2). \\
\end{cases} 
 \end{align*}
\end{lemma}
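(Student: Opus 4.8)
The plan is to reduce the two-dimensional area $\mu(\mathcal I(r))$ to a one-dimensional integral over radii and then control the angular factor $\theta_R$ with sharp elementary bounds. By rotational invariance I place $u$ at angle $0$; a point at radius $x\le r$ and angular offset $\varphi$ then lies in $\mathcal I_u$ iff $|\varphi|\le\theta_R(r,x)$, while evaluating \eqref{eq:hyperbolic_distance} at $\varphi=\pi$ shows the whole circle of radius $x$ lies in $\B_u(R)$ exactly when $x\le R-r$ (i.e.\ $r+x\le R$). Since $\rho$ is angle-independent this gives
\begin{align*}
  \mu(\mathcal I(r)) = \mu(\B_0(R-r)) + \int_{R-r}^{r} 2\,\theta_R(r,x)\,\rho(x)\,\dd x,
\end{align*}
where the first summand is closed-form via \eqref{eq:measure-of-origin-ball}, and $\rho(x)=(1+\Theta(e^{-\alpha R}))\frac{\alpha}{2\pi}e^{\alpha(x-R)}$ after replacing $\sinh,\cosh$ by exponentials. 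The density normalisation $\cosh(\alpha R)-1=\tfrac12 e^{\alpha R}(1-\Theta(e^{-\alpha R}))$ is the source of the uniform $(1+\Theta(e^{-\alpha R}))$ factor in the statement.

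The analytic heart is a precise handle on $\theta_R$. Applying the half-angle identity to \eqref{eq:hyperbolic_distance} yields the exact formula $\sin^2(\theta_R(r,x)/2)=\frac{\cosh R-\cosh(r-x)}{2\sinh r\sinh x}$, which under the same exponential approximation becomes $\sin(\theta_R(r,x)/2)=(1+\Theta(e^{-\alpha R}))\,e^{(R-r-x)/2}$, i.e.\ $\theta_R(r,x)=2\arcsin\bigl((1+\Theta(e^{-\alpha R}))e^{(R-r-x)/2}\bigr)$. Writing $t=e^{(R-r-x)/2}$, this variable runs over $[e^{-\Delta},1]$ as $x$ runs over $[R-r,r]$, with $t=1$ at $x=R-r$ (where indeed $\theta_R=\pi$). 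For the \emph{lower} bound I simply use $\arcsin(t)\ge t$, so the integral is at least $\frac{2\alpha}{\pi}e^{(R-r)/2-\alpha R}\int_{R-r}^{r}e^{(\alpha-1/2)x}\dd x$ up to the $(1+\Theta(e^{-\alpha R}))$ factor; evaluating this, then combining the dominant $x=r$ term with the $x=R-r$ term and with $\mu(\B_0(R-r))$ (and using $2r-R=2\Delta$), reproduces the claimed lower bound with $\gamma=\tfrac2\pi$ and $\eta=\tfrac2\pi-\tfrac{\alpha-1/2}{\alpha}$.

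For the \emph{upper} bound the key fact is that $t\mapsto\arcsin(t)/t$ is increasing on $(0,1]$, so on any sub-range $t\le t_1$ one has $\arcsin(t)\le\frac{\arcsin(t_1)}{t_1}t$. Splitting the integration at $x_1:=R-r-2\log t_1$ (where $t=t_1$), the bulk part $x\in[x_1,r]$ becomes the same exponential integral but with slope $\frac{\arcsin(t_1)}{t_1}$, and its dominant endpoint $x=r$ yields the coefficient $\gamma=\tfrac2\pi\cdot\frac{\arcsin(t_1)}{t_1}$. Taking $t_1\in\{\tfrac{\sqrt3}{2},\tfrac1{\sqrt2},\tfrac12\}$, at which $\arcsin$ equals $\tfrac{\pi}{3},\tfrac{\pi}{4},\tfrac{\pi}{6}$, gives exactly $\gamma\in\{\tfrac{4}{3\sqrt3},\tfrac1{\sqrt2},\tfrac23\}$ (and $\gamma=1$ for the trivial slope $\arcsin(1)=\tfrac\pi2$). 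Since $x_1\in[R-r,r]$ iff $\Delta\ge\log(1/t_1)$, these are precisely the thresholds $\log\sqrt{4/3},\log\sqrt2,\log2$. On the remaining boundary range $x\in[R-r,x_1]$ I bound $\arcsin$ by the coarser secant slopes of the earlier cases and integrate each piece the same way; a short normalisation (dividing by $\tfrac{\alpha e^{-\alpha r}}{\alpha-1/2}$) shows the piece attached to a breakpoint $t_b$ contributes a constant $t_b^{-(2\alpha-1)}$ times the slope jump, giving the telescoping corrections $(1-\tfrac{4}{3\sqrt3})(4/3)^{\alpha-1/2}$, $(\tfrac{4}{3\sqrt3}-\tfrac1{\sqrt2})2^{\alpha-1/2}$, $(\tfrac1{\sqrt2}-\tfrac23)2^{2\alpha-1}$ that assemble into the listed $\eta$, on top of the base value $\tfrac1{2\alpha}$ coming (as in the lower bound) from $\mu(\B_0(R-r))$ and the $x=R-r$ endpoint.

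The main obstacle is the boundary region $x\approx R-r$: there $\theta_R\to\pi$, the linearised approximation $\theta_R\approx 2e^{(R-r-x)/2}$ fails (its multiplicative error is $\Theta(1)$, not $o(1)$), and this region is exactly what forces $\gamma>\tfrac2\pi$ in the upper bound. Handling it cleanly requires the exact $\arcsin$ representation rather than the linearised angle, the monotonicity of $\arcsin(t)/t$ to obtain the sharp secant slopes at the special angles $\pi/3,\pi/4,\pi/6$, and careful bookkeeping so the boundary pieces telescope into the stated closed-form $\eta$. A secondary technical point is checking that the exponential approximations of $\rho$ and of $\sin(\theta_R/2)$ contribute only a uniform $(1+\Theta(e^{-\alpha R}))$ relative error throughout, which follows because $x\ge R-r=R/2-\Delta$ makes the neglected terms $\Theta(e^{-R})$, dominated by the $\Theta(e^{-\alpha R})$ error of the density normalisation.
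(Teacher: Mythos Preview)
Your proposal is correct and follows essentially the same route as the paper. The paper splits the proof into two auxiliary lemmas (upper and lower bounds) based on a preliminary estimate for $\theta_R$: it bounds $\theta_R(r,x)\le\arccos(1-2e^{R-r-x})$ and then uses that $\arccos(1-y)/\sqrt{y}$ is increasing to obtain piecewise secant bounds at the special values $y=2e^{-\Delta}$ with $\arccos\in\{\pi,\tfrac{2\pi}{3},\tfrac{\pi}{2},\tfrac{\pi}{3}\}$; your half-angle formula $\theta_R=2\arcsin t$ with $t=e^{(R-r-x)/2}$ and the monotonicity of $\arcsin(t)/t$ are exactly the same device via the identity $\arccos(1-2t^2)=2\arcsin t$, and your breakpoints $t_1\in\{1,\tfrac{\sqrt3}{2},\tfrac{1}{\sqrt2},\tfrac12\}$ reproduce the same $\gamma$ and the same threshold values of~$\Delta$. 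The only cosmetic difference is that the paper integrates $e^{-x/2}\sinh(\alpha x)$ exactly and then discards the lower-order pieces (using a strict $(2+\delta)$ lower bound on $\theta_R$ to absorb them), whereas you replace $\rho$ and $\sin(\theta_R/2)$ by pure exponentials up front and carry a uniform $(1+\Theta(e^{-\alpha R}))$ factor; both lead to the same closed-form $\gamma,\eta$ after the telescoping bookkeeping you describe.
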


\begin{lemma}\label{lem:radius-maximal-innerdegree}
Let $r^*$ be the radial coordinate of the point in \disk that maximises the measure of the inner-ball. Then $r^* = R/2 + \log\left(\frac{\alpha\eta}{\gamma(1-\alpha)}\right) /(2\alpha - 1)$.
\end{lemma}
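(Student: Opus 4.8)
The plan is to treat $\mu(\mathcal{I}(r))$ as a smooth function of the single real variable $r$ on the relevant interval and to locate its interior maximum by the first-derivative test, reading off the leading-order shape of the function from \cref{cor:inner-neighbourhood-equal}. The first thing I would observe is that the multiplicative factor $(1 + \Theta(e^{-\alpha R}))$, together with the normalisation $1/(\cosh(\alpha R) - 1)$ hidden inside $\rho$, is independent of $r$; it scales $\mu(\mathcal{I}(r))$ by a global positive constant and therefore cannot move the argmax. Hence it suffices to maximise the bracketed expression $h(r) := \frac{\alpha e^{-\alpha r}}{\alpha - 1/2}\left(\gamma e^{\frac{1}{2}(2\alpha-1)(2r-R)} - \eta\right)$. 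The key algebraic simplification is to absorb the prefactor $e^{-\alpha r}$ into the bracket using $\tfrac{1}{2}(2\alpha-1)(2r-R) = (2\alpha-1)r - (2\alpha-1)\tfrac{R}{2}$, so that, up to the $r$-independent constant $\frac{\alpha}{\alpha - 1/2}$,
\[
h(r) \propto \gamma\, e^{-(2\alpha-1)R/2}\, e^{(\alpha-1)r} - \eta\, e^{-\alpha r}.
\]
This exposes $h$ as a difference of two pure exponentials in $r$, which is what makes the optimisation clean.

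Differentiating gives
\[
h'(r) \propto \gamma(\alpha-1)\, e^{-(2\alpha-1)R/2}\, e^{(\alpha-1)r} + \alpha\eta\, e^{-\alpha r}.
\]
Since $\alpha \in (1/2,1)$ the first term is negative and the second positive. Factoring out the strictly positive $e^{-\alpha r}$, the sign of $h'(r)$ equals the sign of $\alpha\eta - \gamma(1-\alpha)\, e^{-(2\alpha-1)R/2}\, e^{(2\alpha-1)r}$, which is strictly decreasing in $r$; therefore $h$ has a unique interior critical point and it is a maximum, as $h'$ switches from positive to negative there. Setting $h'(r)=0$ yields $e^{(2\alpha-1)(r-R/2)} = \frac{\alpha\eta}{\gamma(1-\alpha)}$, and taking logarithms gives exactly
\[
r^* = \frac{R}{2} + \frac{1}{2\alpha-1}\log\left(\frac{\alpha\eta}{\gamma(1-\alpha)}\right),
\]
as claimed.

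The one genuine subtlety — and what I expect to be the main obstacle — is the piecewise definition of $(\gamma,\eta)$ in \cref{cor:inner-neighbourhood-equal}: these constants are only valid on a particular range of $\Delta = r - R/2$, so the function being differentiated is itself piecewise and the computation above is only legitimate within a single piece. To make the argument rigorous I would (i) verify self-consistency, i.e.\ check that the value $r^*$ produced with a given regime's $(\gamma,\eta)$ actually falls inside that regime's $\Delta$-range, and (ii) confirm that no larger value of $h$ is attained on the neighbouring pieces or at the breakpoints $\Delta \in \{0,\ \log\sqrt{4/3},\ \log\sqrt{2},\ \log 2\}$, using continuity of $\mu(\mathcal{I}(\cdot))$ together with the monotonicity of the sign of $h'$ established above to rule out competing maxima. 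I would also note that the gap between the upper and lower bounds of \cref{cor:inner-neighbourhood-equal} only affects lower-order terms, so the location of the maximiser is pinned down to the stated precision purely by the leading exponential terms.
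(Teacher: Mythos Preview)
Your proposal is correct and follows essentially the same approach as the paper: differentiate the expression for $\mu(\mathcal I(r))$ coming from \cref{cor:inner-neighbourhood-equal}, set the derivative to zero, and solve for $r$. You are in fact more thorough than the paper's two-line proof, since you verify that the unique critical point is a maximum (via the sign change of $h'$) and flag the piecewise nature of $(\gamma,\eta)$, which the paper silently ignores.
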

\begin{proof}
Using the expression of $\mu(\mathcal{I}(r))$ in \cref{cor:inner-neighbourhood-equal},
\begin{align*}
   \derive{r} \mu(\mathcal{I}(r))
   &= \left(1+ \Theta\left(e^{-\alpha R}\right)\right)\left(\frac{\eta\alpha^2 e^{-\alpha r}}{\alpha - 1/2} - \frac{\gamma\alpha e^{-\alpha r}}{\alpha - 1/2}\left((1-\alpha)e^{\frac{1}{2}(2\alpha - 1)(2r-R)}\right)\right).
\end{align*}
Setting this equal to $0$ and solving for $r$ yields
$r^* = R/2 + \log(\frac{\alpha\eta}{\gamma(1-\alpha)}) / (2\alpha - 1)$.

\end{proof}

We use \cref{cor:inner-neighbourhood-equal} to upper and lower bound the degeneracy. The lower bound tells us that there exists a constant $\delta_{\alpha} > 0$ such that $\degen \geq (1+\delta_{\alpha})\coresize$ \whp The constant $\delta_{\alpha}$ is increasing with increasing $1/2 < \alpha<1$, see \cref{fig:plots}. 

\begin{restatable}[Bounds on the degeneracy]{theorem}{degeneracyUpper}\label{the:degeneracy-upper}
Let $G\sim\mathcal{G}(n, \alpha, C)$ be a threshold HRG. Then \whp~its degeneracy \degen satisfies 
\begin{align*}
\frac{(4-o(1))}{\pi}\left(\frac{2\left(1-\alpha\right)}{\frac{\pi}{2}-\alpha\left(\pi-2\right)}\right)^{\frac{1-\alpha}{2\alpha-1}}\coresize \leq \kappa(G) \leq ((4/3)^{\alpha} +o(1))\coresize.
\end{align*}
\end{restatable}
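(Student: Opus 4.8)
The plan is to route both inequalities through the largest \indeg and then reduce that quantity to the volume of the inner-ball. By \cref{obs:deg-upper-inner} we have the deterministic bound $\degen\le|\innerdeg(\ustar)|$, while \cref{lem:deg-lower-inner} supplies $\degen\ge(1-o(1))|\innerdeg(\ustar)|$ \whp, so it suffices to locate $|\innerdeg(\ustar)|$ up to a $(1\pm o(1))$ factor and then divide by the core size. For the normalisation I would use $\E[\coresize]=n\,\mu(\mathcal{B}_0(R/2))=(1-o(1))\,n\,e^{-\alpha R/2}$ from \eqref{eq:measure-of-origin-ball}; since this is $n^{\Omega(1)}$, a Chernoff bound (\cref{lem:Chernoff}) gives $\coresize=(1\pm o(1))\,n\,e^{-\alpha R/2}$ \whp. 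Thus everything reduces to estimating the ratio $\max_r\mu(\mathcal{I}(r))/e^{-\alpha R/2}$.

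\textbf{From the inner-degree to the inner-ball volume.} Conditional on $r(u)=r$, rotation invariance makes $|\innerdeg(u)|$ a $\Bin(n-1,\mu(\mathcal{I}(r)))$ variable, whose mean $(n-1)\mu(\mathcal{I}(r))$ is maximised at the radius $r^*$ of \cref{lem:radius-maximal-innerdegree}. For the \emph{upper} bound I would observe that, conditional on its radius, every $|\innerdeg(u)|$ is stochastically dominated by $\Bin(n-1,\max_r\mu(\mathcal{I}(r)))$; a Chernoff bound with $\eps=1/\log n$ and a union bound over all $n$ vertices then give $|\innerdeg(\ustar)|\le(1+o(1))\,n\max_r\mu(\mathcal{I}(r))$ \whp, the union bound being affordable precisely because $n\max_r\mu(\mathcal{I}(r))\ge\E[\coresize]\in n^{\Omega(1)}$. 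For the \emph{lower} bound I would invoke \cref{lem:deg-lower-inner} directly, whose proof already yields $\degen\ge(1-o(1))\,n\max_r\mu(\mathcal{I}(r))$ \whp by taking $r$ to be the argmax of $(n-1)\mu(\mathcal{I}(r))$. So both directions now hinge on a single optimisation.

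\textbf{The extremal volume.} Writing $r=R/2+\Delta$ and substituting either estimate of \cref{cor:inner-neighbourhood-equal}, the function $h(\Delta):=\mu(\mathcal{I}(r))/e^{-\alpha R/2}=(1+o(1))\tfrac{\alpha}{\alpha-1/2}\bigl(\gamma e^{(\alpha-1)\Delta}-\eta e^{-\alpha\Delta}\bigr)$ has critical point $e^{(2\alpha-1)\Delta^*}=\tfrac{\alpha\eta}{\gamma(1-\alpha)}$, and back-substitution collapses the extremal value to the compact form
\begin{align*}
\frac{\max_r\mu(\mathcal{I}(r))}{e^{-\alpha R/2}}=2\gamma\left(\frac{\alpha\eta}{\gamma(1-\alpha)}\right)^{-\frac{1-\alpha}{2\alpha-1}}.
\end{align*}
For the lower bound I would plug in the lower-estimate constants $\gamma=2/\pi$ and $\eta=\tfrac{2}{\pi}-\tfrac{\alpha-1/2}{\alpha}$; then $2\gamma=4/\pi$ and $\tfrac{\alpha\eta}{\gamma(1-\alpha)}=\tfrac{\pi/2-\alpha(\pi-2)}{2(1-\alpha)}$, which reproduces \emph{exactly} the prefactor $\tfrac{4}{\pi}\bigl(\tfrac{2(1-\alpha)}{\pi/2-\alpha(\pi-2)}\bigr)^{\frac{1-\alpha}{2\alpha-1}}$ in the statement. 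Dividing by the concentrated core size finishes the lower bound.

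\textbf{Upper bound and the main obstacle.} For the upper bound I would maximise the \emph{upper} estimate of \cref{cor:inner-neighbourhood-equal}, whose constants $(\gamma,\eta)$ jump across the four ranges of $\Delta$ (with $\gamma$ shrinking from $1$ to $\tfrac{4}{3\sqrt3}$, $\tfrac{1}{\sqrt2}$, $\tfrac23$ as $\Delta$ grows). The crux is a self-consistency analysis: for each range one computes the critical $\Delta^*$ above and checks in which range it actually falls, equivalently tracking where $h$ turns from increasing to decreasing; since the tighter ranges only lower the bound, the envelope is governed by the regime $\Delta\ge\log\sqrt{4/3}$, and one verifies that the resulting maximum never exceeds the clean closed form $(4/3)^\alpha$, which interpolates between $2/\sqrt3$ as $\alpha\to1/2$ and $4/3$ as $\alpha\to1$. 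Dividing by $\coresize$ then yields $\degen\le((4/3)^\alpha+o(1))\coresize$. I expect this case bookkeeping—pinning down which range contains the maximiser and confirming the value stays below $(4/3)^\alpha$ uniformly in $\alpha\in(1/2,1)$—to be the main technical obstacle; the probabilistic ingredients (stochastic domination, Chernoff, a union bound powered by $\E[\coresize]\in n^{\Omega(1)}$) are comparatively routine.
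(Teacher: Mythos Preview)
Your proposal is correct and follows the same architecture as the paper: both bounds are routed through $|\innerdeg(\ustar)|$ via \cref{obs:deg-upper-inner} and \cref{lem:deg-lower-inner}, then reduced to optimising the inner-ball volume estimates of \cref{cor:inner-neighbourhood-equal}. Your closed form $2\gamma\bigl(\tfrac{\alpha\eta}{\gamma(1-\alpha)}\bigr)^{-(1-\alpha)/(2\alpha-1)}$ for the extremal ratio is exactly what the paper computes as \eqref{eq:upper-maximal-inner}, and your identification of the lower-bound constants $(\gamma,\eta)=(2/\pi,\,2/\pi-(\alpha-1/2)/\alpha)$ reproduces the stated prefactor on the nose.

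Two execution differences are worth noting. For the \emph{lower bound}, you extract $\degen\ge(1-o(1))\,n\max_r\mu(\mathcal I(r))$ directly from the proof of \cref{lem:deg-lower-inner} (where $r$ is taken to be the deterministic argmax), and then lower-bound the right-hand side by plugging the optimiser into the lower estimate of \cref{cor:inner-neighbourhood-equal}. The paper instead invokes only the \emph{statement} of \cref{lem:deg-lower-inner}, which forces it to exhibit an actual vertex near $r^*$: it carves out a thin annulus $[r^*,r^*+\xi]$ with $\xi=\alpha^{-1}\log(1+\log^2(n)\,n^{\alpha-1})$, shows it contains $\omega(\log n)$ vertices \whp, and then Chernoff-bounds the inner-degree of any such vertex. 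Your shortcut is cleaner and buys you the same conclusion without the annulus argument. For the \emph{upper bound}, you correctly flag the self-consistency bookkeeping across the four $\Delta$-ranges as the crux; the paper resolves it by splitting at $\alpha=9/10$, using the $\gamma=1/\sqrt2$ regime for $\alpha<9/10$ (where one checks $\Delta^*>\tfrac12\log 2$) and the $\gamma=2/3$ regime for $\alpha\ge 9/10$ (where $\Delta^*>\log 2$), and in each case verifying numerically that the resulting extremal value stays below $(4/3)^\alpha$.
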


\begin{proof}[Proof sketch]
The upper bound is obtained by using $r^*$ as stated in \cref{lem:radius-maximal-innerdegree} for the upper bound of the inner-ball with $(n-1)\mu(\mathcal{I}(r^*))$, and using a Chernoff bound along with a union bound which gives an upper bound for $|\innerdeg(\ustar)|$ \whp \cref{obs:deg-upper-inner} then yields the upper bound for the degeneracy. For the lower bound, we first show that there exists a vertex with a radius $\Tilde{r}$ close in value to $r^*$ \whp Then $(1-o(1))(n-1)\mu(\mathcal{I}(\Tilde{r}))$ lower bounds $|\innerdeg(\ustar)|$ \whp This gives a lower bound for the degeneracy, due to \Cref{lem:deg-lower-inner}.
\end{proof}

Applying \cref{lem:lower-clique-number,the:degeneracy-upper}, the following is immediate.
\begin{corollary}[Bounds on the chromatic number]\label{cor:chromatic}
    Let $G\sim\mathcal{G}(n, \alpha, C)$ be a threshold HRG. Then \whp its chromatic number is $\coresize \leq \chrom \leq ((4/3)^{\alpha} +o(1))\coresize.$
\end{corollary}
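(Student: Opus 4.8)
The plan is to derive both inequalities directly by chaining the sandwich relation of \cref{lem:lower-clique-number} with the degeneracy bounds of \cref{the:degeneracy-upper}, exploiting that the chromatic number is squeezed between the clique number below and the degeneracy-plus-one above. Neither direction requires new probabilistic work; the only content beyond bookkeeping is confirming that the additive $+1$ in the greedy colouring bound is negligible against $\coresize$.

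For the lower bound I would simply invoke the chain $\coresize \leq \clique \leq \chrom$ from \cref{lem:lower-clique-number}. This holds deterministically once the vertices are sampled, so it yields $\coresize \leq \chrom$ with no probability cost. For the upper bound I would combine the greedy colouring inequality $\chrom \leq \degen + 1$ (also from \cref{lem:lower-clique-number}) with the \whp degeneracy estimate $\degen \leq ((4/3)^{\alpha} + o(1))\coresize$ of \cref{the:degeneracy-upper}. This gives $\chrom \leq ((4/3)^{\alpha} + o(1))\coresize + 1$ \whp.

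The one point requiring a moment's care — and thus the nearest thing to an obstacle in an otherwise immediate corollary — is absorbing the additive $+1$ into the multiplicative $o(1)$ term. For this I would reuse the observation from the proof of \cref{lem:deg-lower-inner}: evaluating \eqref{eq:measure-of-origin-ball} at $R/2$ gives $\Exp[\coresize] = n\,\mu(\mathcal{B}_0(R/2)) \in n^{\Omega(1)}$, and a Chernoff bound (\cref{lem:Chernoff}) then yields $\coresize \geq (1-o(1))\Exp[\coresize] \in n^{\Omega(1)}$ \whp. Consequently $1/\coresize = n^{-\Omega(1)} = o(1)$ \whp, so the additive constant is swallowed by the $o(1)\cdot\coresize$ error, leaving $\chrom \leq ((4/3)^{\alpha} + o(1))\coresize$ \whp. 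Intersecting the two \whp events — the degeneracy upper bound and the polynomial lower bound on $\coresize$ — preserves the \whp guarantee, which completes the argument.

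I do not anticipate any genuine difficulty here, since every ingredient is already available: \cref{lem:lower-clique-number} supplies both the clique lower bound and the greedy upper bound, \cref{the:degeneracy-upper} supplies the quantitative degeneracy bound, and the core-size growth is exactly the estimate already used to drive the Chernoff argument in \cref{lem:deg-lower-inner}. The corollary is therefore a direct composition of these facts.
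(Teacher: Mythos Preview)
Your proposal is correct and follows exactly the paper's approach: the corollary is stated there as immediate from \cref{lem:lower-clique-number} and \cref{the:degeneracy-upper}, and you have simply spelled out the chain $\coresize \leq \clique \leq \chrom \leq \degen + 1$ together with the degeneracy bound. Your extra care in absorbing the $+1$ via $\coresize \in n^{\Omega(1)}$ is a detail the paper leaves implicit, but it is the right justification.
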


Our structural results directly produce algorithmic applications. The small gap between degeneracy and core translates into an efficient approximation algorithm to colour a HRG.

\begin{restatable}[Approximation algorithm]{theorem}{approxAlgo}\label{the:approx-algo}
 Let $G\sim\mathcal{G}(n, \alpha, C)$ be a threshold HRG. Then an approximate vertex colouring of $G$ can be computed in time $\mathcal{O}(n)$ with approximation ratio $((4/3)^{\alpha}+o(1))$ \whp
\end{restatable}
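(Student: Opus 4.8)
The plan is to instantiate the classical degeneracy-based greedy colouring and to let \cref{the:degeneracy-upper} do all of the probabilistic work. First I would specify the algorithm. Compute a degeneracy ordering $(v_1,\dots,v_n)$ of $G$ by the parameter-free \emph{smallest-degree-last} procedure: maintain the vertices in buckets indexed by their current degree, repeatedly extract a vertex of minimum current degree, append it to the ordering, and decrement the bucket position of each of its still-present neighbours. Then colour in the reverse order $v_n, v_{n-1}, \dots, v_1$, assigning to each vertex the smallest colour not yet used on one of its neighbours of larger index. Crucially, the algorithm never needs to know $\coresize$, $\clique$, or $\degen$ in advance: the ordering it produces witnesses the true degeneracy, so by the definition of $\degen$ every $v_i$ has at most $\degen$ neighbours of larger index, and hence at the moment we colour $v_i$ at most $\degen$ of its neighbours are already coloured. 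The resulting colouring is therefore proper and uses at most $\degen+1$ colours, deterministically.

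Next I would turn the colour count into an approximation ratio. Conditioning on the high-probability event of \cref{the:degeneracy-upper}, the number of colours used is at most $\degen+1 \le ((4/3)^{\alpha}+o(1))\coresize+1$. By \cref{lem:lower-clique-number} the optimum satisfies $\chrom \ge \coresize$. Moreover $\coresize$ is polynomially large \whp: a Chernoff bound together with \eqref{eq:measure-of-origin-ball} evaluated at $R/2$ gives $\coresize \ge \tfrac12\Exp[\coresize] \in n^{\Omega(1)}$, exactly as already used in the proof of \cref{lem:deg-lower-inner}. Consequently the ratio of colours used to the optimum is at most
\[
\frac{((4/3)^{\alpha}+o(1))\coresize+1}{\coresize} = (4/3)^{\alpha} + o(1),
\]
where the additive $+1$ is absorbed into the $o(1)$ precisely because $\coresize = n^{\Omega(1)}\to\infty$ \whp

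Then I would verify the running time. In the regime $\alpha\in(1/2,1)$ the expected average degree of $G$ is a constant depending only on $\alpha$ and $C$, so $m = O(n)$ \whp; the smallest-degree-last pass implemented with degree buckets touches each vertex and incident edge a constant number of times, as does the final greedy colouring pass, giving total time $O(n+m) = O(n)$ \whp. Stitching the three pieces together yields an $O(n)$-time algorithm producing a proper colouring with approximation ratio $(4/3)^{\alpha}+o(1)$ \whp, which is the claim.

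The structural content is entirely carried by \cref{the:degeneracy-upper}, so the remaining obstacles are mild and essentially bookkeeping. The two points requiring care are: (a) achieving genuinely \emph{linear} rather than $O(n\log n)$ time, which forces the use of the degree-bucket core-decomposition data structure instead of sorting vertices by radius, and relies on the $m=O(n)$ sparsity bound for HRGs in this parameter range; and (b) checking that the additive $+1$ and the multiplicative $o(1)$ slack in $\degen \le ((4/3)^{\alpha}+o(1))\coresize$ collapse cleanly into a single $(4/3)^{\alpha}+o(1)$ factor, which is where the polynomial lower bound $\coresize = n^{\Omega(1)}$ is needed.
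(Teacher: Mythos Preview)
Your proposal is correct and follows essentially the same route as the paper's own proof: both compute the Matula--Beck smallest-last ordering, invoke the HRG sparsity bound $|E|=\mathcal{O}(n)$ \whp\ to obtain linear time, greedily colour with $\degen+1$ colours, and then divide the degeneracy upper bound of \cref{the:degeneracy-upper} by the core-size lower bound from \cref{lem:lower-clique-number}. Your write-up is in fact slightly more careful than the paper's in two places---you make explicit that the algorithm is parameter-free (no knowledge of $\coresize$ or $\degen$ needed) and you spell out that $\coresize = n^{\Omega(1)}$ is what absorbs the additive $+1$ into the $o(1)$.
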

\begin{proof}[Proof sketch.] Using a \emph{smallest-last} vertex ordering \cite{Matula1983} the number of colours required is upper-bounded by \degen. Computing the smallest-last vertex ordering, and then using the ordering to colour the graph, both takes linear as the giant component is sparse~\whp by \cite[Corollary 17]{hrg-spectral}. The approximation ratio is achieved by comparing the lower bound of the chromatic number in~\cref{cor:chromatic} to the upper bound of the degeneracy in \cref{the:degeneracy-upper}.
\end{proof}

\section{Clique Number of Hyperbolic Random Graphs}\label{sec:clique-number}
Recall that for any graph $G$, the clique number \clique, chromatic number \chrom, and degeneracy \degen are related via the inequalities $\clique\le\chrom\le\degen+1$. For this reason we are interested in the relationship between clique number and degeneracy for hyperbolic random graphs. In \cref{sec:clique-deg-gap} we show that the two differ and that for HRGs, the degeneracy is strictly larger than the clique number by a constant multiplicative constant. In \cref{sub:position-clique-hrg} we give new insights about where in the hyperbolic disk the largest clique is formed. We then conclude the section by providing a new upper bound for the clique number in \cref{sec:clique-upper-bound} that states a leading constant in front of the size of the core, and that is increasing in $\alpha$.

\subsection{The gap between Clique Number and Degeneracy}\label{sec:clique-deg-gap}
Because of the centralising effect of hyperbolic geometry, one might hope to show that the clique contained in the core of the disk is the largest, and that $\omega(G) = (1-o(1))\kappa(G)$. 
This would achieve two things on HRGs: first, it would imply a tight bound for the chromatic number $\chi(G)$, sandwiching it between clique number and degeneracy. Moreover, it would also imply a linear time $(1+o(1))$-approximation algorithm for the two NP-complete problems clique number and chromatic number using a smallest-last vertex ordering (see \cite{Matula1983}). 

In this section we disprove these claims. We show that there exists a constant gap between clique number and degeneracy; this is the content of \cref{the:clique-deg-gap}. Before embarking on the details of the proof, we first sketch its idea. For any clique $K$, its size is bounded by the \indeg $|\innerdeg(u)|$, where $u$ is the vertex with largest radius among the vertices of $K$. For $r(u) \leq R/2 + o(1)$ and $ r(u) \in R/2 + \omega(1)$, $|\innerdeg(u)|$
is already smaller by a multiplicative constant than the lower bound for the degeneracy given in \cref{the:degeneracy-upper}. What remains is to extend the result to $r(u)\in R/2 + \Theta(1)$, which requires more intricate arguments and is addressed in the following lemma.

\begin{lemma}\label{lem:case3}
   Let $G\sim\mathcal{G}(n, \alpha, C)$ be a threshold HRG, let $\Delta \in \Theta(1)$ and let $K$ be any clique where $u \in K$ is the vertex with largest radius $
   r_K = R/2 + \Delta$. Then \whp~there exists a constant $\varepsilon \in(0,1)$ such that $|K| \leq (1-\varepsilon)|\innerdeg(u)|$.  
\end{lemma}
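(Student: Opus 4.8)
The plan is to exploit the fact that, after placing $u$ at angular coordinate $0$ by rotational invariance, every vertex of $K$ other than $u$ lies in the inner-ball $\mathcal I_u$, so that $K\setminus\{u\}\subseteq\innerdeg(u)$ and in particular $|K|\le|\innerdeg(u)|+1$. I would prove the claim by partitioning $\mathcal I_u$ into three regions $A,B,C$ and establishing (i) each region captures at least a constant fraction $c_0$ of $\innerdeg(u)$ \whp, and (ii) no clique can meet all three regions. Granting (i) and (ii), the clique $K$ avoids at least one of the three regions entirely; since that region holds at least $c_0|\innerdeg(u)|$ vertices, we obtain $|K|\le|\innerdeg(u)|+1-c_0|\innerdeg(u)|\le(1-\eps)|\innerdeg(u)|$, where the last step absorbs the additive $+1$ because $|\innerdeg(u)|=n^{\Omega(1)}$.

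The regions are chosen as two ``outer wedges'' separated by a central wedge. Recall that $v=(\rho,\varphi)\in\mathcal I_u$ iff $\rho\le r_K$ and $|\varphi|\le\theta_R(r_K,\rho)$, and that $\theta_R(r_K,\rho)$ is decreasing in $\rho$ (this is the monotonicity underlying \cref{lem:ball-lower-bound}), so the inner-ball is narrowest at the outer radius $\rho=r_K$, where its half-width is $\theta^\ast:=\theta_R(r_K,r_K)$. Fix a small radial depth $\delta=\Theta(1)$ and an angle $\phi_1\in(0,\theta^\ast)$ to be determined, and set
\begin{align*}
A &= \{(\rho,\varphi)\in\mathcal I_u : \rho\ge r_K-\delta,\ \varphi\ge\phi_1\}, &
C &= \{(\rho,\varphi)\in\mathcal I_u : \rho\ge r_K-\delta,\ \varphi\le-\phi_1\},
\end{align*}
with $B=\mathcal I_u\setminus(A\cup C)$ the separator in the middle. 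The two properties to establish are then purely geometric and measure-theoretic.

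For (ii), suppose $a\in A$ and $c\in C$. Their angular distance is at least $2\phi_1$, while their connection threshold is $\theta_R(r(a),r(c))\le\theta_R(r_K-\delta,r_K-\delta)=:\theta^{\ast\ast}$, using again that $\theta_R$ is decreasing in each argument. Hence $a$ and $c$ are non-adjacent as soon as $2\phi_1>\theta^{\ast\ast}$, so no clique contains a vertex of $A$ together with a vertex of $C$, and in particular no clique meets all of $A,B,C$. The crux is reconciling this with (i): the wedges must be far enough apart to be disconnected yet wide enough to carry constant measure, i.e.\ I need $\theta^{\ast\ast}/2<\phi_1<\theta^\ast$, which is feasible exactly when $\theta^{\ast\ast}<2\theta^\ast$. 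Using the standard small-arc estimate $\theta_R(r_1,r_2)=(1+o(1))\,2e^{(R-r_1-r_2)/2}$ one finds $\theta^{\ast\ast}/\theta^\ast=(1+o(1))e^{\delta}$, so choosing the depth $\delta<\log 2$ forces $\theta^{\ast\ast}<2\theta^\ast$ and leaves an entire admissible interval for $\phi_1$; this is precisely the regime in which the radial-shell factors $\sqrt{4/3},\sqrt2,2$ of \cref{cor:inner-neighbourhood-equal} appear. I expect this quantitative balancing, together with controlling the error terms of the arc estimate when $\Delta$ is merely a small constant, to be the main obstacle.

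For (i), the angular density of $\mu$ is uniform (the density depends only on the radius), so on the outer shell $\rho\in[r_K-\delta,r_K]$ the sector $\varphi\ge\phi_1$ captures an angular fraction bounded below by $(\theta^\ast-\phi_1)/(2\theta_R(r_K,\rho))$, a positive constant near $\rho=r_K$. Integrating against $\rho$ and using that the integrand defining $\mu(\mathcal I_u)$ scales like $e^{(\alpha-1/2)\rho}$ on this range (so that the outer shell alone already carries a $\bigl(1-e^{-(\alpha-1/2)\delta}\bigr)$-fraction of $\mu(\mathcal I_u)$) yields $\mu(A),\mu(C)\ge c_0\,\mu(\mathcal I_u)$ for a constant $c_0>0$, and likewise for $B$. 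Finally, since $\Exp[|\innerdeg(u)|]=(n-1)\mu(\mathcal I_u)=n^{\Omega(1)}$, two applications of the Chernoff bound (\cref{lem:Chernoff}) give $|V\cap A|\ge c_0(1-o(1))\Exp[|\innerdeg(u)|]$ and $|\innerdeg(u)|\le(1+o(1))\Exp[|\innerdeg(u)|]$ \whp, whence $|V\cap A|\ge\eps|\innerdeg(u)|$; a union bound over the $n$ choices of $u$ makes this simultaneous for every vertex. Combining this with the geometric step of (ii) completes the proof.
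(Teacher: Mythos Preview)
Your strategy is the same as the paper's: partition $\mathcal I_u$ into three pieces, show the two outer pieces each carry a constant fraction of the mass and have no edges between them, and conclude that any clique misses one of them. Where you differ is in the decomposition itself. You carve out angular wedges $A$ and $C$ on a radial shell of depth $\delta$, then balance $\delta$ and $\phi_1$ so that the wedges are both wide enough and far enough apart. The paper instead takes the separator to be the \emph{hypercycle} $\mathcal H_\varphi=\{x:\dist(x,a_\varphi)\le R/2\}$ around the diameter $a_\varphi$ through $u$; the outer pieces $\mathcal S_1,\mathcal S_2$ are what remains of $\mathcal I_u$ on either side. Disconnection of $\mathcal S_1$ from $\mathcal S_2$ is then a one-line consequence of the triangle inequality along a geodesic that must cross $a_\varphi$, with no constants to tune. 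For the measure lower bound the paper also restricts to a radial sub-band $[R/2+\Delta',\,r_K]$, essentially your shell trick. So the hypercycle construction buys a cleaner separation argument; yours is more elementary in that it avoids hypercycles, at the price of the parameter balancing you flag as ``the main obstacle''.

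On that obstacle: your estimate $\theta^{**}/\theta^*=(1+o(1))e^\delta$ is not correct as stated. The small-arc expansion $\theta_R(r,r)\sim 2e^{(R-2r)/2}$ carries a multiplicative error governed by $2r-R$, which here equals $2\Delta$ or $2(\Delta-\delta)$ --- fixed constants, not growing --- so the ratio converges (in $n$) to $\arccos(1-2e^{-2(\Delta-\delta)})/\arccos(1-2e^{-2\Delta})$, and this need not be close to $e^\delta$ when $\Delta$ is small. The fix is the one you implicitly allow: for each fixed $\Delta>0$ choose $\delta$ small enough that this ratio is below $2$ (it tends to $1$ as $\delta\to 0$), rather than insisting on a uniform threshold $\delta<\log 2$. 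Finally, note that you only need $A$ and $C$ to have constant mass; the claim for $B$ is unnecessary, since the clique already misses one of $A,C$.
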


\begin{proof}
  \begin{figure}[t]
    \centering \includegraphics[height=0.24\textheight]{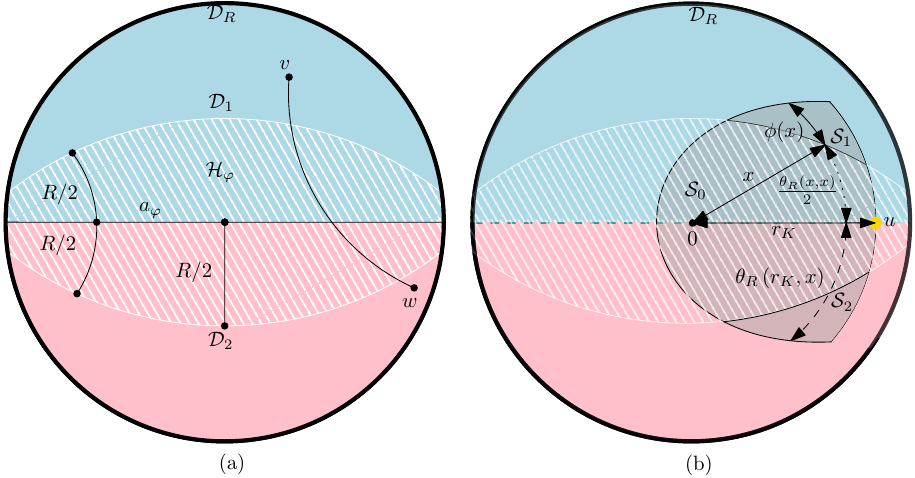}
    \caption{Illustration of a separator. (a) The line $a_{\varphi}$ partitions disk \disk into two halfdisks $\mathcal{D}_1$ (blue) and $\mathcal{D}_2$ (pink). The hypercycle $\mathcal{H}_{\varphi}$ (hatched area) is defined by the line $a_{\varphi}$. Points located in different halfdisks and outside the hatched area have distance at least $R$. (b) The separator (hatched area) separates the \inner (grey area) of a vertex $u$ of radius $r_K$ into three sub-areas $\mathcal{S}_0$, $\mathcal{S}_1$ and $\mathcal{S}_2$. Any vertex located in $\mathcal{S}_1$ has no edge to a vertex in $\mathcal{S}_2$.}  
    \label{fig:separator}
  \end{figure}
   
Let $G'$ be the \inner of $u$, i.e., the induced subgraph $G' = G[\Gamma(u)]$. Then $K \subseteq V(G')$, and thus $\omega(G') \ge |K|$. We show that $\omega(G') < (1-\varepsilon)|\innerdeg(u)|$ \whp

We accomplish this by proving that there exists a colouring for $G'$ with $(1-\varepsilon)|V(G')| = (1-\eps)|\innerdeg(u)|$ many colours. This implies an upper bound for the chromatic number $\chi(G')$, which also serves as an upper bound for $\omega(G')$. 
We do this by partitioning the \inner $\mathcal{I}_u$ into three disjoint sub-regions
$\mathcal{S}_0 \cup \mathcal{S}_1 \cup \mathcal{S}_2$ such that no vertex in $\mathcal{S}_1$ is adjacent to any vertex in $\mathcal{S}_2$. Thus, $\mathcal{S}_0$ separates $\mathcal{S}_1$ from $\mathcal{S}_2$ and we can colour the set of vertices $V \cap \mathcal{S}_1$ with the same colours as $V \cap \mathcal{S}_2$. Thus if $\min\{|V \cap \mathcal{S}_1| , |V \cap \mathcal{S}_2|\} \geq \varepsilon|\innerdeg(u)|$ \whp, our desired statement will be proven.

To find such a separator $\mathcal{S}_0$, we follow the lines drawn by Bläsius et.~al.~\cite{bfk-tw-2016} where \emph{hypercycles} for hyperbolic random graphs were introduced (see \cref{fig:separator}~a). A hypercycle $\hypercycle_\varphi$ (of radius $R/2$) is defined as follows: let $a_{\varphi}$ denote the line whose points have angle $\varphi$ and $\varphi + \pi$. Then $\hypercycle_\varphi := \{u \in \disk : \dist(u,a_\varphi) \leq R/2 \}$, i.e.~the set of points with distance at most $R/2$ to line $a_\varphi$. 
Consider the point $u = \left(r_K, \varphi\right)$ and let $\mathcal{S}_0 := \mathcal{I}_u \cap \hypercycle_\varphi$. 
To define $\mathcal{S}_1$ and $\mathcal{S}_2$ separate \disk into the two disjoint \emph{halfdisks} $\mathcal{D}_1 = \{x \in \disk : \varphi(x) - \varphi \leq \pi\}$ and $\mathcal{D}_2 = \{x \in \disk : \varphi(x) - \varphi \geq \pi\}$ (see \cref{fig:separator}~a). Then we define $\mathcal{S}_1 := (\mathcal{I}_u \cap \mathcal{D}_1) \setminus \hypercycle_\varphi$ and symmetrically $\mathcal{S}_2 := (\mathcal{I}_u \cap \mathcal{D}_2) \setminus \hypercycle_\varphi$. 

We observe that any point $w \in \mathcal{S}_1$ has distance at least $R$ to any point $v \in \mathcal{S}_2$. This can be shown for example by observing that the geodesic between $w$ and $v$ must pass through some point $x\in a_\varphi$, and so $\dist(w,v) = \dist(w,x) + \dist(x,v) \ge \dist(w,a_\varphi) + \dist(a_\varphi,v) > R$. This ensures our objective of separating the two regions via $\mathcal{S}_0$. 

Next, we show that there exists a constant $\varepsilon > 0$ small enough, such that $\mu(\mathcal{S}_1) = \mu(\mathcal{S}_2) \geq \varepsilon n^{-\alpha}$
(a sketch of the idea is given in \cref{fig:separator}~b). Setting $\phi(x) = \max(0, \theta_R\left(r_K,x\right) - \theta_R(x,x)/2)$ we derive by symmetry
$$
    \mu(\mathcal{S}_1) = \mu(\mathcal{S}_2) 
    = \int_{R/2}^{r}\int_{0}^{\phi(x)} \rho(x)\dd\phi \dd x = \int_{R/2}^{R/2 + \Delta} \phi(x) \rho(x) \dd x.
$$
Now we choose another constant $\Delta' < \Delta$ that fulfills $2\theta_R(R/2 + \Delta, R/2 + \Delta') - \theta_R(R/2 + \Delta', R/2 + \Delta') \eqqcolon c \in \Theta(1)$.
This is possible because $\Delta\in\Omega(1)$.
By our choice of $\Delta'$ we then obtain
\begin{align*}
    \mu(\mathcal{S}_1) &\geq \int_{R/2 + \Delta'}^{R/2 + \Delta} \phi(x)\rho(x) \dd x= \int_{R/2 + \Delta'}^{R/2 + \Delta} \left(\theta_R(R/2 + \Delta, x) - \frac{\theta_R(x,x)}{2}\right) \rho(x) \dd x\\
    &\geq c \int_{R/2 + \Delta'}^{R/2 + \Delta} \rho(x) \dd x 
    = \frac{c(\cosh(\alpha(R/2 + \Delta)) - \cosh(\alpha(R/2 + \Delta')))}{\cosh(\alpha R) - 1} \in \Omega(n^{-\alpha}),
\end{align*}
where the last line follows since $\theta_R(\cdot,\cdot)$ is monotonically decreasing and since $\rho(x) = \frac{\alpha\sinh(\alpha x)}{\cosh(\alpha R) - 1}$, $|\Delta - \Delta'| > 0$ and $R = 2\log(n) + C$. Therefore $\mu(\mathcal S_1) \in \Omega(\E{|\innerdeg(u)|}/n)$ \whp, since
$$
    \liminf_{n\to\infty} \frac {n\mu(\mathcal S_1)}{\E{|\innerdeg(u)|}} 
    = \liminf_{n\to\infty} \frac{\mu(\mathcal S_1)}{n^{-\alpha}}\frac{n^{1-\alpha}}{\E{|\innerdeg(u)|}}
    > 0,
$$
where $\E{|\innerdeg(u)|} \leq n\mu(\mathcal{B}_0(R/2 + \Delta)) \in \mathcal{O}(n^{1-\alpha})$ by \cref{eq:measure-of-origin-ball}, because $\Delta\in\mathcal O(1)$. Thus there exists some $\eps>0$ for which, for $n$ large enough, $$\E{\abs{V\cap\mathcal S_2}} = \E{\abs{V\cap\mathcal S_1}} = n\mu(\mathcal S_1) \ge (1-1/\log(n))^{-2}\eps\E{|\innerdeg(u)|};$$ since $\abs{V\cap\mathcal S_1} \le |\innerdeg(u)|$ a.s.~and is strictly smaller with positive probability, then $\eps< 1$.

Using a Chernoff bound 
for both $\abs{V\cap\mathcal S_1}$ and $\abs{V\cap\mathcal S_2}$, we obtain that neither random variable is smaller than $(1-1/\log(n))\E{\abs{V\cap\mathcal S_1}} \geq (1-1/\log(n))^{-1}\eps\E{|\innerdeg(u)|}$ \whp
On the other hand, another application of a Chernoff bound reveals $|\innerdeg(u)| \leq (1+ 1/\log(n))\E{|\innerdeg(u)|}$ \whp, since for $r(u) \in R/2 + \Theta(1)$ we have $\E{|\innerdeg(u)|} \in \Theta(n^{(1-\alpha)})$ using \cref{cor:inner-neighbourhood-equal}. A union bound 
then shows that \whp, $
    \min\{|V \cap \mathcal{S}_1| , |V \cap \mathcal{S}_2|\}
    \ge \frac {\eps\E{|\innerdeg(u)|}}{1 - 1 /\log(n)}
    \ge \eps|\innerdeg(u)|.
$
As argued above, a naïve colouring that colours the vertices of $\mathcal{S}_1$ and $\mathcal{S}_2$ with the same set of colours yields the upper bound.
\end{proof}

\begin{theorem}[Clique-degeneracy-gap]\label{the:clique-deg-gap}
Let $G\sim\mathcal{G}(n, \alpha, C)$ be a threshold HRG. Then \whp~there exists a constant $\varepsilon \in(0,1)$ such that $\kappa(G)/\omega(G) > 1 + \varepsilon$.   
\end{theorem}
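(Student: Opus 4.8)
The plan is to prove the slightly stronger statement that \whp every clique $K$ of $G$ satisfies $|K|\le(1-\eps')\degen$ for some constant $\eps'\in(0,1/2)$; the claimed bound $\degen/\clique>1+\eps$ then follows with $\eps=\eps'/(1-\eps')\in(0,1)$. Fix a clique $K$ and let $u\in K$ be its vertex of largest radius, writing $r(u)=R/2+\Delta$. Since every other vertex of $K$ is a neighbour of $u$ of no larger radius, $K\setminus\{u\}\subseteq\innerdeg(u)$, and hence $|K|\le|\innerdeg(u)|+1\le|\innerdeg(\ustar)|+1$. The engine of the argument is that the inner-degree of \ustar is essentially the degeneracy: combining the deterministic bound $\degen\le|\innerdeg(\ustar)|$ of \cref{obs:deg-upper-inner} with $\degen\ge(1-o(1))|\innerdeg(\ustar)|$ from \cref{lem:deg-lower-inner} gives $|\innerdeg(\ustar)|\le(1+o(1))\degen$ \whp. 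It therefore suffices to show, for every candidate top vertex $u$, either that the clique inside $G[\innerdeg(u)]$ is a constant factor smaller than $|\innerdeg(u)|$, or that $|\innerdeg(u)|$ is itself already a constant factor below $\degen$.

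I would split into a trichotomy on $\Delta$ with fixed constant thresholds $0<\Delta_0<\Delta_1$. In the \emph{middle regime} $\Delta\in[\Delta_0,\Delta_1]$ (so $\Delta\in\Theta(1)$), \cref{lem:case3} applies directly and yields a constant $\eps>0$ with $|K|\le(1-\eps)|\innerdeg(u)|\le(1-\eps)|\innerdeg(\ustar)|\le(1-\eps/2)\degen$ \whp, once $n$ is large. To obtain a single $\eps$ valid across the whole band I would take the infimum of the constants produced by \cref{lem:case3} over the compact interval $[\Delta_0,\Delta_1]$: this infimum is positive because the ratio $\mu(\mathcal S_1)/\mu(\mathcal I)$ appearing in that proof is a continuous and strictly positive function of $\Delta$, hence bounded away from $0$ on a compact set. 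A union bound over the at most $n$ vertices falling in the band then makes the statement simultaneous.

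For the two \emph{extreme regimes} I would prove the weaker bound $|\innerdeg(u)|\le(1-\eps')\degen$ using only the size of the inner-ball. When $r(u)\le R/2+\Delta_0$ (including $\Delta<0$) we have $\mathcal I(r(u))\subseteq\B_0(R/2+\Delta_0)$, so by \eqref{eq:measure-of-origin-ball} $\E{|\innerdeg(u)|}\le(n-1)\mu(\B_0(R/2+\Delta_0))=(1+o(1))e^{\alpha\Delta_0}\E{\coresize}$; a Chernoff bound (\cref{lem:Chernoff}) with a union bound over all $n$ vertices upgrades this to $|\innerdeg(u)|\le(1+o(1))e^{\alpha\Delta_0}\coresize$ \whp. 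Since \cref{the:degeneracy-upper} yields $\degen\ge(1+\delta_\alpha)\coresize$ \whp, choosing $\Delta_0$ and $\eps'$ small enough that $e^{\alpha\Delta_0}\le(1-\eps')(1+\delta_\alpha)$ produces the required gap. When $\Delta>\Delta_1$, \cref{cor:inner-neighbourhood-equal} shows that $\mu(\mathcal I(R/2+\Delta))$ is decreasing in $\Delta$ past its maximiser $r^*=R/2+\Theta(1)$ (\cref{lem:radius-maximal-innerdegree}) and decays like $e^{-(1-\alpha)\Delta}$; taking $\Delta_1>r^*-R/2$ large enough makes $\mu(\mathcal I(R/2+\Delta))<(1-\eps')(1+\delta_\alpha)\mu(\B_0(R/2))$ for all $\Delta\ge\Delta_1$, so that $\E{|\innerdeg(u)|}\le(1-\eps')\degen$, and the same Chernoff-plus-union argument gives the bound \whp. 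Setting $\eps'$ to the minimum of the three constants and intersecting the three \whp events completes the proof.

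I expect the main obstacle to lie in the far end of the high regime: once $\Delta=\omega(1)$, the expectation $\E{|\innerdeg(u)|}\sim n^{1-\alpha}e^{-(1-\alpha)\Delta}$ falls below any fixed polynomial in $n$, so the Chernoff-plus-union calibration (with $\eps=1/\log n$) used in the other regimes no longer certifies a \whp bound term by term. Here one must instead invoke the large-deviation form of the Chernoff bound to argue that \emph{no} vertex of radius at least $R/2+\Delta_1$ can have inner-degree as large as $\Theta(n^{1-\alpha})=\Theta(\degen)$, the probability of such a deviation being super-polynomially small even after a union bound over all $n$ vertices. The genuinely geometric content of the theorem is already isolated in \cref{lem:case3}; the residual difficulty is purely bookkeeping, namely making the constant $\eps'$ uniform across the continuum of admissible radii and controlling concentration uniformly all the way to the boundary of \disk.
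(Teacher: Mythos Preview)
Your proposal is correct and follows essentially the same three-case split on the radius of the top clique vertex as the paper's proof, invoking \cref{lem:case3} in the middle regime and comparing the inner-ball measure against the degeneracy lower bound of \cref{the:degeneracy-upper} at the extremes. Your use of fixed thresholds $\Delta_0,\Delta_1$ (rather than the paper's informal $o(1)/\Theta(1)/\omega(1)$ trichotomy), the compactness argument to make the $\eps$ from \cref{lem:case3} uniform over the band, and the observation that the far-out regime needs the large-deviation Chernoff form are all points on which you are more careful than the paper's own write-up.
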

\begin{proof}
Let $K$ be the largest clique of $G$, and let $u$ be the vertex of $K$ with maximal radius $r_K\coloneqq r(u)$. We assume the following cases for $r_K$ which cover all possibilities:

\noindent\textbf{Case 1}~[$r_K \in R/2 + \omega(1)$]: 
Observe that $K \subseteq V \cap \mathcal{I}\left(r_K\right)$. Hence, $|K| \leq |V \cap \mathcal{I}\left(r_K\right)|$ and
\begin{align*}
        \mu\left(\mathcal{I}\left(r_K\right)\right) \leq \left(1+ \Theta\left(e^{-\alpha R}\right)\right)\frac{\alpha e^{-\alpha r_K}}{\alpha - 1/2}\left(\gamma e^{\frac{1}{2}(2\alpha - 1)(2r_K-R)} - \eta\right) \in \Theta(1)n^{-\alpha}e^{-(1-\alpha)\omega(1)},
\end{align*}
by~\cref{cor:inner-neighbourhood-equal} since $\gamma$ and $\eta$ are both constants. Taking the expectation and using a Chernoff bound 
we have $|K| \in o(n^{1-\alpha})$~\whp Recall that $\kappa(G) > (1+\delta)e^{-\alpha C/2}n^{1-\alpha}$ \whp~by \cref{the:degeneracy-upper}. It follows $\degen/|K| \in \omega(1)$~\whp 

\noindent\textbf{Case 2}~[$r_K \leq R/2 + o(1)$]: 
Observe that the size of any clique $K \subseteq V \cap \mathcal{B}_0(r)$ is upper bounded by $X = |V \cap \mathcal{B}_0(r)|$.
By \cref{eq:measure-of-origin-ball} we have $\mu(\mathcal{B}_0(r_K)) \leq (1+o(1))n^{-\alpha}e^{-\alpha C/2}$. Hence, we get $\mathbb{E}[X] \leq (1+o(1))n^{1-\alpha}e^{-\alpha C/2}$. Since $X$ is a binomial random variable we can apply a Chernoff bound, which yields 
$|K| \leq X \le (1+o(1))n^{1-\alpha}e^{-\alpha C/2}$ \whp
Taking $\eps = \delta/(1+\delta)$ with $\delta$ as in \cref{the:degeneracy-upper}, we have
$$\frac {\abs{K}}{1-\eps} 
    \le (1+o(1))(1+\delta)n^{1-\alpha}e^{-\alpha C/2}
    < (1+o(1))\kappa(G)\text{ \whp}$$

\noindent\textbf{Case 3}~[$r_K \in R/2 + \Theta(1)$]: Let $\xi \in o(1)$. Using \cref{lem:case3,lem:deg-lower-inner}, we get \whp~that $$\clique = |K| \leq \frac{1-\varepsilon}{1-\xi}|\innerdeg(u)| \leq \frac{1-\varepsilon}{1-\xi}|\innerdeg(\ustar)| \leq (1-\varepsilon)\degen,$$
for an adequate choice of $\xi$.
\end{proof}

\begin{corollary}\label{cor:clique-upper-bound}
Let $G\sim\mathcal G(n,\alpha, C)$ be a threshold HRG.
Then there exists a constant $\eps\in(0,1)$ such that \whp,
$
   \clique \leq (4(1-\eps)/3)^{\alpha} \coresize. 
$
\end{corollary}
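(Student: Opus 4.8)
The plan is to obtain this as a one-line consequence of the two preceding results, \cref{the:clique-deg-gap} and \cref{the:degeneracy-upper}, followed by a short algebraic rearrangement that folds the gap constant into the base of the exponential $(4/3)^{\alpha}$. First I would invoke \cref{the:clique-deg-gap} to fix a constant $\varepsilon_0\in(0,1)$ with $\kappa(G)/\omega(G) > 1+\varepsilon_0$, so that $\omega(G) < \kappa(G)/(1+\varepsilon_0)$ holds \whp Then I would substitute the degeneracy upper bound $\kappa(G) \le ((4/3)^{\alpha}+o(1))\coresize$ from \cref{the:degeneracy-upper}, which also holds \whp Taking a union bound over these two \whp events yields
$$
\clique \;<\; \frac{(4/3)^{\alpha} + o(1)}{1+\varepsilon_0}\,\coresize \qquad \text{\whp}
$$
Note that $\coresize$ never needs to be evaluated, since both bounds are purely multiplicative in $\coresize$.

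The remaining task is to absorb the factor $1/(1+\varepsilon_0)$ into the base $(4/3)^{\alpha}$. Writing the target bound as $(4(1-\eps)/3)^{\alpha} = (4/3)^{\alpha}(1-\eps)^{\alpha}$, it suffices to pick a constant $\eps\in(0,1)$ with $(1-\eps)^{\alpha} \ge 1/(1+\varepsilon_0)$, i.e.\ $\eps \le 1 - (1+\varepsilon_0)^{-1/\alpha}$. Because $\alpha\in(1/2,1)$ the exponent $1/\alpha$ is a finite positive constant and $\varepsilon_0>0$, hence $(1+\varepsilon_0)^{-1/\alpha}\in(0,1)$ and the extremal value $\eps^{*} \coloneqq 1 - (1+\varepsilon_0)^{-1/\alpha}$ is a strictly positive constant, indeed in $(0,1)$. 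I would then fix any constant $\eps$ with $0 < \eps < \eps^{*}$; since $1-\eps > 1-\eps^{*}$ we get the \emph{strict} inequality $(1-\eps)^{\alpha} > (1-\eps^{*})^{\alpha} = 1/(1+\varepsilon_0)$, so $(4/3)^{\alpha}(1-\eps)^{\alpha} > (4/3)^{\alpha}/(1+\varepsilon_0)$ with a constant multiplicative slack. This slack dominates the vanishing $o(1)$ term for all sufficiently large $n$, giving $\clique \le (4(1-\eps)/3)^{\alpha}\coresize$ \whp

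There is no genuine obstacle here beyond bookkeeping: the statement is essentially a reformulation of the clique–degeneracy gap in terms of the core size, and the only points requiring care are (i) checking $\eps\in(0,1)$, which follows because $\alpha$ is bounded away from $0$ and $\varepsilon_0$ is a positive constant, and (ii) making sure the $o(1)$ error inherited from \cref{the:degeneracy-upper} is swallowed by choosing $\eps$ strictly below its extremal value $\eps^{*}$. Since both input statements hold \whp and \whp is closed under finite union bounds, the conclusion holds \whp as well.
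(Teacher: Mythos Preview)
Your proposal is correct and takes essentially the same approach as the paper, which simply writes ``This follows from \cref{the:degeneracy-upper,the:clique-deg-gap}.'' You have spelled out the algebraic bookkeeping that the paper leaves implicit, but the logical route is identical.
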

\begin{proof}
This follows from \cref{the:degeneracy-upper,the:clique-deg-gap}.
\end{proof}

\subsection{Cliques larger than the Core}\label{sub:position-clique-hrg}

In this section, we show that there exist a super-constant number of unique cliques that contain the core, but are strictly larger than it. The overall argument goes as follows. We consider a set of vertices with radial coordinates slightly outside the core. We show that any vertex of this set has a constant probability to be adjacent to all vertices belonging to the core, and thus induces a clique that is larger than the core itself.  To this end, the following lemma concerned about points close to the core proves useful.
\begin{restatable}{lemma}{forbiddenAngle}\label{lem:theta-lower-xi-gap}
Let $k \in \mathbb{N} \setminus \{0\}$ and $\xi_k = \log(1 + \frac{\log^k(n)}{n^{1-\alpha}}) \in o(1)$. Consider two points with radial coordinates $r = R/2 + \xi_k$ and $x = R/2$. Then $\theta_R(r,x) \geq \pi - 2\sqrt{\log^k(n) n^{\alpha-1}}$.     
\end{restatable}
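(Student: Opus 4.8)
The plan is to convert the angular lower bound into an upper bound on a cosine and then estimate that cosine to second order in the small parameter $t := \log^k(n)\,n^{\alpha-1}$. Writing $x = R/2$ and recalling the definition of $\theta_R$, we have
\begin{equation*}
\cos\bigl(\theta_R(r,R/2)\bigr) = \frac{\cosh(r)\cosh(R/2) - \cosh(R)}{\sinh(r)\sinh(R/2)} =: f(r).
\end{equation*}
Since $\cos$ is strictly decreasing on $[0,\pi]$ and every angular distance lies in this interval, the target $\theta_R(r,R/2) \ge \pi - 2\sqrt t$ is equivalent to $f(r) \le \cos(\pi - 2\sqrt t) = -\cos(2\sqrt t)$, that is, to
\begin{equation*}
1 + f(r) \le 1 - \cos(2\sqrt t) = 2\sin^2(\sqrt t).
\end{equation*}
Thus the whole statement reduces to an upper bound on $1 + f(r)$.

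Next I would evaluate $1+f(r)$ in closed form. Substituting $r = R/2 + \xi_k$ with $e^{\xi_k} = 1+t$, applying the addition formulas for $\cosh$ and $\sinh$, and using the identity $\cosh(R) = 2\cosh^2(R/2) - 1$, a direct calculation (writing $a := \cosh(R/2)$ and $b := \sinh(R/2)$) gives
\begin{equation*}
1 + f(r) = \frac{t\bigl[(a+b)^2\,t + 4ab\bigr]}{b\bigl[2b + (a+b)(2+t)\,t\bigr]}.
\end{equation*}
This vanishes at $t=0$, matching the fact that $\theta_R(R/2,R/2) = \pi$, and its leading term as $t\to 0$ is $\tfrac{2a}{b}\,t = 2\coth(R/2)\,t$.

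It then remains to verify $1 + f(r) \le 2\sin^2(\sqrt t)$ for all large $n$. Since $\alpha \in (1/2,1)$ we have $t = o(1)$, so I would compare the two sides through their expansions: $2\sin^2(\sqrt t) = 2t - \tfrac23 t^2 + O(t^3)$, whereas the rational expression above expands as $\tfrac{2a}{b}\,t\bigl(1 - (1+o(1))t\bigr)$, whose quadratic coefficient (after using $a/b \to 1$) is close to $-2$. Because $-2 < -\tfrac23$, the quadratic term of $1+f(r)$ is strictly more negative, leaving a slack of order $t^2$ in our favour. The only competing error is that $2\coth(R/2) = 2\bigl(1 + O(e^{-R})\bigr)$ slightly exceeds $2$; but $e^{-R} = \Theta(n^{-2})$, and since $\alpha > 1/2$ one has $n^{-2} t = o(t^2)$, so this positive correction is dominated by the $\Theta(t^2)$ slack. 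Hence $1 + f(r) \le 2\sin^2(\sqrt t)$ for $n$ sufficiently large, which is exactly the claim.

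The main obstacle is that the inequality is tight at first order: both $1 + f(r)$ and $2\sin^2(\sqrt t)$ equal $2t\,(1+o(1))$, so no crude estimate will suffice. The proof really hinges on the second-order bookkeeping --- confirming that the $t^2$ coefficient of $1+f(r)$ is strictly below that of $2\sin^2(\sqrt t)$, and simultaneously that the exponentially small term $\coth(R/2) - 1 = \Theta(n^{-2})$ is negligible against the resulting $\Theta(t^2)$ gap. This last comparison is precisely where the hypothesis $\alpha > 1/2$ enters.
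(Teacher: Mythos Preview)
Your proof is correct and follows essentially the same route as the paper: both reduce to upper-bounding $\cos\theta_R(r,R/2)$ near $-1$ and then make a second-order Taylor comparison, the paper doing so after first replacing hyperbolic functions by exponentials and then simply asserting ``a Taylor expansion around $n=\infty$ now yields'' the claim, whereas you keep the exact rational expression for $1+\cos\theta_R$ and make the second-order bookkeeping explicit. One minor correction: the final estimate $e^{-R}t = o(t^2)$ only needs $n^{-2}=o(t)$, which already holds for any $\alpha>-1$; it is $\alpha<1$ (ensuring $t=o(1)$), not $\alpha>1/2$, that is the relevant model hypothesis here.
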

Thus, when $r$ approaches $R/2$ from above, the angle $\theta_R(r, R/2)$ approaches $\pi$ from below.

\begin{proposition}\label{prop:larger-cliques}
Let $G\sim\mathcal{G}(n, \alpha, C)$ be a threshold HRG. 
Then \whp there exist $\omega(\log(n))$ cliques that are larger than $\coresize$. 
\end{proposition}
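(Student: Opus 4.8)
The plan is to exhibit $\omega(\log n)$ vertices sitting just outside the core, each adjacent to \emph{every} core vertex; since the core is a clique, adjoining any such vertex $v$ yields a clique of size $\coresize+1$, and distinct choices of $v$ give distinct cliques, all strictly larger than $\coresize$. Concretely, I fix $k=2$ and work in the thin outer annulus $\mathcal A^+=\{x\in\disk: R/2<r(x)\le R/2+\xi_k\}$. By \cref{eq:measure-of-origin-ball} together with $\xi_k=\log(1+\log^k(n)/n^{1-\alpha})$, one gets $n\mu(\mathcal A^+)=\Theta(\log^2 n)$, so a Chernoff bound (\cref{lem:Chernoff}) gives $|V\cap\mathcal A^+|=\Theta(\log^2 n)$ \whp It therefore suffices to show that \whp a constant fraction of these annulus vertices are adjacent to the entire core.

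Next I would reduce ``adjacent to the whole core'' to a purely angular event. A vertex $v$ with $r(v)=y\in(R/2,R/2+\xi_k]$ is automatically adjacent to every core vertex $w$ with $r(w)\le R/2-\xi_k$, since then $\dist(v,w)\le r(v)+r(w)\le R$. The only threatening core vertices lie in the inner annulus $\mathcal A^-=\{R/2-\xi_k<r\le R/2\}$, and such a $w$ fails to be adjacent to $v$ only if $\varphi(w)$ falls in an antipodal arc of width $2(\pi-\theta_R(y,r(w)))$. By monotonicity of $\theta_R$ in both arguments and \cref{lem:theta-lower-xi-gap}, each such arc has width at most $2(\pi-\theta_R(R/2+\xi_k,R/2))\le 4\sqrt{\log^k(n)\,n^{\alpha-1}}$. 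Writing $N^-=|V\cap\mathcal A^-|$, the total angular measure forbidden to $v$ is thus at most $S:=4\sqrt{\log^k(n)\,n^{\alpha-1}}\cdot N^-$, and $v$ is adjacent to the whole core whenever its (uniform) angle avoids this set.

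I would then show the core is \emph{typical}, i.e.\ $S=o(1)$, \whp Since $n\mu(\mathcal A^-)=\Theta(\log^2 n)$ as well, a Chernoff bound gives $N^-=O(\log^2 n)$ \whp; as $\sqrt{\log^k(n)\,n^{\alpha-1}}=\Theta(\log(n)\,n^{-(1-\alpha)/2})$ for $k=2$, this yields $S=O(\log^3(n)\,n^{-(1-\alpha)/2})=o(1)$ \whp Finally I condition on the core, that is, on the positions of all vertices of radius at most $R/2$. On the typical-core event the non-core vertices are independent, and each lands in $\mathcal A^+$ and avoids its forbidden set — hence is adjacent to the whole core — with probability at least $(1-S/\pi)\,\mu(\mathcal A^+)/\mu(\{r>R/2\})=(1-o(1))\,\Theta(\log^2 n)/n$. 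The number of such ``good'' annulus vertices is then a sum of independent indicators with mean $\Theta(\log^2 n)$, so a Chernoff bound shows it is $\Theta(\log^2 n)=\omega(\log n)$ with conditional probability $1-e^{-\Theta(\log^2 n)}\ge 1-n^{-c}$. Combining this with the \whp typicality of the core finishes the proof.

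The main obstacle is obtaining \emph{extremely} high probability rather than merely high probability. A direct first-moment computation shows the expected number of non-adjacent (annulus vertex, core vertex) pairs tends to $0$, but only polynomially fast, so Markov's inequality alone gives failure probability $n^{-\Theta(1-\alpha)}$, which is not \whp The remedy is the two-stage conditioning above: the forbidden angular measure $S$ is controlled \whp via the inner-annulus count $N^-$, whose mean $\Theta(\log^2 n)$ is large enough for Chernoff to beat every $n^{-c}$; and then, conditionally on a typical core, the good-vertex count becomes a genuine sum of independent indicators with mean $\omega(\log n)$, which is precisely what upgrades the conclusion to extremely high probability.
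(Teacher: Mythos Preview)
Your proposal is correct and reaches the same construction as the paper---many vertices just outside the core, each adjacent to the entire core, hence core $\cup\{v\}$ is a clique of size $\coresize+1$---but the probabilistic packaging is genuinely different.

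The paper Poissonises, takes the outer annulus of mass $\Theta(\log^3 n/n)$ (i.e.\ $k=3$), splits it into $\Theta(\log^{3/2} n)$ angular sectors, and uses Markov to show that a fixed annulus vertex has an empty forbidden region inside the core with probability $1-o(1)$. The sector partition is what buys independence: vertices in non-adjacent sectors have disjoint forbidden regions, and together with Poisson independence this gives $\Theta(\log^{3/2} n)$ independent indicators on which Chernoff can be run. Your two-stage conditioning replaces all of this machinery. By first freezing the core, you reduce ``adjacent to the whole core'' to a purely angular event whose failure set has total length $S=O(\log^3 n\cdot n^{-(1-\alpha)/2})=o(1)$ on the \whp\ event $N^-=O(\log^2 n)$; then, conditionally on the core, the non-core vertices are genuinely i.i.d., so the good-vertex count is binomial with mean $\Theta(\log^2 n)$ and Chernoff applies directly. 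This avoids Poissonisation and the sector decomposition, works already with the thinner $k=2$ annulus, and in fact shows each annulus vertex is good with conditional probability $1-o(1)$ rather than merely $\Omega(1)$. The paper's route, on the other hand, is somewhat more modular: it only needs a first-moment (Markov) bound on a single forbidden region and never has to track the entire core configuration at once. Both arguments crucially need the two layers of concentration (the annulus counts at scale $\omega(\log n)$ and then a second Chernoff) to upgrade from polynomial to extremely high probability, a point you identify correctly.
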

\begin{proof}
  \begin{figure}[t]
    \centering \includegraphics[height=0.24\textheight]{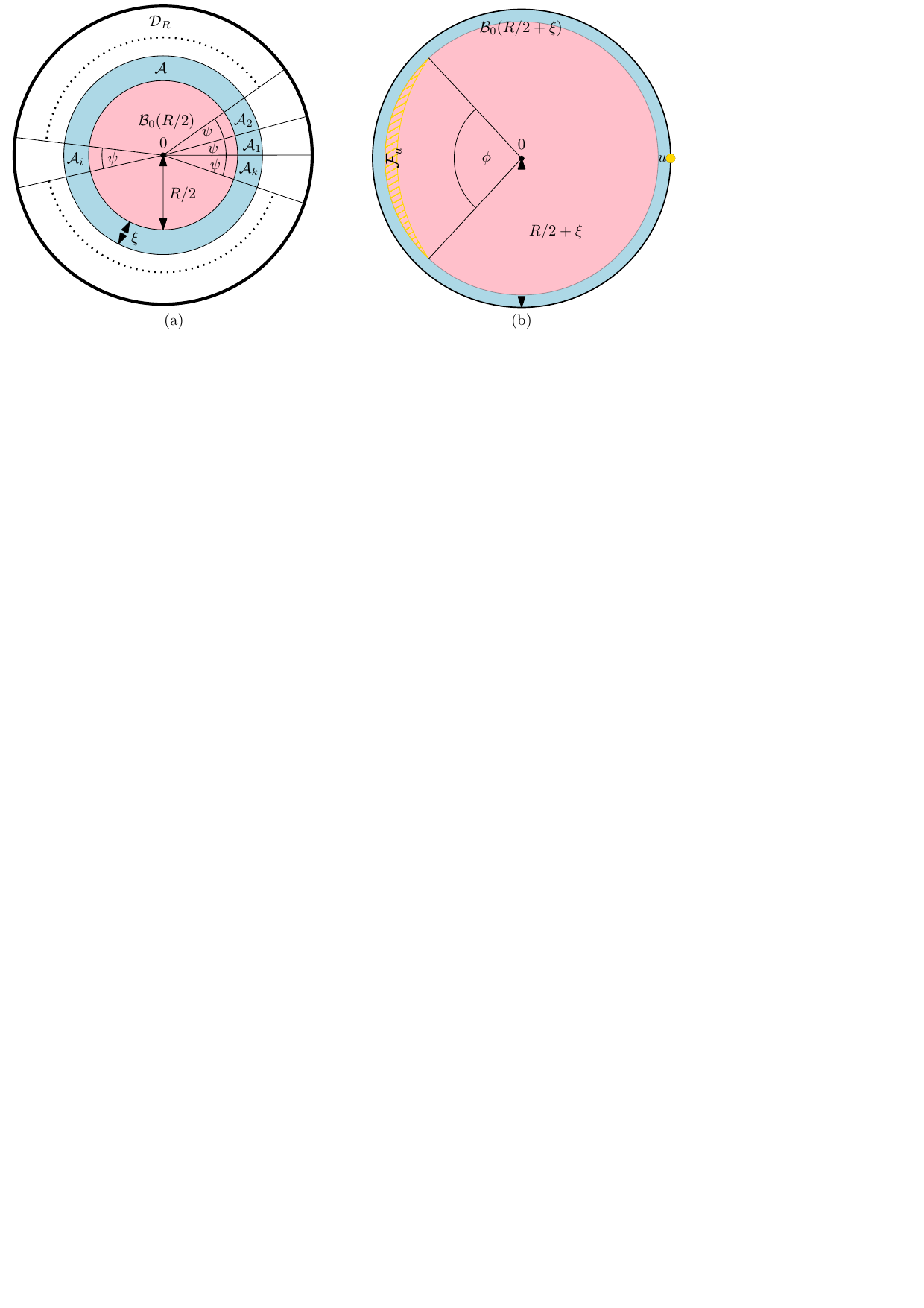}
    \caption{Sketch of the proof idea for \cref{prop:larger-cliques}. (a) Illustration of the set of points $\mathcal{A}$ (blue) of width $\xi$ slightly outside of the core (pink). The intersection with a sector of angle $\psi$ and the band $\mathcal{A}$ forms a box $\mathcal{A}_i$ that contains a vertex \whp The number of non-intersecting boxes is $k = 2\pi/\psi \in \omega(\log(n))$. 
    (b) A vertex $u$ located on the boundary of the area $\mathcal{A}$. The hatched area $\mathcal{F}_u$ with angle at most $\phi$ is the corresponding forbidden area of $u$. Any point in $\forbidden$ has distance at least $R$ to $u$. An adequate choice for the width $\xi$ yields that this area is empty with constant probability.}
    \label{fig:forbidden-area}
  \end{figure}
For the proof we  consider the Poissonized version of the HRG model (see e.g. \cite{Kiwi2019, Kiwi2024}). The upshot of this model is that it allows us to analyse disjoint areas in the hyperbolic disk independently.
Since the final result holds \whp this directly carries over to the uniform model \whp (\cite[Lemma 3.9]{Katzmann2023}). We start by defining an area $\mathcal{A}$ close to the core. Let $\xi = \log\left(1 + \frac{\log^3(n)}{n^{1-\alpha}}\right) \in o(1)$ and consider a band of points $\mathcal{A} := \mathcal{B}_0(R/2 + \xi)\setminus\mathcal{B}_0(R/2)$. We see via $R = 2\log(n) + C$ that
\begin{align}\label{eq:bound-band}
\begin{split}
    \mu(\mathcal{A}) 
    &= \int_{R/2}^{R/2 + \xi} \frac{\sinh(\alpha x)}{\cosh(\alpha R) - 1} 
    = \frac{\cosh(\alpha(R/2 + \xi)) - \cosh(\alpha R/2)}{\cosh(\alpha R) - 1}\\
    &= (1+\Theta(e^{-\alpha R}))(e^{-\alpha(R/2 - \xi)} - e^{-\alpha R/2})\\
    &= (1+\Theta(e^{-\alpha R}))(n^{-\alpha}e^{-\alpha C/2}{\left(1+\log^3(n)n^{\alpha-1}\right)} - n^{-\alpha}e^{-\alpha C/2})\\
    &= (1+\Theta(e^{-\alpha R}))\big(e^{-\alpha C/2}\log^3(n)n^{-1}\big) \in \Theta\left(\log^3\left(n\right)/n\right).
\end{split}
\end{align}
Let $A = V \cap \mathcal{A}$. Then $\mathbb{E}[|A|]\in \Theta\left(\log^3(n)\right)$. We further partition the band $\mathcal{A}$ into $k = \lceil\log^{3/2}(n)\rceil$ sectors $\mathcal{A}_1,\ldots,\mathcal{A}_k$, each of equal size (see \cref{fig:forbidden-area}~a), and 
$A_i = V \cap \mathcal{A}_i$. Since $\mathbb{E}[|A|]\in \Theta\left(\log^3(n)\right)$ and $k = \log^{3/2}(n)$, we have for any $i \in [k]$ that $\Exp[|A_i|] = \frac{\mathbb{E}[|A|]}{k} \in \omega(\log(n))$.
Since we are in the Poissonized model we get 
$
\Prob{|A_i| =  0} = e^{-\E{|A_i|}} \in n^{-\omega(1)}.
$
Subsequently, a union bound yields that there is no sector $\mathcal{A}_i$ that is empty \whp

In the second step, we show that for any vertex $u \in A$, the probability that there exists a vertex in its \emph{forbidden area} $\forbidden := \{x \in \B_0(R/2) : \dist(u,x)> R\}$ (see \cref{fig:forbidden-area}~b) is strictly less than $1$. Since $r(u) \leq R/2 + \xi$, we have that $u \in A$ has distance at most $R$ (and thus, an edge) to any vertex in the area $\mathcal{B}_0(R/2 - \xi)$. Hence, we have $\forbidden \subset \mathcal{B}_0(R/2) \setminus \mathcal{B}_0(R/2 - \xi)$. Now, for $R/2 \geq x \geq R/2 - \xi$ we seek to find the angle size $\phi = 2(\pi - \theta_R(r(u),x))$ in order to upper bound $\mu(\forbidden)$. Since $\theta_R(r,x)$ is monotonically decreasing in $x$ we have $\phi \leq 2(\pi - \theta_R(R/2 + \xi, R/2))$ and we obtain
\begin{align*}
    \mu(\forbidden) 
    = \int_{R/2 - \xi}^{R/2}\int_0^{\phi} \rho(x)\dd\phi \dd x
    &\leq 2(\pi - \theta_R(R/2 + \xi, R/2)) \int_{R/2}^{R/2 + \xi}\rho(x) \dd x\\ 
    &\leq 2(\pi - \theta_R(R/2 + \xi, R/2)) \mu(\mathcal{A}),
\end{align*}
where we used $\int_{R/2 - \xi}^{R/2}\rho(x) \dd x \leq \int_{R/2}^{R/2 +\xi}\rho(x) \dd x = \mu(\mathcal{A})$.
By our choice of $\xi = \log(1 + \log^3(n)n^{\alpha-1}) \in o(1)$, we can apply \cref{lem:theta-lower-xi-gap} and obtain $\theta_R(R/2 + \xi, R/2) \geq \pi - 2\sqrt{\frac{\log^3(n)}{n^{1-\alpha}}}$. In \cref{eq:bound-band} we established that $\mu(\mathcal{A}) \in \Theta\left(\log^3\left(n\right)/n\right)$. Combining the two leads to $\mu(\forbidden) \in \mathcal{O}\Big(\sqrt{\log^9(n)n^{\alpha-3}}\Big)$.

Notice that for $\alpha \in (1/2, 1)$, the measure of the forbidden area of a vertex $u \in A$ is $\mu(\forbidden) \in o(1/n)$. Hence, writing $F = V \cap \forbidden$,
we get $\mathbb{E}[|F|] \in o(1)$, i.e., the expected number of vertices in $\forbidden$ is vanishing. Applying Markov's inequality then gives us  $\Prob{|F| \geq 1} \in o(1)$. Thus $p \coloneqq \P(\abs{F} < 1) \in \Omega(1)$, so the forbidden area is empty with constant probability.

We now establish our third and final desired property. Namely, we construct a subset $U \subset A$, consisting of vertices whose forbidden areas are empty and disjoint and for which $\E{\abs{U}} \in \omega(\log(n))$.

Recall that we partitioned $\mathcal{A}$ into $k = \log^{3/2}(n) \in \omega(\log(n))$ sectors. Let $X_i$ be the indicator that there exists a vertex $u\in\mathcal A_i$ whose forbidden area $\forbidden$ is empty.
Then $X = \sum_{i=1}^{\floor{k/2}} X_{2i}$ is a loose lower bound on the 
number of sectors with this property. By linearity of expectation, $p$ constant and $k = \log^{3/2}(n)$ we then obtain 
$$
    \mathbb{E}[X] \geq \mathbb{E}\Big[\sum_{i=1}^{\floor{k/2}}X_{2i}\Big] \geq p\cdot\Theta(1)\log^{3/2}(n) \in \omega(\log(n)).
$$
We proceed by showing independence among the random variables $X_i$ and $X_j$ for $\abs{j-i}>1$. To this end, observe that the angle $\psi$ (see \cref{fig:forbidden-area}~b) spanned by any sector $\mathcal{A}_i$ is $\psi = 2\pi/k \geq \left\lfloor2\pi\log^{-3/2}(n)\right\rfloor$. In contrast, we recall $\phi \leq 2(\pi - \theta_R(R/2 + \xi, R/2)) \leq 4\sqrt{\log^3(n)n^{\alpha-1}}$.
Since $\sqrt{\log^3(n)n^{\alpha-1}} \in o(\log^{-3/2}(n))$ we conclude $\phi \in o(\psi)$. 
This implies that the forbidden areas $\mathcal F_u$ and $\mathcal F_v$ of any $u\in \mathcal A_i$, $v\in\mathcal A_j$ are disjoint. Thus $X_i$ and $X_j$ are independent.

To wrap things up, recall that in the first step we established that each sector $\mathcal{A}_i$ contains a vertex \whp Moreover, since the $X_i$ are independent, we have by a Chernoff bound 
that $X \in \omega(\log(n))$ \whp Though these two events are not independent, we can apply the union bound to their complements to obtain that \whp $\omega(\log(n))$ vertices outside of $\mathcal{B}_0(R/2)$ are adjacent to all vertices in $\mathcal{B}_0(R/2)$, which finishes the proof.   
\end{proof}

\subsection{Upper Bound on the Clique Number}\label{sec:clique-upper-bound}

Recall that two vertices $u,v$ are adjacent if and only if $\dist(u,v) \leq R$ and that we call the clique formed in $\mathcal{B}_0(R/2)$ the core whose size is $\coresize = (1-o(1))e^{-\alpha C/2}n^{1-\alpha}$ \whp The core size is a lower bound for the clique number $\clique$. We have established that the largest clique is smaller than the degeneracy \whp (\cref{the:clique-deg-gap}), and in this section we further investigate an upper bound for \clique. We note that the upper bound we derive in this section implies \cref{the:clique-deg-gap} for $\alpha$ large enough (see~\cref{fig:plots}). However, for smaller $\alpha$, the upper bound for $\clique$ is larger than the lower bound (\cref{the:degeneracy-upper}) for \degen in the HRG model, and thus does not directly imply \cref{the:clique-deg-gap} for these values for $\alpha$.

Before going into details, we lay out our proof strategy. We aim to bound the region where a clique can be located. Since vertices are adjacent if and only if their (hyperbolic) distance is at most $R$, this can be done by characterising a shape that covers any hyperbolic region of diameter $R$. 
A classic result by Jung \cite{Jung1901} answers the question of how large the radius of a ball in Euclidean space needs to be at most, so that its interior can contain an entire set of points of fixed diameter.
The hyperbolic version of this result was discovered by Dekster \cite{Dekster1995, Dekster1997} nearly a century later. He extended Jung's result to (among other geometries) hyperbolic space and we apply it as follows: we identify $\mathcal{O}(n^3)$ many balls where one of these balls contains the clique of largest size \clique. This clique (and all the other identified cliques) needs to be located in a ball $B_x(r)$ with radius $r$ large enough. We use the hyperbolic variant of Jung's theorem to upper bound $r$ which, in turn, allows us to upper bound the area of this ball. This yields an upper bound for the amount of vertices one such ball $B_x(r)$ could contain \whp, leading to an upper bound for \clique. Since we only need to consider at most $\mathcal{O}(n^3)$ balls, a union bound is sufficient to derive the same bound for the worst case. We work with the following version of Jung's theorem for hyperbolic geometry.

\begin{theorem}\cite[Theorem 2]{Dekster1995}\label{the:jung-hyperbolic}
    Let $\mathcal{K} \subset \mathbb{H}^d$ be compact and suppose that for any $y,z \in \mathcal{K}$, $\dist(y,z) \leq D$. Then there exists $x \in \mathbb{H}^d$ such that $\mathcal{K}\subseteq\mathcal{B}_x(r)$ for $r$ satisfying 
    $$
        D\geq 2\sinh^{-1}\left(\sqrt{\frac{d+1}{2d}}\sinh(r)\right).
    $$
\end{theorem}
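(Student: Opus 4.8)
The plan is to reduce the statement to the optimality condition for the \emph{minimal enclosing ball} of $\mathcal K$ and then to an elementary inequality about unit vectors summing to zero. First I would note that since $\mathbb H^d$ is a Hadamard manifold (complete, simply connected, curvature $-1<0$), the function $x\mapsto\max_{y\in\mathcal K}\dist(x,y)$ is continuous and strictly convex along geodesics, so by compactness of $\mathcal K$ it attains a unique minimiser $x$; let $r$ be its value, so that $\mathcal K\subseteq\mathcal B_x(r)$ and no smaller ball suffices. This $r$ is the circumradius, and the theorem amounts to bounding it by $D$.

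The next step is the variational (first-order) characterisation of $x$. Let $S=\mathcal K\cap\partial\mathcal B_x(r)$ be the contact set and, for $p\in S$, let $v_p\in T_x\mathbb H^d$ be the unit tangent vector at $x$ pointing along the geodesic toward $p$. The first variation of distance gives that moving from $x$ in a direction $w$ changes $\dist(x,p)$ at first order by $-\langle w,v_p\rangle$. I claim $0$ lies in the closed convex hull of $\{v_p:p\in S\}$: otherwise a separating-hyperplane argument produces a $w$ with $\langle w,v_p\rangle>0$ uniformly over $S$, and a short geodesic step in direction $w$ strictly decreases $\max_y\dist(x,y)$ (using continuity to control the non-contact points), contradicting minimality. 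By Carathéodory's theorem I may then extract at most $d+1$ contact points $p_0,\dots,p_m$ with $m\le d$ and convex weights $\lambda_i\ge 0$, $\sum_i\lambda_i=1$, such that $\sum_i\lambda_i v_{p_i}=0$.

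The core of the argument is now purely linear. Writing $v_i:=v_{p_i}$ and expanding $0=\lvert\sum_i\lambda_i v_i\rvert^2=\sum_i\lambda_i^2+\sum_{i\neq j}\lambda_i\lambda_j\langle v_i,v_j\rangle$, and letting $\beta=\min_{i\neq j}\langle v_i,v_j\rangle$, one gets $0\ge\sum_i\lambda_i^2+\beta(1-\sum_i\lambda_i^2)$, whence $\beta\le -\sum_i\lambda_i^2/(1-\sum_i\lambda_i^2)\le -1/m\le -1/d$ (using $\sum_i\lambda_i^2\ge 1/(m+1)$ and $m\le d$). Thus some pair $p_i,p_j$ subtends an angle $\gamma$ at $x$ with $\cos\gamma=\langle v_i,v_j\rangle\le -1/d$. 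Applying the hyperbolic law of cosines to the triangle with apex $x$ and two sides of length $r$,
\[
\cosh(\dist(p_i,p_j))=\cosh^2(r)-\sinh^2(r)\cos\gamma\ge\cosh^2(r)+\tfrac1d\sinh^2(r)=1+\tfrac{d+1}{d}\sinh^2(r).
\]
Since $p_i,p_j\in\mathcal K$ we have $\dist(p_i,p_j)\le D$, so $\cosh(D)\ge 1+\frac{d+1}{d}\sinh^2 r$; rewriting both sides via $\cosh t=1+2\sinh^2(t/2)$ yields $\sinh^2(D/2)\ge\frac{d+1}{2d}\sinh^2 r$, i.e.\ the claimed bound $D\ge 2\sinh^{-1}\!\big(\sqrt{(d+1)/(2d)}\,\sinh r\big)$. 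Equality corresponds to a regular simplex inscribed in $\partial\mathcal B_x(r)$.

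The step I expect to be the main obstacle is the variational characterisation in the curved setting: justifying existence and uniqueness of the circumcentre via geodesic convexity on the Hadamard manifold, and then rigorously turning the first-variation heuristic into ``$0\in\mathrm{conv}\{v_p\}$'' when the contact set $S$ may be infinite and when a geodesic perturbation of $x$ simultaneously affects the non-contact points of $\mathcal K$. Everything downstream — the Carathéodory reduction, the inner-product lemma, and the single application of the law of cosines — is elementary once this optimality condition is in hand.
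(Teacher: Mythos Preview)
The paper does not actually prove this theorem: it is quoted verbatim as \cite[Theorem 2]{Dekster1995} and used as a black box to derive the $d=2$ corollary (\cref{cor:working-horse}). So there is no ``paper's own proof'' to compare against.

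That said, your sketch is the standard route to Jung's inequality in constant-curvature spaces and is essentially correct. The reduction is: (i) existence and uniqueness of the circumcentre via geodesic convexity of $x\mapsto\max_{y\in\mathcal K}\dist(x,y)$ on the Hadamard manifold, (ii) the first-order optimality condition $0\in\mathrm{conv}\{v_p:p\in S\}$, (iii) Carath\'eodory down to $m+1\le d+1$ contact points, (iv) the inner-product lemma giving $\min_{i\neq j}\langle v_i,v_j\rangle\le -1/d$, and (v) one application of the hyperbolic law of cosines. Your computations in steps (iv)--(v) are clean and yield exactly the bound stated.

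You have also correctly located the only place where real care is needed: step (ii). The argument ``otherwise a separating hyperplane gives a direction $w$ that strictly decreases the max-distance'' requires a compactness argument to make uniform the improvement over the (possibly infinite) contact set $S$, together with a continuity argument to show that points of $\mathcal K$ near but not on $\partial\mathcal B_x(r)$ do not become new maximisers under a small geodesic perturbation. Both are routine (work inside a compact tube around $x$ and use uniform continuity of the first variation), but they are the part of the write-up that would need to be spelled out; once (ii) is in hand, the remainder is elementary. Dekster's original proof proceeds along the same lines.
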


In the hyperbolic plane $\mathbb{H}^2$, this simplifies to the following.

\begin{restatable}{corollary}{jungshyper}\label{cor:working-horse}
    Let $\mathcal{K} \subset \mathbb{H}^2$ be compact and suppose that for any $y,z \in \mathcal{K}$, $\dist(y,z) \leq R$. Then there exists $x \in \mathbb{H}^2$ such that $\mathcal{K}\subseteq\mathcal{B}_x(r)$ for $r$ satisfying $r\leq R/2 + \log(2/\sqrt{3})$.
\end{restatable}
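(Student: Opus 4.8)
The plan is to specialize \cref{the:jung-hyperbolic} to the case $d = 2$ and $D = R$ and then algebraically manipulate the resulting constraint on $r$ into the stated upper bound. First I would substitute $d=2$, which gives $\sqrt{(d+1)/(2d)} = \sqrt{3/4} = \sqrt{3}/2$, so the theorem yields existence of a center $x$ with $\mathcal K \subseteq \mathcal B_x(r)$ whenever $r$ satisfies
\begin{align*}
    R \geq 2\sinh^{-1}\!\left(\tfrac{\sqrt 3}{2}\sinh(r)\right).
\end{align*}
It suffices to exhibit \emph{some} value of $r$ with $r \le R/2 + \log(2/\sqrt 3)$ that satisfies this inequality, since enlarging the ball radius only makes it easier to contain $\mathcal K$. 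So I would show that the specific choice $r = R/2 + \log(2/\sqrt 3)$ already satisfies the constraint, which then certifies that the covering ball can be taken with radius at most this value.

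Substituting this $r$, the plan is to bound $\sinh(r)$ from above by $\tfrac12 e^{r} = \tfrac12 e^{R/2}\cdot (2/\sqrt3) = e^{R/2}/\sqrt3$, using $\sinh(t) \le \tfrac12 e^t$. Then $\tfrac{\sqrt3}{2}\sinh(r) \le \tfrac{\sqrt 3}{2}\cdot e^{R/2}/\sqrt3 = \tfrac12 e^{R/2}$. Applying the monotone increasing function $2\sinh^{-1}(\cdot)$, and using the bound $\sinh^{-1}(z) \le \log(2z)$ for $z$ large (or more simply $\sinh^{-1}(z) = \log(z + \sqrt{z^2+1}) \le \log(2z)$ when $z \ge 1$), I would obtain
\begin{align*}
    2\sinh^{-1}\!\left(\tfrac12 e^{R/2}\right) \le 2\log\!\left(2\cdot \tfrac12 e^{R/2}\right) = 2\log(e^{R/2}) = R,
\end{align*}
which is exactly the required inequality. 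Care is needed to ensure the bound $\sinh^{-1}(z)\le\log(2z)$ is applied only where valid, i.e. for $z\ge 1$, which holds since $e^{R/2}/2 = e^{\log n + C/2}/2 \to \infty$; alternatively one verifies the chain asymptotically, absorbing lower-order terms, since the statement is about the growing parameter $R = 2\log n + C$.

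The main obstacle is really just bookkeeping the direction of the inequalities correctly: the Jung constraint is a lower bound on $R$ in terms of $r$, so I must argue that my chosen $r$ makes the right-hand side at most $R$, and that any larger admissible covering radius is irrelevant because we only want an upper bound on the minimal covering radius. The $\sinh$ and $\sinh^{-1}$ estimates are elementary and both go the ``correct'' way (upper-bounding $\sinh$, upper-bounding $\sinh^{-1}$), so no subtlety arises there beyond confirming the regime of validity. One could alternatively solve the equation $R = 2\sinh^{-1}(\tfrac{\sqrt3}{2}\sinh r)$ exactly for $r$ and expand to leading order, but the inequality-chain approach above is cleaner and avoids grinding through the exact inverse, so that is the route I would take.
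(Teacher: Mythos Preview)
There is a genuine gap in the logic. Jung's theorem does \emph{not} say ``for every $r$ satisfying the displayed inequality there exists a covering ball of radius $r$''. It says: there is a covering ball whose radius $r$ satisfies the inequality. These are very different. For example, $r=0$ trivially satisfies $R \ge 2\sinh^{-1}(\tfrac{\sqrt3}{2}\sinh 0)=0$, yet $\mathcal K$ certainly need not fit in a point. So exhibiting a particular value of $r$ that satisfies the constraint tells you nothing about containment; what you must do is start from the inequality as a \emph{property of the (unknown) covering radius} and deduce an upper bound on that radius. That is exactly what the paper does: from $\sinh(R/2)\ge\tfrac{\sqrt3}{2}\sinh(r)$ it rewrites via $\sinh(x)=\tfrac12 e^x(1-e^{-2x})$ and, after handling the trivial case $r<R/2$, uses $r\ge R/2$ to control the residual $\log\bigl(\tfrac{1-e^{-2r}}{1-e^{-R}}\bigr)\ge 0$, yielding $r\le R/2+\log(2/\sqrt3)$.

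There is also a concrete sign error that reveals the problem. You claim $\sinh^{-1}(z)\le\log(2z)$, but in fact $\sinh^{-1}(z)=\log\bigl(z+\sqrt{z^2+1}\bigr)>\log(2z)$ for every $z>0$, since $\sqrt{z^2+1}>z$. With the corrected inequality your chain gives $2\sinh^{-1}\bigl(\tfrac{\sqrt3}{2}\sinh(R/2+\log(2/\sqrt3))\bigr)>R$, i.e.\ your chosen $r$ \emph{violates} the Jung constraint. That is actually the inequality one needs (it says $R/2+\log(2/\sqrt3)$ exceeds the Jung bound $r^\ast$, hence exceeds the covering radius), but it is the opposite of what you set out to prove and of what your ``enlarging the ball'' sentence suggests you intended.
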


\begin{proof}
Using that $d = 2$, we directly get from Theorem~\ref{the:jung-hyperbolic} for diameter $R$ that
$
    \frac{R}{2}\geq\sinh^{-1}(\sqrt{3/4}\sinh(r)).
$
Rearranging and using for $x \in \mathbb{R}$ that $\sinh(x) = \frac{1}{2}e^x(1-e^{-2x})$ yields
$$
    R/2 - r \geq \log(\sqrt{3/4}) + \log\left(\frac{\left(1 - e^{-2r}\right)}{\left(1 - e^{-R}\right)}\right).
$$
Solving for $r$ and using that $r \geq R/2 \geq \log(n) + C/2$ in conjunction with recalling that $C \in \Theta(1)$, it follows that
$
    r \leq R/2 + \log(2/\sqrt{3}).
$
\end{proof}

Our next observation follows from the definition of $\mu$, and formalises the intuition that $\mu$ puts more mass at the centre of the disk.
\begin{observation}\label{obs:area-of-disk}
Let $0<r\leq R$ and $u,v \in \disk$ with $r(u)\geq r(v)$. Then $
    \mu(\mathcal{B}_u(r)) \leq \mu(\mathcal{B}_v(r))$.
\end{observation}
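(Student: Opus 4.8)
My plan is to prove the inequality by a reflection (symmetrisation) argument that reduces the comparison of the two balls to a single monotonicity fact about the radial density. First, since $\rho$ depends only on the radial coordinate, $\mu$ is invariant under rotations about the origin, so I may rotate the configuration and assume that $u=(r_u,0)$ and $v=(r_v,0)$ lie on a common geodesic ray through the origin with $r_u = r(u)\ge r(v) = r_v$. Next I rewrite $\mu$ with respect to the hyperbolic area measure $\dd A = \sinh(s)\,\dd s\,\dd\varphi$: on $\disk$ we have $\mu(\mathcal S) = \int_{\mathcal S}\tilde\rho(s)\,\dd A$ with $\tilde\rho(s)=\rho(s)/\sinh(s)\propto \sinh(\alpha s)/\sinh(s)$, and I extend $\tilde\rho$ by $0$ outside $\disk$. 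A short computation (equivalent to checking $\alpha\coth(\alpha s)<\coth(s)$, which follows since $\tfrac12\sinh(2t)>t$) shows that $s\mapsto \sinh(\alpha s)/\sinh(s)$ is strictly decreasing for $\alpha\in(1/2,1)$; together with the cutoff at $R$ this makes $\tilde\rho$ a non-increasing function of the distance to the origin on all of $\Hb$. Because $\tilde\rho$ vanishes outside $\disk$, I may now treat $\mathcal B_u(r)$ and $\mathcal B_v(r)$ as full hyperbolic disks and write $\mu(\mathcal B_u(r)) = \int_{\mathcal B_u(r)}\tilde\rho(\dist(0,\cdot))\,\dd A$.

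The heart of the argument is the reflection $\sigma$ across the geodesic $m$ that is the perpendicular bisector of the segment $uv$. This is an isometry of $\Hb$ with $\sigma(u)=v$, hence $\sigma(\mathcal B_u(r))=\mathcal B_v(r)$, and it preserves $\dd A$. Moreover $\sigma$ maps the far part $\mathcal B_u(r)\setminus\mathcal B_v(r)$ bijectively onto the near part $\mathcal B_v(r)\setminus\mathcal B_u(r)$. Cancelling the common region $\mathcal B_u(r)\cap\mathcal B_v(r)$ and substituting $z\mapsto\sigma(z)$ in the integral over the near part gives
\begin{align*}
  \mu(\mathcal B_u(r)) - \mu(\mathcal B_v(r))
  = \int_{\mathcal B_u(r)\setminus\mathcal B_v(r)}\big(\tilde\rho(\dist(0,z)) - \tilde\rho(\dist(0,\sigma(z)))\big)\,\dd A(z).
\end{align*}

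It then remains to check that every integrand is non-positive. Any $z\in\mathcal B_u(r)\setminus\mathcal B_v(r)$ satisfies $\dist(u,z)<\dist(v,z)$, so $z$ lies strictly on the $u$-side of $m$, whereas the origin lies on the $v$-side because $r_v<r_u$. Since $\sigma$ fixes $m$ pointwise, the geodesic from $0$ to $z$ meets $m$ in a point $p$, and the triangle inequality gives $\dist(0,\sigma(z))\le \dist(0,p)+\dist(p,\sigma(z)) = \dist(0,p)+\dist(p,z) = \dist(0,z)$. As $\tilde\rho$ is non-increasing, $\tilde\rho(\dist(0,z))\le\tilde\rho(\dist(0,\sigma(z)))$, so the integrand is non-positive and $\mu(\mathcal B_u(r))\le\mu(\mathcal B_v(r))$.

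The only real obstacle is that the naive shell-by-shell estimate fails: the angular width of $\mathcal B_u(r)$ at a fixed radial shell is \emph{not} monotone in $r_u$ — it shrinks on the inner shells but grows on the outer shells — so one cannot conclude the inequality simply by comparing widths at each radius. The reflection argument sidesteps this entirely by pairing up the excess regions geometrically; the one computation to handle with care is the monotonicity of $\sinh(\alpha s)/\sinh(s)$, which is exactly where the assumption $\alpha<1$ enters (at $\alpha=1$ the density $\tilde\rho$ is constant and the two measures coincide, consistent with the claimed inequality degenerating to equality).
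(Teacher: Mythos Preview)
Your proof is correct. The paper itself does not prove this statement at all: it is labelled an ``Observation'' and is simply asserted to follow ``from the definition of $\mu$'', with the one-line gloss that $\mu$ ``puts more mass at the centre of the disk''. Your reflection/symmetrisation argument is a clean and rigorous way to make that intuition precise, and your remark that a naive shell-by-shell comparison of angular widths does \emph{not} work is on point and justifies the extra machinery.

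Two minor remarks. First, you silently pass from the hypothesis $r(u)\ge r(v)$ to the strict inequality $r_v<r_u$ when placing the origin on the $v$-side of the bisector; the case $r(u)=r(v)$ is of course trivial by rotational invariance, but it is worth saying so explicitly. Second, the monotonicity computation for $\sinh(\alpha s)/\sinh(s)$ only needs $\alpha<1$ (not the full $\alpha\in(1/2,1)$), which is consistent with your closing comment.
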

   
Recall that \coresize denotes the core size $|V \cap \mathcal{B}_0(R/2)|$ which is a lower bound for the clique number $\clique$ (see \cref{lem:lower-clique-number}), and that $\coresize = (1-o(1))e^{-\alpha C/2}n^{1-\alpha}$ \whp We state our upper bound relative to this lower bound.

\begin{theorem}[Clique upper bound]\label{the:clique-upper-bound}
    Let $G\sim\mathcal{G}(n,\alpha,C)$ be a threshold HRG with $\alpha \in (1/2,1)$. Then \whp 
    $$
        \clique 
        \leq \left((4/3)^{\alpha/2}+o(1)\right)\coresize.
    $$
\end{theorem}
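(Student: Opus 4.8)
The plan is to bound the region a clique can occupy and then count vertices there. First I would invoke \cref{cor:working-horse}: since every clique $K$ has pairwise hyperbolic distances at most $R$, it is a compact set of diameter at most $R$, and hence is contained in a ball $\B_x(r_J)$ of radius $r_J \le R/2 + \log(2/\sqrt 3)$. To turn this existential ball into a union-boundable family, I would use that the minimal enclosing ball of $K$ also has radius at most $r_J$ and, by the Carath\'eodory-type property of minimal enclosing balls in the plane, already equals the minimal enclosing ball of some subset $S \subseteq K$ of at most $d+1 = 3$ of its vertices. Ranging $S$ over all vertex subsets of size at most three yields a family of $\mathcal O(n^3)$ candidate balls $\{\B_S\}$, one of which contains the largest clique and has radius at most $r_J$.

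Next I would bound the measure of any such admissible ball. For a subset $S$ of pairwise-adjacent vertices, \cref{cor:working-horse} gives that $\B_S$ has radius at most $r_J$, and \cref{obs:area-of-disk} lets me recentre at the origin (and enlarge the radius to $r_J$), so that $\mu(\B_S) \le \mu(\B_0(r_J))$. Plugging $r_J = R/2 + \log(2/\sqrt 3)$ into \eqref{eq:measure-of-origin-ball} and using the identity $(2/\sqrt 3)^\alpha = (4/3)^{\alpha/2}$, I obtain
\begin{align*}
\mu(\B_0(r_J)) = (1-o(1))\,e^{-\alpha(R - r_J)} = (1-o(1))\,(4/3)^{\alpha/2}\,\mu(\B_0(R/2)),
\end{align*}
which is exactly where the leading constant comes from. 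Consequently the expected number of vertices inside any admissible ball is at most $(1-o(1))(4/3)^{\alpha/2}\,\Exp[\coresize] \in \Theta(n^{1-\alpha})$.

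For the concentration step I would fix a subset $S$ of at most three vertices and condition on their positions; then $\B_S$ is determined, and the remaining $n - |S|$ vertices land in it independently, so $|V \cap \B_S|$ is dominated (up to the at most three vertices of $S$ itself) by a binomial with mean at most $(1-o(1))(4/3)^{\alpha/2}\Exp[\coresize]$. Because this mean is polynomially large (as $\alpha < 1$), a Chernoff bound (\cref{lem:Chernoff}) with slack $\eps = 1/\log n$ makes the probability of exceeding $(1+o(1))$ times the mean at most $e^{-n^{\Omega(1)}}$, which easily survives a union bound over the $\mathcal O(n^3)$ choices of $S$. Hence \whp every admissible ball contains at most $(1+o(1))(4/3)^{\alpha/2}\Exp[\coresize]$ vertices. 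Combining this with the concentration $\coresize = (1-o(1))\Exp[\coresize]$ \whp, and with the fact that $K$ lies inside one such ball, yields $\clique \le ((4/3)^{\alpha/2}+o(1))\coresize$ \whp.

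The step I expect to be the main obstacle is making the covering rigorous: converting the purely existential Jung ball into $\mathcal O(n^3)$ concrete, vertex-determined balls requires the hyperbolic analogue of the Euclidean fact that a minimal enclosing ball in the plane is pinned down by at most three of its boundary points, together with a clean argument that this vertex-determined ball still has radius controlled by $r_J$. Everything downstream — the measure estimate, the conditional Chernoff bound, and the union bound — is routine once this covering is in place.
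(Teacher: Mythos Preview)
Your proposal is correct and follows essentially the same route as the paper: cover every clique by one of $\mathcal O(n^3)$ vertex-determined balls whose radius is controlled by the hyperbolic Jung bound (\cref{cor:working-horse}), recentre each admissible ball at the origin via \cref{obs:area-of-disk}, read off the measure from \eqref{eq:measure-of-origin-ball} to get the factor $(4/3)^{\alpha/2}$, and finish with a Chernoff bound at slack $1/\log n$ plus a union bound over the $\mathcal O(n^3)$ balls.

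The only notable difference is in how the covering family is constructed. The paper simply takes, for each pairwise-adjacent triplet $u,v,w$, the unique ball through those three boundary points and asserts that one such circumscribed ball contains the clique and has radius at most $r_J$. You instead invoke the Carath\'eodory-type property of minimal enclosing balls to pin the ball down by at most three vertices of $K$. Your version is in fact the cleaner one: Jung's theorem bounds the \emph{minimal enclosing} ball, not the circumscribed ball, and these differ for obtuse configurations (where the circumradius can exceed $r_J$). Your formulation sidesteps this and also naturally handles the two-point case. So the obstacle you flag is real, and your resolution of it is the right one; the downstream measure and concentration steps are identical to the paper's.
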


\begin{proof}
Consider any triplet of vertices $u,v,w \in V$ with pairwise distance at most $R$, so that they are pairwise adjacent. 
Since $u,v,w$ are a.s.~in general position, there is a unique ball $\mathcal{B}_x(r)$ such that $u,v,w$ lie on the boundary $\partial\mathcal{B}_x(r)$.
By \cref{cor:working-horse}, $r \leq R/2 + \log(\sqrt{4/3})$ since $\max{(\dist(u,v), \dist(v,w), \dist(u,w))} \leq R$. Over all possible triplets $u,v,w \in V$ this gives us a set $B$ of at most $n \choose 3$ closed balls. 
Any clique must be contained in one of these balls, and therefore so is the largest clique. Thus upper bounding the number of vertices for each individual ball $\mathcal{B} \in B$ yields an upper bound on the size of the largest clique. 

We now upper bound the expected number of vertices in one ball $B$.  To this end we fix a ball $\B = \mathcal{B}_x(r) \in B$ and let $Y_\B$ be the random variable counting the number of vertices in $\B$. The balls in $B$ are identically (though clearly not independently) distributed.
Since vertices are thrown independently according to $\mu$, we have that $Y_\B - 3\sim\mathrm{Bin}(n-3, \mu(\B))$, and so
\begin{align*}
    \E{Y_\B} 
    = 3 + (n-3)\mu(\B)
    &\leq 3 + (n-3)\mu(\mathcal{B}_0(r)) && \text{(\cref{obs:area-of-disk})}\\
    &\leq 3 + (n-3)\mu(\mathcal{B}_0(R/2 + \log(2/\sqrt{3}))) && \text{(\cref{cor:working-horse})}\\
    &\leq \big(\big(2/\sqrt{3}\big)^{\alpha}+o(1)\big)e^{-\alpha C/2}n^{1-\alpha}. && \text{(\cref{eq:measure-of-origin-ball})}
\end{align*}
Thus $((4/3)^{\alpha/2}+o(1))\coresize \ge (1+1/\log(n))\E{Y_\B}$ \whp
This is relevant to the bound in the theorem statement because it implies that
\begin{align*}
    \P\Big(\omega(G)> ((4/3)^{\alpha/2}+o(1))\coresize\Big)
    &\le \P\Big(\max_{\B\in B} Y_\B > ((4/3)^{\alpha/2}+o(1))\coresize\Big)\\
    &\le \P\Big(\max_{\B\in B} Y_\B > (1+1/\log(n))\mathbb{E}[Y_\B] \Big) + n^{-c}
\end{align*}
for arbitrary $c$.
Thus to finish the proof we need to show concentration,
which via a union bound over all triplets $u,v,w$ will yield the result.
To show concentration we apply a Chernoff bound \cref{lem:Chernoff}. Using $\varepsilon = (1/\log(n))$ we obtain 
\begin{align*}
    \Prob{Y_\B > (1+1/\log(n))\mathbb{E}[Y_\B]} &\leq e^{-\mathbb{E}[Y_\B]/(3(1/\log(n))^2)}
    \leq e^{-\Theta(1)(n^{1-\alpha})/\log^2(n)}
    \leq n^{-c}
\end{align*}
for any choice of $c$, since for $\alpha < 1$, $\Theta(1)(n^{1-\alpha})\in n^{\Theta(1)}$ and $\liminf_{n\to\infty} \frac{n^{\Theta(1)}}{\log^2(n)} \in \omega(\log(n)).$ Finally, to show that this holds \whp for all balls in $B$, we use that $|B| \leq 
{n\choose 3} < n^3$, so that
\begin{align*}
    \P\Big(\max_{\B\in B} Y_\B \geq (1+1/\log(n))\mathbb{E}[Y_\B]\Big) &\leq \sum_{\B\in B} \mathbb{P}\left( Y_\B \geq (1+1/\log(n))\mathbb{E}[Y_\B]\right)\\ 
    &\leq n^3\mathbb{P}\left( Y_\B \geq (1+1/\log(n))\mathbb{E}[Y_\B]\right)
    \leq n^{-c+3}.\qed
\end{align*}
\let\qed\relax
\end{proof}
A further refinement of the ``clique covering'' argument of \cref{the:clique-upper-bound} should be possible. Any clique has by definition a diameter of at most $R$, and so the shape in $\mathbb H^2$ of diameter $R$ with maximal area would provide an improved upper bound via a similar covering argument. It is not clear what a tight bound would be, and $\clique \leq (1+o(1))\coresize$ may be possible.

\section{Geometric Inhomogeneous Random Graphs}\label{sec:girg}
\emph{Geometric Inhomogeneous Random Graphs} or \emph{GIRGs} were introduced in \cite{BRINGMANN201935} as an alternative model to HRGs that capture many of the same properties, in particular the power-law degree distribution. In their most general form, GIRGs strictly generalise HRGs, but they are more often studied in a slightly restricted form; comparisons are made in \cite{blaesius-esa-2019,KOMJATHY20201309}. In this restricted form, called the \emph{standard} GIRG model by \cite{Schiller2024}, any HRG $G$ can be coupled with two GIRGs $H_1$ and $H_2$ such that $H_1\subseteq G \subseteq H_2$, where $\subseteq$ denotes graph inclusion.

Because of this relationship, GIRGs are used as proxies for HRGs in some theoretical and experimental works. This is partly done because GIRGs are (by design) far more tractable than HRGs. 
It is therefore valuable to understand differences between the two models.
In \cite{blaesius-esa-2019} experimental evidence was given to suggest that the ``sandwiching'' of an HRG by two standard GIRGs is not tight. In \cref{cor:hrg-girg-gap} we provide a theoretical result demonstrating a difference between the two models.

\begin{definition}[Standard GIRG model]\label{def:GIRG}

Let $\beta \in (2,3)$, $\lambda \in \Theta(1)$, and $n \in \mathbb{N}$. A \emph{geometric inhomogeneous random graph} $G \sim \GIRG$ is a random graph with vertex set $V = \{v_1,\cdots, v_n\}$ satisfying the following properties.
\begin{enumerate}
    \item Every $u \in V$ is equipped with a random tuple $(w_u, x_u)$, where weight $w_u \in [1, \infty)$ has density $f(y) = (\beta - 1)y^{-\beta}$
    and coordinate $x_u$ is drawn uniformly at random from $[0,1]$;
    \item Any pair of vertices $u,v\in V$ are connected if and only if $\min\{|x_u - x_v|, 1-|x_u - x_v|\} \leq t(u,v)$, where $t(u,v) = \frac{1}{2}\left(\frac{\lambda w_u w_v}{n}\right)$.
\end{enumerate}
\end{definition}

One way of thinking of a GIRG is that vertices are being thrown uniformly at random onto the 1-dimensional torus $\mathbb T^1$, and connected according to whether their distance is below their threshold $t(u,v)$. The weights are drawn according to a Pareto distribution. Analogously to HRGs, for a vertex $u$ of a GIRG we define the inner-degree of $u$ to be $|\innerdeg(u)| = \abs{\{v \in V | \text{$u$ and $v$ are connected and } w_v \geq w_u\}}.$ The proofs of \cref{obs:deg-upper-inner,lem:deg-lower-inner} can be adapted to the GIRG model to characterise the degeneracy via the largest inner-degree.

\begin{restatable}{corollary}{girgInner}\label{cor:inner-neighbour-upper-lower}
    Let $G \sim \mathcal{G}(n, 2\alpha + 1, \lambda)$ be a standard GIRG. Consider the vertex $u^*$ with the largest inner-degree in $G$. Then \whp $\kappa(G) = (1- o(1))|\innerdeg(u^*)|.$
\end{restatable}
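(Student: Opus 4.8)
The plan is to mirror the HRG argument contained in \cref{obs:deg-upper-inner} and \cref{lem:deg-lower-inner}, replacing the radial coordinate with the weight coordinate and the measure $\mu(\mathcal I(r))$ with the corresponding connection probability in the GIRG model. The upper bound $\kappa(G)\le |\innerdeg(u^*)|$ should follow deterministically, exactly as in \cref{obs:deg-upper-inner}: order the vertices by \emph{decreasing} weight (ties broken arbitrarily), and peel them off one at a time. The neighbours of a vertex $v$ with greater index are precisely those connected neighbours of smaller or equal weight, which is $\innerdeg(v)$ under the GIRG definition; so each vertex has degree at most $|\innerdeg(u^*)|$ at the moment of removal, giving $\kappa(G)\le|\innerdeg(u^*)|$.

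For the matching lower bound $\kappa(G)\ge(1-o(1))|\innerdeg(u^*)|$ \whp, I would transcribe the proof of \cref{lem:deg-lower-inner}. First I would fix the weight threshold $w$ that maximises the expected inner-degree of a vertex of weight exactly $w$, and let $H$ be the induced subgraph on all vertices of weight at least $w$. The key monotonicity fact needed is the GIRG analogue of \cref{lem:ball-lower-bound}: a vertex of smaller weight is connected to a given higher-weight vertex with probability at least as large as that between two vertices of equal weight $w$, since $t(u,v)=\tfrac12(\lambda w_u w_v/n)$ is increasing in each weight. This yields, for any vertex $v\in V(H)$,
\begin{align*}
    \Exp\!\left[\deg|_H(v)\mid w_v\right]
    \ge (n-1)\cdot \min\{2t(w,w),1\}\cdot\indicator\{w_v\ge w\},
\end{align*}
where the connection probability of two weight-$w$ vertices on $\mathbb T^1$ is $\min\{2t(w,w),1\}$. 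Writing $\gamma$ for this expected lower bound and noting $\gamma\ge\Exp[|\innerdeg(u^*)|]$, a Chernoff bound over each vertex together with a union bound gives that every vertex of $H$ has degree at least $(1-\eps)\gamma$ \whp, and hence $\kappa(G)\ge\kappa(H)\ge(1-o(1))|\innerdeg(u^*)|$.

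The one ingredient that must be checked, rather than copied verbatim, is that $\gamma\in n^{\Omega(1)}$, so that the Chernoff exponent $\gamma\eps^2/2$ dominates the $\log n$ loss from the union bound when $\eps=1/\log n$. In the HRG case this was supplied by $\gamma\ge\Exp[\coresize]\in n^{\Omega(1)}$ via \cref{lem:lower-clique-number}; in the GIRG model the analogous fact is that the high-weight vertices (those with $w\in\Theta(\sqrt n)$, say) form a clique of polynomial size, so the maximum expected inner-degree is also $n^{\Omega(1)}$ for $\beta\in(2,3)$. I expect this clique-size/expected-inner-degree estimate, and the verification that the optimal weight threshold actually produces a polynomially large $\gamma$, to be the main obstacle: everything else is a routine re-indexing of the HRG proof. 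Once $\gamma\in n^{\Omega(1)}$ is established, choosing $\eps=1/\log n$ closes the gap between the upper and lower bounds and yields $\kappa(G)=(1-o(1))|\innerdeg(u^*)|$ \whp, as claimed.
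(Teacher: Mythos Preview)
Your overall strategy---port \cref{obs:deg-upper-inner} and \cref{lem:deg-lower-inner} to the GIRG setting with weights in place of radii---is exactly what the paper does, and your identification of the one ingredient to check ($\gamma\in n^{\Omega(1)}$ via the core clique) is also how the paper closes the argument. Two points, however, need fixing.

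First, a definitional slip: in the paper the GIRG inner-neighbourhood is the set of neighbours with \emph{larger} weight, $\innerdeg(u)=\{v\sim u:w_v\ge w_u\}$, since large weight is the analogue of small radius (``inner''). Hence for the deterministic upper bound you must order by \emph{increasing} weight, not decreasing; then the neighbours of $v_i$ with larger index are precisely $\innerdeg(v_i)$, and $\kappa(G)\le|\innerdeg(u^*)|$ follows as in \cref{obs:deg-upper-inner}.

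Second, and more substantively, your displayed lower bound
\[
\Exp\bigl[\deg|_H(v)\mid w_v\bigr]\ \ge\ (n-1)\cdot\min\{2t(w,w),1\}\cdot\indicator\{w_v\ge w\}
\]
is false: it omits the constraint that the other endpoint lie in $H$, i.e.\ have weight at least $w$. At the relevant threshold $w\in\Theta(\sqrt{n})$ one has $2t(w,w)\in\Theta(1)$, so your right-hand side is $\Theta(n)$, yet $H$ contains only $\Theta(n^{1-\alpha})$ vertices---no vertex of $H$ can have that many neighbours inside $H$. The correct bound integrates over the weight of the other vertex:
\[
\Exp\bigl[\deg|_H(v)\mid w_v\bigr]\ \ge\ (n-1)\int_{w}^{\infty}\min\{2t(y,w),1\}\,f(y)\,\dd y,
\]
using the monotonicity $t(y,w_v)\ge t(y,w)$ for $w_v\ge w$. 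This integral \emph{is} the expected inner-degree of a weight-$w$ vertex, so with this correction your $\gamma$ coincides with the intended quantity, the comparison $\gamma\ge\Exp[|\innerdeg(u^*)|]$ is meaningful, and the Chernoff/union-bound step goes through as you describe. The paper records this monotonicity as $\P(x\in N(u),\,w_x\ge w)\le\P(x\in N(v),\,w_x\ge w)$ whenever $w_v\ge w_u$, and then proceeds exactly as you outline.
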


\cref{cor:inner-neighbour-upper-lower} allows us to state a tight bound for the degeneracy in comparison to the core of the GIRG, which is defined to contain all vertices of weight $\hat{w} \ge \sqrt{n/\lambda}$, and has size $\coresize = (1\pm o(1))\lambda^{-\alpha} n^{1-\alpha}$ \whp This is analagous to the core of an HRG, which is the clique formed by vertices of radius at most $R/2$, regardless of their angular coordinates.
\begin{restatable}{theorem}{girgDegeneracy}\label{the:girg-degeneracy}
Let $G \sim \mathcal{G}(n, 2\alpha + 1, \lambda)$ be a threshold GIRG. The degeneracy is \whp $$\kappa(G) = (2 \pm o(1)) (2(1-\alpha))^{({1 - \alpha})/{(2\alpha - 1)}}\coresize.$$
\end{restatable}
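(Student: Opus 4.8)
The plan is to reduce the statement to computing the largest inner-degree and then invoke \cref{cor:inner-neighbour-upper-lower}, which already gives $\kappa(G) = (1-o(1))|\innerdeg(\ustar)|$ \whp Hence it suffices to determine $|\innerdeg(\ustar)|$ up to a $(1\pm o(1))$ factor. The first step is to express the expected inner-degree of a vertex $u$ as a function of its weight $w_u = w$. A vertex $v$ with uniformly drawn coordinate connects to $u$ with probability $\min\{\lambda w w_v/n,\,1\}$, and the inner-degree counts only neighbours of weight at least $w$; integrating against the Pareto density $f(W) = 2\alpha W^{-(2\alpha+1)}$ (using $\beta-1 = 2\alpha$) gives
\[
\gamma(w) := \E{|\innerdeg(u)| \given w_u = w} = (n-1)\int_w^\infty \min\Big\{\tfrac{\lambda w W}{n},\,1\Big\}\,2\alpha W^{-(2\alpha+1)}\dd W.
\]

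The second step is to evaluate and maximise $\gamma$. The integrand saturates at the weight $W^* = n/(\lambda w)$, so I would split the integral there; for $w < \sqrt{n/\lambda}$ one has $w < W^*$, and both resulting pieces are elementary power-law integrals. Using $\alpha\in(1/2,1)$ this yields
\[
\gamma(w) = (1\pm o(1))\left[\frac{2\alpha\lambda}{2\alpha-1}\,w^{2-2\alpha} - \frac{\lambda^{2\alpha}}{(2\alpha-1)\,n^{2\alpha-1}}\,w^{2\alpha}\right].
\]
Differentiating and solving $\gamma'(w)=0$ gives the optimiser $w^* = \sqrt{n/\lambda}\,(2(1-\alpha))^{1/(2(2\alpha-1))}$. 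Since $2(1-\alpha)<1$ we get $w^*<\sqrt{n/\lambda}$, so the optimiser lies strictly below the core threshold, confirming the case split. Substituting the first-order condition back collapses the two summands: writing them as $A$ and $B$, stationarity reads $(2-2\alpha)A = 2\alpha B$, whence $\gamma(w^*) = \tfrac{2\alpha-1}{\alpha}A = 2\lambda\,(w^*)^{2-2\alpha}$. Plugging in $w^*$ and expressing the result relative to the core size $\coresize$ (the $\lambda$-factors cancel) gives
\[
\gamma(w^*) = (2\pm o(1))\,(2(1-\alpha))^{(1-\alpha)/(2\alpha-1)}\,\coresize,
\]
which is exactly the target value.

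The third step is to turn this expectation into a high-probability statement, following the template of \cref{the:degeneracy-upper}. For the upper bound, note that $\E{|\innerdeg(u)| \given w_u} \le \gamma(w^*)$ for every vertex: sub-core vertices are governed by the unimodal $\gamma$ with peak $\gamma(w^*)$, while core vertices of weight $w\ge\sqrt{n/\lambda}$ connect to all heavier vertices and hence have expected inner-degree $n w^{-2\alpha} \le \coresize < \gamma(w^*)$. A Chernoff bound (\cref{lem:Chernoff}) per vertex together with a union bound then gives $\max_u |\innerdeg(u)| \le (1+o(1))\gamma(w^*)$ \whp, valid because $\gamma(w^*) = \Theta(n^{1-\alpha}) = n^{\Omega(1)}$ makes the tail beat the union factor $n$. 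For the lower bound, since $n\,\P(w_v\in[w^*,2w^*]) = \Theta(n^{1-\alpha}) = n^{\Omega(1)}$, \whp there exists a vertex whose weight lies within a constant factor of $w^*$; by the flatness of $\gamma$ near its maximum such a vertex has expected inner-degree $(1-o(1))\gamma(w^*)$, and a further Chernoff bound shows its realised inner-degree is $(1-o(1))\gamma(w^*)$ \whp Combining both bounds with \cref{cor:inner-neighbour-upper-lower} yields the theorem.

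The analytic heart is the integral evaluation and the single-variable optimisation; these are routine but must be handled carefully, in particular verifying that $w^*$ is a \emph{global} maximum and that it lies below $\sqrt{n/\lambda}$. I expect the main obstacle to be the lower-bound concentration argument: one must guarantee \whp both the existence of a vertex with weight near $w^*$ and that its actual inner-degree realises $(1-o(1))\gamma(w^*)$, which relies on the multiplicative window around $w^*$ carrying polynomially many vertices and on $\gamma$ being sufficiently flat there.
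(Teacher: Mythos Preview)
Your approach mirrors the paper's: both reduce to the maximal inner-degree via \cref{cor:inner-neighbour-upper-lower}, derive the same closed form for the expected inner-degree $\gamma(w)$, optimise to the same $w^* = (2(1-\alpha))^{1/(4\alpha-2)}\sqrt{n/\lambda}$, and finish with Chernoff plus a union bound.

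The one genuine slip is in the lower bound. A constant-factor weight window such as $[w^*,2w^*]$ does \emph{not} guarantee expected inner-degree $(1-o(1))\gamma(w^*)$: since $\gamma(w)$ is a fixed linear combination of $w^{2-2\alpha}$ and $w^{2\alpha}$, it changes by a constant factor under constant-factor rescalings of $w$. Concretely, $\gamma(cw^*)/\gamma(w^*) = \big(\alpha\,c^{2-2\alpha} - (1-\alpha)\,c^{2\alpha}\big)/(2\alpha-1)$, which equals $1/\sqrt{2}$ at $c=2$, $\alpha=3/4$. So ``flatness near the maximum'' is not flat enough on that scale. The paper resolves this by shrinking the window to a $(1+o(1))$-multiplicative interval, namely $\big[w^*(1+n^{\alpha-1}\log^2 n)^{-1/(2\alpha)},\,w^*\big]$; this still carries $\Theta(\log^2 n)$ expected vertices (hence is non-empty \whp by Chernoff), and on a $(1+o(1))$-window smoothness at the critical point genuinely gives $\gamma = (1-o(1))\gamma(w^*)$. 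Your closing paragraph already anticipates exactly this tension, so replacing the constant-factor window by a $(1+o(1))$-window repairs the argument; note that polynomially many vertices are not needed, $\omega(\log n)$ suffices for the \whp guarantee.
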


\begin{proof}
    We bound the maximal inner-degree $|\innerdeg(\ustar)|$; the statement then follows from \cref{cor:inner-neighbour-upper-lower}.
    Notice that, independent of the geometric distance, a vertex with weight $w$ is adjacent to any vertex with weight $w'$ if $w' \geq \frac{n}{w\lambda}$ since 
$
   t(w, w') = \frac{\lambda w w'}{2n} \geq \frac{\lambda w \frac{n}{w\lambda}}{2n} = \frac{1}{2}$
which is the maximal distance between two points in the unit torus. Thus, using $\beta = 2\alpha + 1$, the probability that a vertex $v$ is in the \inner of a vertex $u$ with weight $w$ is 

\begin{align}\label{eq:prob-inner-girg}
    \Prob{v \in \Gamma(u)} &= \Prob{\{\{u, v\}\in E\} \cap \{W_v \geq w\}} \notag\\
    &= \Prob{W_v \geq \frac{n}{w\lambda}} + 2\int_{w}^{\frac{n}{w\lambda}}t(y,w)2\alpha y^{-(2\alpha + 1)} dy && (\text{$t(w,n/w\lambda) = 1/2$})\notag\\
    &= \left(\frac{n}{w\lambda}\right)^{-2\alpha} + \frac{2\alpha \lambda w}{n}\int_{w}^{\frac{n}{w\lambda}}y^{-2\alpha} dy && (\text{Pareto and threshold})\notag\\
    &= \left(\frac{n}{w\lambda}\right)^{-2\alpha} + \frac{2\alpha\lambda w}{n(2\alpha - 1)}\left(w^{1-2\alpha} - \left(\frac{n}{w\lambda}\right)^{1-2\alpha} \right) && \left(\int y^{-2\alpha}dy = \left[\frac{y^{1-2\alpha}}{1-2\alpha}\right]\right)\notag\\
    &= \left(\frac{n}{w\lambda}\right)^{-2\alpha} + \frac{\alpha}{\alpha - 1/2}\left(\frac{\lambda w^{2(1-\alpha)}}{n} - \left(\frac{n}{w\lambda}\right)^{-2\alpha} \right). 
\end{align}

Next, we calculate the value $w^*$, which maximises the expected inner-degree. To this end, we consider the probability measure of the inner-neighbourhood, take its derivative with respect to $w$ and set it equal to $0$. Differentiating yields
\begin{align*}
    \derive{w}  \Prob{v \in \Gamma(u)} 
    &= \frac{2(nw)^{-(2\alpha + 1)}(1 - \alpha)\alpha\left(n^{2\alpha} w^2\lambda - nw^{4\alpha}\lambda^{2\alpha}\right)}{2\alpha - 1}, 
\end{align*}
and solving for the maximum reveals
$
    w^* = (2(1-\alpha))^{\frac{1}{4\alpha - 2}}\sqrt{\frac{n}{\lambda}} \in \Theta(\sqrt{n}).
$

We plug in $w^*$ for the weight of $u$ denoted by $u^*$ into $\Prob{v \cap \Gamma(\ustar)}$ and get by \eqref{eq:prob-inner-girg} that
$$
    \Prob{v \in \Gamma(\ustar)} = 2(2(1-\alpha))^{({1 - \alpha})/{(2\alpha - 1)}}(n\lambda)^{-\alpha}. 
$$
Recalling that $\coresize = (1\pm o(1))\lambda^{-\alpha} n^{1-\alpha}$ \whp, the upper bound now follows from the expectation of $|\innerdeg(u^*)|$ and applying a Chernoff bound in conjunction with a union bound. The lower bound is established by showing that there exists a vertex within the range of weights $\Tilde{w} = [w^*(1 + n^{\alpha-1}\log^2(n))^{-1/(2\alpha)}, w^*]$ \whp and lower bound the inner-degree of such vertex. Using the Pareto distribution and $w^* \in \Theta(\sqrt{n})$, we calculate the probability for a vertex $u$ to belong to the range of weights $\Tilde{w}$. We obtain
\begin{align*}
    \Prob{W_u \in \Tilde{w}} &= \Prob{w^*(1 + n^{\alpha-1}\log^2(n))^{-1/(2\alpha)} \leq W_u \leq w^*} &&(\text{Range of } \Tilde{w})\\
    &= \Prob{W_u \leq w^*} - \Prob{W_u \leq w^*(1 + n^{\alpha-1}\log^2(n))^{-1/(2\alpha)}}\\ 
    &= 1 - (w^*)^{-2\alpha} - (1 - (w^*(1 + n^{\alpha-1}\log^2(n))^{-1/(2\alpha)})^{-2\alpha}) &&(\text{Pareto})\\
    &= (w^*(1 + n^{\alpha-1}\log^2(n))^{-1/(2\alpha)})^{-2\alpha} - (w^*)^{-2\alpha}\\
    &= (w^*)^{-2\alpha}n^{\alpha-1}\log^2(n)\\
    &= \Theta(1)\frac{\log^2(n)}{n}. && (w^* \in \Theta(\sqrt{n}))
\end{align*}

By this we have $\E{\abs{V \cap \Tilde{w}}} \in \omega(\log(n))$. Using a Chernoff bound there exists a vertex within the desired weight range $\Tilde{w}$ \whp To conclude the proof we lower bound the inner-degree of a vertex $\Tilde{u}$ included in the weight range $\Tilde{w}$. Note that $\Tilde{w} = (1-o(1))w^* = (2-o(1)(1-\alpha))^{\frac{1}{4\alpha - 2}}\sqrt{\frac{n}{\lambda}}$. We then have via \Cref{eq:prob-inner-girg}
$$
    \E{|\innerdeg(\Tilde{u})|} = (n-1)\Prob{v \cap \Gamma(\Tilde{u})} 
    \geq (2-o(1))(2(1-\alpha))^{({1 - \alpha})/{(2\alpha - 1)}}\lambda^{-\alpha}n^{1-\alpha}.
$$
A final application of a Chernoff bound then ensures the concentration to finish the proof.
\end{proof}

Comparing the lower bound of the degeneracy for GIRGs given in \cref{the:girg-degeneracy} to the upper bound of a HRG we obtained in \cref{the:degeneracy-upper} we draw the conclusion that the degeneracy-to-core ratio between the two models is fundamentally different.

\begin{restatable}[GIRG-HRG degeneracy difference]{corollary}{girgGap}\label{cor:hrg-girg-gap}
    Fix an $\alpha \in (1/2, 1)$. Let $G \sim \mathcal{G}(n, 2\alpha + 1, \lambda)$ be a standard GIRG and $H \sim \mathcal{G}(n, \alpha, C)$ be a threshold HRG. Then  \whp\ 
    $$ 
 \left|\frac{\degen}{\coresize} -  \frac{\kappa(H)}{\sigma(H)}\right| 
    \in \Theta(1). 
    $$
    
\end{restatable}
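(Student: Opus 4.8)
The plan is to compare the two degeneracy-to-core ratios directly, exploiting that \cref{the:girg-degeneracy} pins the GIRG ratio down tightly while \cref{the:degeneracy-upper} gives two-sided bounds on the HRG ratio. By \cref{the:girg-degeneracy}, \whp
\[
  \frac{\kappa(G)}{\sigma(G)} = (2\pm o(1))\,(2(1-\alpha))^{(1-\alpha)/(2\alpha-1)},
\]
whereas by \cref{the:degeneracy-upper}, \whp the HRG ratio $\kappa(H)/\sigma(H)$ lies in the interval $[L(\alpha)-o(1),\,(4/3)^\alpha+o(1)]$, where $L(\alpha)=\tfrac{4}{\pi}\bigl(\tfrac{2(1-\alpha)}{\pi/2-\alpha(\pi-2)}\bigr)^{(1-\alpha)/(2\alpha-1)}$ is the stated lower bound. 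Since $\alpha$ is fixed, every quantity appearing here is a positive constant up to an $o(1)$ additive error, so the bound $\big|\kappa(G)/\sigma(G)-\kappa(H)/\sigma(H)\big|\in O(1)$ is automatic. The entire content is therefore to separate the two ratios by a positive constant, and for this it suffices to show that the GIRG ratio strictly exceeds the HRG upper bound, i.e.\ that $g(\alpha):=2\,(2(1-\alpha))^{(1-\alpha)/(2\alpha-1)} > (4/3)^\alpha$ for every $\alpha\in(1/2,1)$.

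I would establish this inequality analytically. Setting $h(\alpha)=\ln 2 + \tfrac{1-\alpha}{2\alpha-1}\ln(2(1-\alpha)) - \alpha\ln(4/3)$, the claim is equivalent to $h(\alpha)>0$ on $(1/2,1)$. The endpoints are the delicate part. As $\alpha\to 1^-$ the middle term vanishes and $h\to\ln 2-\ln(4/3)=\ln(3/2)>0$. As $\alpha\to\tfrac12^+$ the base $2(1-\alpha)\to 1$ while the exponent diverges, a $1^\infty$ indeterminacy; writing $t=2\alpha-1\to 0^+$ and $2(1-\alpha)=1-t$ gives $\tfrac{1-t}{2t}\ln(1-t)\to-\tfrac12$, hence $(2(1-\alpha))^{(1-\alpha)/(2\alpha-1)}\to e^{-1/2}$ and $h\to\ln 2-\tfrac12-\tfrac12\ln(4/3)\approx 0.049>0$. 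To cover the interior I would show $h$ does not dip below $0$, either by computing $h'$ and verifying $h'>0$ (the sample values $h(\tfrac12^+)\approx0.049$, $h(\tfrac34)\approx0.131$, $h(1^-)\approx0.405$ indicate $h$ is increasing) or by a termwise convexity estimate.

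With the inequality in hand the conclusion is short. Put $c:=g(\alpha)-(4/3)^\alpha>0$, a strictly positive constant for the fixed $\alpha$. The high-probability bounds then give, \whp,
\[
  \frac{\kappa(G)}{\sigma(G)}-\frac{\kappa(H)}{\sigma(H)} \ge (2-o(1))\,(2(1-\alpha))^{(1-\alpha)/(2\alpha-1)} - (4/3)^\alpha - o(1) \ge c - o(1) \ge \tfrac{c}{2}
\]
for $n$ large enough, while the $O(1)$ upper bound noted above controls the difference from the other side. Together these yield $\big|\kappa(G)/\sigma(G)-\kappa(H)/\sigma(H)\big|\in\Theta(1)$ \whp, as claimed.

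The main obstacle is exactly the scalar inequality $g(\alpha)>(4/3)^\alpha$: the logarithmic gap $h(\alpha)$ shrinks to roughly $0.05$ as $\alpha\to\tfrac12$, so the $1^\infty$ limit there has to be resolved with a genuine first-order expansion rather than a crude estimate, and one must rule out an interior crossing of $h$ rather than merely inspecting the endpoints.
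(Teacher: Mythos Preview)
Your proposal is correct and follows the same strategy as the paper: compare the GIRG lower bound from \cref{the:girg-degeneracy} against the HRG upper bound from \cref{the:degeneracy-upper} and conclude the gap is a positive constant. The paper's own proof is terser---it cites the two bounds and asserts the difference is bounded away from zero without writing out the scalar inequality $2(2(1-\alpha))^{(1-\alpha)/(2\alpha-1)} > (4/3)^\alpha$ (this is implicit in \cref{fig:plots}); your endpoint analysis and sketch of the interior argument make that step explicit and also handle the $O(1)$ side, which the paper leaves tacit.
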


\section{Conclusion}
We have shown that the clique number, degeneracy, and chromatic number of HRGs are asymptotically (with small differences in the leading $O$-notation constants) as large as the core, though the clique number and degeneracy differ significantly. Our upper bound on the degeneracy provides a constant factor approximation algorithm for the graph colouring problem. The approximation ratio ranges from $2/\sqrt{3}$ to $4/3$ and depends on the model parameter $\alpha$. This raises several open questions and future research directions.
\begin{itemize}
    \item Is the chromatic number bounded away from the degeneracy, the clique number, or both?
    \item Can HRGs be coloured optimally in polynomial time or is it NP-complete? 
    \item What are the asymptotics of $\clique/\coresize$? Is the clique number a constant factor larger than the core and has similar behaviour as the degeneracy? 
 \end{itemize}
There are further directions of research such as determining other differences between HRGs and GIRGs or designing colouring algorithms for HRGs in various models of computation. 

\bibliography{literature}
\appendix
\section{Appendix}\label{appendix:start}
\subsection{Bounds on $\theta_R(\cdot,\cdot)$}
Before we give bounds on the measure for of the area inner-ball \cref{cor:inner-neighbourhood-equal} in \cref{sec:app-area}, we provide some refined bounds on $\theta_R(r,q)$ for several pair of radii with $r + q \geq R + \Delta$ for specific $\Delta \in \Theta(1)$. We use the following inequality for the $\arccos$.

\begin{lemma}\label{rmk:arccos-bounds}
Let $0 < \varepsilon \leq x \leq z \leq 2$. Then
\begin{align*}
   \arccos(1-\varepsilon)\sqrt{x/\eps} \leq \arccos(1-x) \leq \arccos(1-z)\sqrt{x/z}.
\end{align*}
\end{lemma}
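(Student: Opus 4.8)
The plan is to recognise that the two displayed inequalities are simply two instances of a single monotonicity statement. Dividing each bound through by $\sqrt{x}$, the claim is equivalent to
$\frac{\arccos(1-\varepsilon)}{\sqrt{\varepsilon}} \le \frac{\arccos(1-x)}{\sqrt{x}} \le \frac{\arccos(1-z)}{\sqrt{z}}$, so it suffices to prove that the function $h(x) := \arccos(1-x)/\sqrt{x}$ is non-decreasing on $(0,2]$. The hypothesis $\varepsilon \le x \le z$ then yields both bounds at once by evaluating $h$ at the three arguments.

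To analyse $h$, I would substitute $\theta = \arccos(1-x) \in (0,\pi]$, so that $x = 1-\cos\theta$ and $\theta$ is a strictly increasing function of $x$; thus it is enough to show that $\theta \mapsto \theta/\sqrt{1-\cos\theta}$ is increasing. Differentiating with respect to $\theta$ and factoring out the positive quantity $(1-\cos\theta)^{-3/2}$ reduces the sign question to the single inequality $(1-\cos\theta) - \tfrac12\theta\sin\theta \ge 0$ for $\theta \in (0,\pi]$.

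Finally, I would simplify this last expression using the half-angle identities $1-\cos\theta = 2\sin^2(\theta/2)$ and $\sin\theta = 2\sin(\theta/2)\cos(\theta/2)$, which rewrites it as $2\sin(\theta/2)\bigl(\sin(\theta/2) - \tfrac{\theta}{2}\cos(\theta/2)\bigr)$. Writing $u = \theta/2 \in (0,\pi/2]$, the factor $\sin(\theta/2)$ is positive, and the bracket is non-negative precisely because $\tan u \ge u$ on $(0,\pi/2)$, a standard elementary fact. Hence $h$ is increasing, which finishes the argument. There is no real obstacle here: the only points requiring care are the initial reformulation of both bounds as a single monotonicity claim and the routine but slightly delicate half-angle manipulation that exposes the elementary inequality $\tan u \ge u$.
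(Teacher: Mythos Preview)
Your proposal is correct and follows exactly the same approach as the paper: both reduce the two inequalities to the single claim that $h(x)=\arccos(1-x)/\sqrt{x}$ is increasing on $(0,2]$. The paper merely asserts this monotonicity without justification, whereas you supply a clean proof via the substitution $\theta=\arccos(1-x)$ and the elementary inequality $\tan u\ge u$; the only (harmless) loose end is the endpoint $u=\pi/2$, where $\cos u=0$ and the bracket $\sin u - u\cos u = 1>0$ can be checked directly rather than via $\tan u$.
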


\begin{proof}
The statement follows by observing that $\arccos(1-x)x^{-1/2}$ is increasing in $x$ for domain $(0,2]$. 

\end{proof}

Recall that $\theta_R(r,q)$ is the angle between points with radii $p$ and $q$ such that the distance between them is $R$. We apply \cref{rmk:arccos-bounds} to $\theta_R(r,q)$ to obtain the following.

\begin{lemma}\label{lem:theta-bounds}
    For $u, v \in \disk$ and $\Delta \in \Theta(1)$, let $r(u) = r \geq q = r(v)$ and $r + q = R + \Delta$. Then there exists a constant $\delta > 0$, such that
    \begin{align*}
        (2+\delta)\sqrt{e^{R-r-q}} < \theta_R(r,q) \leq 
        \begin{cases}
\pi \sqrt{e^{R-r-q}} &\text{ for } \Delta \geq 0,\\
\frac{4\pi}{3\sqrt{3}}\sqrt{e^{R-r-q}} &\text{ for } \Delta \geq \log(4/3), \\
\frac{\pi}{\sqrt{2}}\sqrt{e^{R-r-q}} &\text{ for } \Delta \geq \log(2), \\
\frac{2\pi}{3}\sqrt{e^{R-r-q}} &\text{ for } \Delta \geq 2\log(2).
\end{cases}
\end{align*}
\end{lemma}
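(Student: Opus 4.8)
The plan is to reduce the whole statement to two‑sided control of a single quantity and then invoke the $\arccos$ estimate of \cref{rmk:arccos-bounds}. Starting from the definition of $\theta_R$, I would simplify the argument of the $\arccos$ using $\cosh(r-q) = \cosh r\cosh q - \sinh r\sinh q$ together with $\sinh r\sinh q = \tfrac12(\cosh(r+q) - \cosh(r-q))$. Writing $d = r - q \ge 0$ and using $r + q = R + \Delta$, this gives $\theta_R(r,q) = \arccos(1 - x)$ with
\[
    x \;=\; \frac{\cosh R - \cosh d}{\sinh r \sinh q} \;=\; \frac{2(\cosh R - \cosh d)}{\cosh(R+\Delta) - \cosh d}.
\]
Since $\theta_R = \arccos(1-x)$ is increasing in $x$, both the upper and the lower bound will follow from bounding $x$ and applying \cref{rmk:arccos-bounds}.

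The second step is to show $x = 2e^{-\Delta}(1 \pm O(e^{-R})) = (2\pm o(1))e^{R-r-q}$. In the regime of interest both radii are $\Theta(R)$ (indeed $d = O(1)$ in the applications), so $\cosh d$ is lower order than $\cosh R$, and dividing through by $\cosh R$ gives $x \to 2\cosh R/\cosh(R+\Delta) \to 2e^{-\Delta}$. More importantly, I would establish the \emph{exact} inequality $x \le 2e^{-\Delta} = 2e^{R-r-q}$ for $\Delta \ge 0$: clearing denominators and using $e^{-\Delta}\cosh(R+\Delta) = \cosh R - e^{-R}e^{-\Delta}\sinh\Delta$, this is equivalent to $\cosh d\,(1 - e^{-\Delta}) \ge e^{-\Delta}\sinh\Delta\,e^{-R}$, whose left-hand side is $\Theta(1)$ and right-hand side is $\Theta(e^{-R})$. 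Getting the \emph{sign} of this $O(e^{-R})$ correction right is what makes the four upper-bound constants come out exactly, so I consider this the main care-point of the argument.

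For the upper bounds, I would feed $x \le 2e^{-\Delta}$ into the upper half of \cref{rmk:arccos-bounds} with the threshold-dependent choice $z = 2e^{-\Delta_0}$, where $\Delta_0 \in \{0,\log(4/3),\log 2,2\log 2\}$ so that $z \in \{2,\tfrac32,1,\tfrac12\}$ and $x \le 2e^{-\Delta} \le 2e^{-\Delta_0} = z$. This yields
\[
    \theta_R = \arccos(1-x) \le \arccos(1-z)\sqrt{x/z} \le \arccos\!\big(1 - 2e^{-\Delta_0}\big)\,e^{\Delta_0/2}\sqrt{e^{R-r-q}},
\]
and one checks that the leading constant $\arccos(1 - 2e^{-\Delta_0})e^{\Delta_0/2}$ equals $\pi,\ \tfrac{4\pi}{3\sqrt3},\ \tfrac{\pi}{\sqrt2},\ \tfrac{2\pi}{3}$ for the four values of $\Delta_0$ (using $\arccos(-1)=\pi$, $\arccos(-\tfrac12)=\tfrac{2\pi}3$, $\arccos 0 = \tfrac\pi2$, $\arccos\tfrac12 = \tfrac\pi3$). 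This reproduces all four cases simultaneously, with equality at $\Delta_0=0$ where $x=2$ exactly.

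For the lower bound I would use the strict refinement $\arccos(1-x) = 2\arcsin\sqrt{x/2} > \sqrt{2x}$ (from $\arcsin u > u$ for $u>0$), quantified as $\arccos(1-x) \ge \sqrt{2x}\,(1 + x/12)$; equivalently one may take $\varepsilon \approx x$ in the lower half of \cref{rmk:arccos-bounds}. With $x \ge (2 - o(1))e^{-\Delta}$ this gives $\theta_R \ge (2 + \tfrac16 e^{-\Delta} - o(1))\sqrt{e^{R-r-q}}$, so for $n$ (hence $R$) large the claimed $\delta>0$ can be taken to be any constant below $\tfrac16 e^{-\Delta}$. The only subtlety is that this additive gap shrinks as $\Delta$ grows, matching the fact that the upper-bound constants decrease towards $2$; since $\Delta$ is a fixed $\Theta(1)$ constant this causes no difficulty.
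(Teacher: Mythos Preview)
Your proposal is correct and follows essentially the same route as the paper: both rewrite $\theta_R(r,q)=\arccos(1-x)$ with $x\approx 2e^{R-r-q}$, establish the clean one-sided bound $x\le 2e^{-\Delta}$, and then feed this into \cref{rmk:arccos-bounds} with the threshold value $z=2e^{-\Delta_0}$ to read off the four constants $\pi,\tfrac{4\pi}{3\sqrt3},\tfrac{\pi}{\sqrt2},\tfrac{2\pi}{3}$. For the lower bound the paper also goes through \cref{rmk:arccos-bounds} (taking $\varepsilon$ near $x$ and using $\arccos(1-\varepsilon)\sqrt{2/\varepsilon}>2$), which is exactly the alternative you mention to your direct Taylor estimate $\arccos(1-x)\ge\sqrt{2x}(1+x/12)$; your observation that the lower bound needs the extra assumption $d=r-q\in O(1)$ (equivalently $r\le R/2+O(1)$) is accurate and is also silently used in the paper's proof.
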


\begin{proof}
We adapt the proof of \cite[Lemma 4]{bfks-unitdisk-23}. Let us start with the upper bounds. We take advantage of the fact that $\arccos$ is a decreasing function. This means that we can upper bound $\theta_R(r,q)$:
\begin{align*}
    \theta_R(r,q) = \arccos\left(\frac{\cosh(r)\cosh(q) - \cosh(R)}{\sinh(r)\sinh(q)}\right) \leq \arccos\left(1 - 2e^{R-r-q}\right).
\end{align*}
Since $2e^{R-r-q} \in \left(0, 2e^{-\Delta}\right]$ we can apply \cref{rmk:arccos-bounds}. Evaluating the $\arccos$ for $r+q=R+ \Delta$ for the domains of our interest we obtain
\begin{align*}
    \arccos\left(1-2e^{R-r-q}\right) \leq
    \begin{cases}
        \pi  &\text{ for } \Delta \geq 0,\\
        \frac{2\pi}{3} &\text{ for } \Delta \geq \log(4/3), \\
        \frac{\pi}{2} &\text{ for } \Delta \geq \log(2), \\
        \frac{\pi}{3} &\text{ for } \Delta \geq 2\log(2),
    \end{cases}
\end{align*}

and the respective upper bound for $\theta_R(r,q)$ and each domain follows.

We now show a lower bound for $\theta_R(r,q)$. Since $\arccos$ is monotonic decreasing we upper bound its inner term to lower bound $\theta_R(r,q)$. By the identity $\cosh(r)\cosh(q) = \sinh(r)\sinh(q) - \cosh(r-q)$, we get
\begin{align*}
     \theta_R(r,q) &= \arccos\left(\frac{\cosh(r)\cosh(q) - \cosh(R)}{\sinh(r)\sinh(q)}\right)\\
     &= \arccos\left(1 - \frac{\cosh(R) - \cosh(r-q)}{\sinh(r)\sinh(q)}\right)\\
     &\geq \arccos\left(1-2e^{R-r-q} + e^{-R-r-q} - \left(e^{-2q} + e^{-2r}\right)\right).  
\end{align*}
Note that, by our hypothesis that $q \leq r \leq R/2 + \mathcal{O}(1)$, there exists an $\varepsilon > 0$, such that $2e^{R-x-y} + e^{-R-x-y} - \left(e^{-2x} + e^{-2r}\right) \in \left[\varepsilon, 2\right)$. Thus, we can and we will apply \cref{rmk:arccos-bounds}. We note that, for all $\varepsilon > 0$ there exists a constant $\delta' > 0$ such that $\arccos(1-\varepsilon)\sqrt{\frac{2}{\varepsilon}} = 2+\delta'$. Hence there exists another constant $\delta$ such that $0 < \delta < \delta'$, for which we get
\begin{align*}
    &\arccos\left(1-2e^{R-r-q} + e^{-R-r-q} - \left(e^{-2q} + e^{-2r}\right)\right)\\ 
    &\geq (2+\delta')\sqrt{e^{R-r-q} + \frac{e^{-R-r-q} - \left(e^{-2q} + e^{-2r}\right)}{2}}
    \geq (2+\delta)\sqrt{e^{R-r-q}},
\end{align*}
since $\left(e^{-2q} + e^{-2r}\right) \in \mathcal{O}(1/n)$ by $r \geq x \geq R/2 = \log(n)$ + C/2.
\end{proof}

We also consider vertices that have a sub-constant distance to the core in \cref{prop:larger-cliques}.

\forbiddenAngle*
\begin{proof}

Since $\arccos$ is a monotonically decreasing function and we aim to lower bound $\theta_R(r,x)$, we apply the identity $\cosh(r)\cosh(x) = \sinh(r)\sinh(x) + \cosh(r-x)$ and get
\begin{align*}
    \theta_R(r,x) &\geq \arccos\left(1 - 2\left(e^{R-r-x} + e^{-R-r-x} - \left(e^{-2r} +e^{-2x}\right)\right)\right).
\end{align*}
Next, we apply $r+x = R + \xi_k$ and $R = 2\log(n) + C$ which leads to
\begin{align*}
    \theta_R(r,x) 
    &\geq \arccos\left(1 - 2\left(e^{-\xi_k} + e^{-(4\log(n) + 2C +\xi_k)} - \left(e^{-2r} +e^{-2x}\right)\right)\right)\\
    &\geq \arccos\left(1 - 2\left(e^{-\xi_k} + e^{-(4\log(n) + 2C +\xi_k)} - 2e^{-(2\log(n) + C)}\right)\right),
\end{align*}
where we applied $r > x = R/2$ in the second line.
Using again the fact that $\arccos$ is a monotonically decreasing function we have
\begin{align*}
     \theta_R(r,x) &\geq \arccos\left(1 - 2\left(e^{-\xi_k} - 2e^{-(2\log(n) + C)}\right)\right).
\end{align*}
Note that for $\xi_k \in o(1)$ and $n$ approaching infinity the term $2\left(e^{-\xi_k} - 2e^{-(2\log(n) + C)}\right)$ approaches $2$ from below. Hence, a Taylor expansion around $n=\infty$ now yields
\begin{align*}
    \theta_R(r,x) &\geq \pi - 2\sqrt{\log^k(n) n^{\alpha-1}}.
\end{align*}
as required.
\end{proof}

\subsection{Bounds on the inner-neighbourhood}\label{sec:app-area}

\cref{lem:theta-bounds} allows us to bound the probability of a vertex $v$ belonging to the inner-neighbourhood area of a vertex $u$. We remark that these bounds are sharper than the ones obtained in \cite[Lemma 3]{bfm-cliques-17} for points with radius $R/2 + \Theta(1)$. In contrast, for points with radius $R/2 + \omega(1)$ the bound in \cite[Lemma 3]{bfm-cliques-17} is superior.

\begin{lemma}\label{lem:inner-neighoubrhood-upper}
Let $\Delta \in [0, R/2)$ and consider a point $u$ with radial coordinate $r = R/2 + \Delta$. Then, for the inner-ball $\mathcal{I}(r)$,
\begin{align*}
    \mu(\mathcal{I}(r)) \leq \left(1+ \Theta\left(e^{-\alpha R}\right)\right)\frac{\alpha e^{-\alpha r}}{\alpha - 1/2}\left(\gamma e^{\frac{1}{2}(2\alpha - 1)(2r-R)} - \eta\right),
\end{align*}
where
\begin{align*}
     \gamma, \eta =
\begin{cases}
1, \frac{1}{2\alpha} &\text{ for } \Delta \geq 0,\\
\frac{4}{3\sqrt{3}},\frac{1}{2\alpha} - \left(1-\frac{4}{3\sqrt{3}}\right)\left(\frac{4}{3}\right)^{(\alpha - 1/2)}  &\text{ for }  \Delta \geq \log(\sqrt{4/3}),\\
\frac{1}{\sqrt{2}},\frac{1}{2\alpha} - \left(1-\frac{4}{3\sqrt{3}}\right)\left(\frac{4}{3}\right)^{(\alpha - 1/2)}  -\left(\frac{4}{3\sqrt{3}} - \frac{1}{\sqrt{2}}\right)2^{(\alpha - 1/2)} &\text{ for } \Delta \geq \log(\sqrt{2}), \\
\frac{2}{3},\frac{1}{2\alpha} - \left(1-\frac{4}{3\sqrt{3}}\right)\left(\frac{4}{3}\right)^{(\alpha - 1/2)}  -\left(\frac{4}{3\sqrt{3}} - \frac{1}{\sqrt{2}}\right)2^{(\alpha - 1/2)} -\left(\frac{1}{\sqrt{2}} -\frac{2}{3}\right)2^{(2\alpha -1)} &\text{ for }  \Delta \geq \log(2). \\
\end{cases} 
 \end{align*}
\end{lemma}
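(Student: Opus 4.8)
The plan is to evaluate $\mu(\mathcal{I}(r))$ directly in polar coordinates and then insert the piecewise bounds on $\theta_R$ from \cref{lem:theta-bounds}. Since the inner-ball is rotationally symmetric about the origin, a point at radius $x\le r$ lies in $\mathcal I(r)$ exactly when its angular distance to $u$ is at most $\theta_R(r,x)$, so that
\[
\mu(\mathcal{I}(r)) = \int_0^r 2\,\theta_{\max}(x)\,\rho(x)\dd x, \qquad \theta_{\max}(x) = \min\{\pi,\theta_R(r,x)\}.
\]
The cap at $\pi$ is essential: two points at radii $r$ and $x$ are at distance at most $r+x$, so whenever $x\le R-r=R/2-\Delta$ the point connects to $u$ at every angle and $\theta_{\max}(x)=\pi$. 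I would therefore split at $x_0:=R/2-\Delta$ into a \emph{fully-connected} part over $[0,x_0]$ and a \emph{boundary} part over $[x_0,r]$, the latter of width exactly $2\Delta$.

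The fully-connected part is exact: $\int_0^{x_0}2\pi\rho(x)\dd x=\mu(\mathcal B_0(R/2-\Delta))$, which by \cref{eq:measure-of-origin-ball} equals $(1+\Theta(e^{-\alpha R}))e^{-\alpha r}$ and, after pulling out the prefactor $\tfrac{\alpha e^{-\alpha r}}{\alpha-1/2}$, contributes the constant $1-\tfrac{1}{2\alpha}$ to the bracket. For the boundary part I would substitute $\Delta':=r+x-R$, which runs over $[0,2\Delta]$ as $x$ runs over $[x_0,r]$, and apply $\theta_R(r,x)\le c\,\pi\sqrt{e^{R-r-x}}=c\,\pi\sqrt{e^{-\Delta'}}$ from \cref{lem:theta-bounds}, where $c$ is the tightest constant valid on the current $\Delta'$-regime. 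Inserting $\rho(x)=(1+\Theta(e^{-\alpha R}))\tfrac{\alpha}{2\pi}e^{\alpha(x-R)}$ — valid uniformly on $[x_0,r]$ since there $e^{-2\alpha x}\le e^{-\alpha R+O(1)}$ — the integrand collapses to a pure exponential proportional to $e^{(\alpha-1/2)x}$, whose antiderivative is elementary; evaluated at $x=r$ it reproduces exactly $\tfrac{\alpha e^{-\alpha r}}{\alpha-1/2}e^{\frac12(2\alpha-1)(2r-R)}$, the source of the leading $\gamma\,e^{\frac12(2\alpha-1)(2r-R)}$ term.

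The four cases correspond precisely to how far $[0,2\Delta]$ reaches into the successive thresholds $\Delta'\in\{0,\log(4/3),\log 2,2\log 2\}$ at which \cref{lem:theta-bounds} sharpens $c$ (note $\log\sqrt{4/3},\log\sqrt2,\log 2$ are exactly half of these). I would break $[x_0,r]$ at the relevant thresholds and use the constants $1,\tfrac{4}{3\sqrt3},\tfrac{1}{\sqrt2},\tfrac23$ on the respective subintervals. Because the antiderivative is the same exponential on every piece, the endpoint terms telescope: the top endpoint $x=r$ carries the tightest constant and becomes $\gamma$; the bottom endpoint $x=x_0$ gives $-1$, which cancels against the $+1$ hidden in $1-\tfrac1{2\alpha}$; and each interior breakpoint $\Delta'=\log c_0$ contributes a correction $-(\Delta c)\,c_0^{\alpha-1/2}$ to $\eta$ (with $\Delta c$ the drop in $c$ there), since $c_0^{\alpha-1/2}=e^{(\alpha-1/2)\log c_0}$ is just the common exponential evaluated at that breakpoint. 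These assemble into the stated $\eta=\tfrac1{2\alpha}-\sum(\Delta c)c_0^{\alpha-1/2}$; one checks the base case directly, where the boundary part gives $e^{\frac12(2\alpha-1)(2r-R)}-1$ and the fully-connected part $1-\tfrac1{2\alpha}$, yielding $\gamma=1,\eta=\tfrac1{2\alpha}$.

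The main obstacle I anticipate is organisational rather than conceptual: keeping the telescoping endpoint sum straight across subintervals so the corrections land in $\eta$ with the correct signs and powers $c_0^{\alpha-1/2}$, and simultaneously confirming that every $\sinh$ or $\cosh$ replaced by its leading exponential costs only a uniform $(1+\Theta(e^{-\alpha R}))$ factor. The latter is clean on the boundary strip, where $x\ge R/2-\Delta$ forces the relative error to $\Theta(e^{-\alpha R})$, and for the fully-connected part it follows from the closed form of \cref{eq:measure-of-origin-ball}; the only real care needed is to merge these two error factors into a single multiplicative one without disturbing the leading constants $\gamma,\eta$.
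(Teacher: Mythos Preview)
Your proposal is correct and follows essentially the same route as the paper: split $\mu(\mathcal I(r))$ into the fully-connected cap $\mu(\mathcal B_0(R-r))$ and the boundary integral over $[R-r,r]$, insert the piecewise upper bounds on $\theta_R$ from \cref{lem:theta-bounds} on successive subintervals determined by $\Delta'=r+x-R\in\{0,\log(4/3),\log 2,2\log 2\}$, and evaluate the resulting exponential integrals. The paper carries $\sinh(\alpha x)$ through the integral exactly and then drops the subleading $e^{-\alpha(R-r)}$ contribution, whereas you replace $\rho(x)$ by its leading exponential upfront; since $\sinh(\alpha x)\le\tfrac12 e^{\alpha x}$ this is harmless for an \emph{upper} bound and the only genuine multiplicative error in either approach is the $(1+\Theta(e^{-\alpha R}))$ from $(\cosh(\alpha R)-1)^{-1}$. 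Your telescoping description of how the breakpoint corrections $(\Delta c)\,c_0^{\alpha-1/2}$ accumulate into $\eta$ is in fact a cleaner explanation of the structure than the paper's explicit four-case bookkeeping, but the content is identical.
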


\begin{proof}
The proof is an adaptation of \cite[Lemma 3.2]{gpp-hrg-12}. We calculate the measure of the inner-ball by integrating over the desired area. Since a vertex $u$ with radius $r$ has $\dist(u,v) \leq R$ if $r(v) \leq R - r$ we have for the area of an inner-ball
\begin{align*}
    \mu(\mathcal{I}(r)) &= \mu(\mathcal{B}_u(R) \cap \mathcal{B}_0(r)) = \mu(\mathcal{B}_0(R-r)) + 2 \int_{R-r}^r \int_0^{\theta_R(x,r)}\rho(x) \dd\theta dx.
\end{align*}
By \cref{lem:theta-bounds} we have  $\theta_R(r,q) \leq \pi \sqrt{e^{(R-r-x)}}$ for $x+r\geq R$. Thus, for every $x \in [R-r, r]$, there exists a $\tau \leq \pi$ such that 
\begin{align*}
    &\mu(\mathcal{B}_0(R-r)) + 2 \int_{R-r}^r \tau e^{\frac{R-r-x}{2}}\frac{\alpha\sinh(\alpha x)}{2\pi(\cosh(\alpha R) - 1)} dx\\
    &= \mu(\mathcal{B}_0(R-r)) + \frac{ \alpha e^{(R-r)/2}}{\pi(\cosh(\alpha R) - 1)}\int_{R-r}^{r}\tau e^{-x/2}\sinh(\alpha x) dx.
\end{align*}

Notice that we can split the integral of the domain $[R-r, r]$ into further sub domains such that for domain $[s, t]$ we get $s + r \in R + \Theta(1)$. This allows to apply different bounds for $\theta_R(s,r)$ as shown in \cref{lem:theta-bounds}, giving a more refined upper bounds on $\tau$. Thus, we apply \cref{lem:theta-bounds} to obtain  
\begin{align}
    \mu(\mathcal{I}(r))&\leq \mu(\mathcal{B}_0(R-r))\label{eq:innerdegree-0}\\ 
    &\quad+ \indicator_{\{r \geq R/2\}}\frac{\alpha e^{(R-r)/2}}{(\cosh(\alpha R) - 1)}\int_{R-r}^{\min(r, R-r+\log(4/3))}e^{-x/2}\sinh(\alpha x) \dd x\label{eq:innerdegree-1}\\
    &\quad+ \indicator_{\{r \geq R/2 + \log(\sqrt{4/3})\}}\frac{{4} \alpha e^{(R-r)/2}}{{3\sqrt{3}}(\cosh(\alpha R) - 1)}\int_{R-r+\log(4/3)}^{\min(r, {R-r+\log(2)})}e^{-x/2}\sinh(\alpha x)\dd x\label{eq:innerdegree-2}\\
    &\quad+ \indicator_{\{r\geq R/2 + \log(\sqrt{2})\}}\frac{\alpha e^{(R-r)/2}}{\sqrt{2}(\cosh(\alpha R) - 1)}\int_{R-r+\log(2)}^{{\min(r,R-r+2\log(2))}}e^{-x/2}\sinh(\alpha x)\dd x\label{eq:innerdegree-3}\\
    &\quad+ \indicator_{\{r\geq R/2 + 2\log(\sqrt{2})\}}\frac{2 \alpha e^{(R-r)/2}}{3(\cosh(\alpha R) - 1)}\int_{R-r+2\log(2)}^{r}e^{-x/2}\sinh(\alpha x)\dd x\label{eq:innerdegree-4}.
\end{align}
Before we consider each desired case for $\Delta$ separately, we observe that each integral is of the form $\int_{s}^{t}e^{-x/2}\sinh(\alpha x) \dd x$. Doing the calculations yield
\begin{align*}
    \int_{s}^{t}e^{-x/2}\sinh(\alpha x) &= \left[\frac{2}{4\alpha^2 - 1}e^{-x/2}(2\alpha \cosh(\alpha x ) + \sinh(\alpha x))\right]_s^t\\
    &= \frac{2}{4\alpha^2 -1}\left(e^{-t/2}(2\alpha \cosh(\alpha t) + \sinh(\alpha t))\right)\\
    &-\frac{2}{4\alpha^2 -1}\left(e^{-s/2}(2\alpha \cosh(\alpha s) + \sinh(\alpha s)\right).
\end{align*}

We note that each for $y,z \in \mathbb{Q}$ and  $k,m \in \mathbb{N}$, we can write $s$ as $R-r + k\log(y)$ and $t$ either as $k'\log(z)$ or $r$. First we plug in  $s = R-r +k\log(y)$ and $t = R-r + k'\log(z)$ and get 

\begin{align*}
    (4\alpha^2 - 1)&e^{(R-r)/2}\int_{R-r+k\log(y)}^{R-r+k'\log(z)}e^{-x/2}\sinh(\alpha x)\\
    &= 2 (z^{-k'/2}(2\alpha \cosh(\alpha(R-r+k'\log(z))) + \sinh(\alpha(R-r+k'\log(z)))) \\
    &\quad- 2 (y^{-k/2}(2\alpha \cosh(\alpha(R-r+k\log(y))) + \sinh(\alpha(R-r+k\log(y))))\\
    &= 2(z^{-k'/2}(z^{k'\alpha}(\alpha + 1/2)e^{\alpha (R-r)} + z^{-k'\alpha}(\alpha - 1/2)e^{-\alpha(R-r)})) \\
    &\quad- 2(y^{-k/2}(y^{k\alpha}(\alpha + 1/2)e^{\alpha (R-r)} + y^{-k\alpha}(\alpha - 1/2)e^{-\alpha(R-r)}) \\
    &= (2\alpha +1)\left(z^{k'(\alpha - 1/2)} e^{\alpha(R-r)} - y^{k(\alpha - 1/2)} e^{\alpha(R-r)} \right)\\
    &\quad- (2\alpha - 1)\left(y^{-k(1/2 +\alpha)}e^{-\alpha(R - r)} - z^{-k'(\alpha+1/2)}e^{-\alpha(R - r)} \right)\\
    &\le (2\alpha +1)\left(z^{k'(\alpha - 1/2)} - y^{k(\alpha - 1/2)} \right)e^{\alpha(R-r)} \qquad\tag{since $\alpha>1/2$}
\end{align*}
Multiplying both sides by $\tau\alpha/\pi(4\alpha^2 - 1)(\cosh(\alpha R) - 1)$, and noting that $(\cosh(\alpha R) - 1)^{-1} = 2\exp({-\alpha R})\left(1+ \Theta\left(e^{-\alpha R}\right)\right)$, yields
\begin{align}\label{eq:integral-s-t}
\begin{split}
    \frac{\tau\alpha e^{(R-r)/2}}{\pi(\cosh(\alpha R) - 1)}&\int_{R-r+k\log(y)}^{R-r+k'\log(z)}e^{-x/2}\sinh(\alpha x)\\
    &\le \left(1+\Theta\left(e^{-\alpha R}\right)\right)\frac{\alpha\tau e^{-\alpha r}}{\pi(\alpha -1/2)}\left(z^{k'(\alpha - 1/2)} - y^{k(\alpha - 1/2)} \right).
\end{split}
\end{align}
By similar calculations it is revealed for $t = r$
\begin{align}\label{eq:case-r-upper}
\begin{split}
     \frac{\tau\alpha e^{(R-r)/2}}{\pi(\cosh(\alpha R) - 1)}&\int_{R-r+k\log(y)}^{r}e^{-x/2}\sinh(\alpha x)\\ &{\le} \left(1+\Theta\left(e^{-\alpha R}\right)\right){\frac{\alpha\tau e^{-\alpha r}}{\pi(\alpha -1/2)}\left(e^{\frac{1}{2}(2\alpha - 1)(2r-R)} - y^{k(\alpha - 1/2)}\right)}.
\end{split}
\end{align}

  We proceed by taking care of each line of \cref{eq:innerdegree-0,eq:innerdegree-1,eq:innerdegree-2,eq:innerdegree-3,eq:innerdegree-4} separately. We start by considering \cref{eq:innerdegree-1}. Recall that $\tau \leq \pi$ for $r+s\geq R$. Using \eqref{eq:case-r-upper}  and \eqref{eq:integral-s-t} for the cases $\Delta < \log(\sqrt{4/3})$ and $\Delta \geq \log(\sqrt{4/3})$ respectively and using that for $r \leq R/2$ we get $\indicator_{\{r\geq R/2\}} = 0$ gives us
  \begin{align*}
      \cref{eq:innerdegree-1} \leq \left(1+\Theta\left(e^{-\alpha R}\right)\right)\frac{\alpha e^{-\alpha r}}{(\alpha -1/2)}\begin{cases}
          0 & \text{for } r \leq R/2,\\
          {\left(e^{\frac{1}{2}(2\alpha - 1)(2r-R)} - 1\right)} & \text{for } 0<\Delta<\log(\sqrt{4/3}),\\
          \left(\left(\frac{4}{3}\right)^{(\alpha - 1/2)} - 1\right) & \text{for } \Delta \geq \log(\sqrt{4/3}).
      \end{cases}
  \end{align*}
Moving on to \cref{eq:innerdegree-2,eq:innerdegree-3} , we get in similar fashion
\sam{missing some operators?}
\begin{align*}
     \cref{eq:innerdegree-2} \leq &\left(1+\Theta\left(e^{-\alpha R}\right)\right)\\
     &\frac{4\alpha e^{-\alpha r}}{3\sqrt{3}(\alpha -1/2)}\begin{cases}
     0 & \text{for } \Delta \leq \log(\sqrt{4/3}),\\
          {\left(e^{\frac{1}{2}(2\alpha - 1)(2r-R)} - \left(\frac{4}{3}\right)^{(\alpha - 1/2)}\right)} & \text{for } \log(\sqrt{4/3})<\Delta<\log(\sqrt{2}),\\
          \left(2^{(\alpha - 1/2)} - \left(\frac{4}{3}\right)^{(\alpha - 1/2)}\right) & \text{for } \Delta \geq \log(\sqrt{2}),
     \end{cases}
\end{align*}
and
\begin{align*}
     \cref{eq:innerdegree-3} \leq &\left(1+\Theta\left(e^{-\alpha R}\right)\right)\\
     &\frac{\alpha e^{-\alpha r}}{\sqrt{2}(\alpha -1/2)}\begin{cases}
     0 & \text{for } \Delta \leq \log(\sqrt{2}),\\
          {\left(e^{\frac{1}{2}(2\alpha - 1)(2r-R)} - 2^{(\alpha - 1/2)}\right)} & \text{for } \log(\sqrt{2})<\Delta<2\log(\sqrt{2}),\\
          \left(2^{(2\alpha - 1)} - 2^{(\alpha - 1/2)}\right) & \text{for } \Delta \geq 2\log(\sqrt{2}).
     \end{cases}
\end{align*}
For the last equation \eqref{eq:innerdegree-4} we only have to consider two cases and get
\begin{align*}
     \cref{eq:innerdegree-4} \leq &\left(1+\Theta\left(e^{-\alpha R}\right)\right)\\
     &\frac{2\alpha e^{-\alpha r}}{3(\alpha -1/2)}\begin{cases}
     0 & \text{for } \Delta \leq 2\log(\sqrt{2}),\\
          {\left(e^{\frac{1}{2}(2\alpha - 1)(2r-R)} - 2^{(2\alpha - 1)}\right)} & \text{else.} 
     \end{cases}
\end{align*}
Before putting everything together we use \cref{eq:measure-of-origin-ball} and get
\begin{align*}
    \cref{eq:innerdegree-0} \leq \left(1+\Theta\left(e^{-\alpha R}\right)\right)e^{-\alpha r}.
\end{align*}
It is left to consider each cases for our different choices of $\Delta$ individually and add up the necessary terms.

\textbf{Case 1}~[$0\leq\Delta<\log(\sqrt{4/3})$]: All but the first indicator variable is $0$, so we only have to consider the sum of \cref{eq:innerdegree-0} and \cref{eq:innerdegree-1}. This yields
\begin{align*}
    \mu(\mathcal{I}(r))&\leq \left(1+\Theta\left(e^{-\alpha R}\right)\right)e^{-\alpha r}\Big(1 + \frac{\alpha }{\alpha -1/2} {\left(e^{\frac{1}{2}(2\alpha - 1)(2r-R)} - 1\right)}\Big), 
\end{align*}
and we are done with the case after simple algebraic manipulation.

\textbf{Case 2}~[$\log(\sqrt{4/3})\leq\Delta<\log(\sqrt{2})$]:  In addition to \cref{eq:innerdegree-0} and \cref{eq:innerdegree-1}, also \cref{eq:innerdegree-2} is active now, and thus
\begin{align*}
    \mu(\mathcal{I}(r))&\leq \left(1+\Theta\left(e^{-\alpha R}\right)\right)e^{-\alpha r}\left(1 + \frac{\alpha }{\alpha -1/2} {\left(\left(\frac{4}{3}\right)^{(\alpha - 1/2)} - 1\right)}\right)\\ 
    &\quad+ \left(1+\Theta\left(e^{-\alpha R}\right)\right)\left(\frac{4\alpha e^{-\alpha r}}{3\sqrt{3}(\alpha -1/2)}{\left(e^{\frac{1}{2}(2\alpha - 1)(2r-R)} - \left(\frac{4}{3}\right)^{(\alpha - 1/2)}\right)}\right)\\
    &=\left(1+\Theta\left(e^{-\alpha R}\right)\right)\frac{\alpha e^{-\alpha r}}{\alpha - 1/2}\left(\left(1-\frac{4}{3\sqrt{3}}\right)\left(\frac{4}{3}\right)^{(\alpha - 1/2)} - \frac{1}{2\alpha} \right)\\ 
    &\quad+ \left(1+\Theta\left(e^{-\alpha R}\right)\right)\frac{4\alpha e^{-\alpha r}}{3\sqrt{3}(\alpha -1/2)}{e^{\frac{1}{2}(2\alpha - 1)(2r-R)} }.
\end{align*}

\textbf{Case 3}~[$\log(\sqrt{2})\leq\Delta<2\log(2)$]: To the previous case we add \cref{eq:innerdegree-3} to obtain
\begin{align*}
    \mu(\mathcal{I}(r))&\leq \left(1+\Theta\left(e^{-\alpha R}\right)\right)e^{-\alpha r}\left(1 + \frac{\alpha }{\alpha -1/2} {\left(\left(\frac{4}{3}\right)^{(\alpha - 1/2)} - 1\right)}\right)\\
     &\quad+ \left(1+\Theta\left(e^{-\alpha R}\right)\right)e^{-\alpha r}\left(\frac{4\alpha}{3\sqrt{3}(\alpha -1/2)}\left(2^{(\alpha -1/2)} - \left(\frac{4}{3}\right)^{(\alpha - 1/2)} \right)\right)\\
    &\quad+ \left(1+\Theta\left(e^{-\alpha R}\right)\right)\left(\frac{\alpha e^{-\alpha r}}{\sqrt{2}(\alpha -1/2)}{\left(e^{\frac{1}{2}(2\alpha - 1)(2r-R)} - 2^{(\alpha - 1/2)}\right)}\right)\\
    &=\left(1+\Theta\left(e^{-\alpha R}\right)\right)\frac{\alpha e^{-\alpha r}}{\alpha - 1/2}\\
    &\quad\cdot\left(\left(\frac{4}{3\sqrt{3}} - \frac{1}{\sqrt{2}}\right)2^{(\alpha - 1/2)} + \left(1-\frac{4}{3\sqrt{3}}\right)\left(\frac{4}{3}\right)^{(\alpha - 1/2)} - \frac{1}{2\alpha} \right)\\ 
    &\quad+ \left(1+\Theta\left(e^{-\alpha R}\right)\right)\frac{\alpha e^{-\alpha r}}{\sqrt{2}(\alpha -1/2)}{e^{\frac{1}{2}(2\alpha - 1)(2r-R)} }.
\end{align*}

\textbf{Case 4}~[$\Delta\geq2\log({2})$]: Finally, we consider the last case and compute in conjunction with \cref{eq:innerdegree-4}
\begin{align*}
     \mu(\mathcal{I}(r))&\leq \left(1+\Theta\left(e^{-\alpha R}\right)\right)e^{-\alpha r}\left(1 + \frac{\alpha }{\alpha -1/2} {\left(\left(\frac{4}{3}\right)^{(\alpha - 1/2)} - 1\right)}\right)\\
     &\quad+ \left(1+\Theta\left(e^{-\alpha R}\right)\right)e^{-\alpha r}\left(\frac{4\alpha}{3\sqrt{3}(\alpha -1/2)}\left(2^{(\alpha -1/2)} - \left(\frac{4}{3}\right)^{(\alpha - 1/2)} \right)\right)\\
    &\quad+\left(1+ \Theta\left(e^{-\alpha R}\right)\right)e^{-\alpha r}\left(\frac{\alpha}{\sqrt{2}(\alpha -1/2)}\left(2^{(2\alpha-1)} - 2^{(\alpha-1/2)}\right)\right)\\
    &\quad+ \left(1+ \Theta\left(e^{-\alpha R}\right)\right)e^{-\alpha r}\left(\frac{2\alpha}{3(\alpha -1/2)}\left(e^{\frac{1}{2}(2\alpha - 1)(2r-R)}  - 2^{(2\alpha -1)} \right)\right)\\
    &=\left(1+\Theta\left(e^{-\alpha R}\right)\right)\frac{\alpha e^{-\alpha r}}{\alpha - 1/2}\\
    &\quad\cdot\Bigg(\left(\frac{1}{\sqrt{2}} -\frac{2}{3}\right)2^{(2\alpha -1)} + \left(\frac{4}{3\sqrt{3}} - \frac{1}{\sqrt{2}}\right)2^{(\alpha - 1/2)}\\
    &\qquad+ \left(1-\frac{4}{3\sqrt{3}}\right)\left(\frac{4}{3}\right)^{(\alpha - 1/2)} - \frac{1}{2\alpha} \Bigg)\\ 
    &\quad+ \left(1+\Theta\left(e^{-\alpha R}\right)\right)\frac{2\alpha e^{-\alpha r}}{3(\alpha -1/2)}{e^{\frac{1}{2}(2\alpha - 1)(2r-R)}}.
\end{align*}
\end{proof}

We complement the upper bound with a lower bound for the measure of an inner-ball. Combining this with \cref{lem:deg-lower-inner} allows us to lower bound the degeneracy.

\begin{lemma}\label{lem:inner-neighoubrhood-lower}
 Let $R/2 \leq r \leq R$. 
 Then, for the inner-ball $\mathcal{I}(r)$,
    \begin{align*}
        \mu(\mathcal{I}(r)) \geq \left(1+ \Theta\left(e^{-\alpha R}\right)\right)\frac{\alpha e^{-\alpha r}}{(\alpha - 1/2)}\left(\frac{2}{\pi}e^{\frac{1}{2}(2\alpha - 1)(2r-R)} - \left(\frac{2}{\pi} - \frac{(\alpha - 1/2)}{\alpha}\right)\right).
    \end{align*}
\end{lemma}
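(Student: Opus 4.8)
The plan is to mirror the proof of \cref{lem:inner-neighoubrhood-upper}, replacing the upper bound on the angle $\theta_R$ by the matching lower bound from \cref{lem:theta-bounds}. I start from the exact decomposition of the inner-ball into the central disk and its two angular wings,
\[
    \mu(\mathcal{I}(r)) = \mu(\mathcal{B}_0(R-r)) + 2\int_{R-r}^{r}\int_0^{\theta_R(x,r)}\rho(x)\dd\theta\dd x = \mu(\mathcal{B}_0(R-r)) + 2\int_{R-r}^{r}\theta_R(x,r)\rho(x)\dd x,
\]
which holds because a vertex at radius $x$ is a neighbour of $u$ exactly when its angular distance is at most $\theta_R(x,r)$, and $\mathcal B_0(R-r)$ lies entirely in the inner-ball. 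On the whole range $x\in[R-r,r]$ we have $x+r\geq R$, so \cref{lem:theta-bounds} applies and yields $\theta_R(x,r)\geq 2\sqrt{e^{R-r-x}}$. This is exactly the computation behind \cref{lem:inner-neighoubrhood-upper} with the angle constant $\pi$ replaced by $2$, so the leading coefficient $1$ is replaced by $\tfrac2\pi$.

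Next I would evaluate the resulting integral in closed form. Substituting the lower bound and pulling out $e^{(R-r)/2}$ reduces everything to $\int_{R-r}^{r}e^{-x/2}\sinh(\alpha x)\dd x$, whose antiderivative is already recorded in the proof of \cref{lem:inner-neighoubrhood-upper}. Evaluating at the endpoints gives
\[
    \int_{R-r}^{r}e^{-x/2}\sinh(\alpha x)\dd x = \frac{e^{(\alpha-1/2)r}-e^{(\alpha-1/2)(R-r)}}{2\alpha-1} + \frac{e^{-(\alpha+1/2)r}-e^{-(\alpha+1/2)(R-r)}}{2\alpha+1}.
\]
Multiplying by the prefactor $\frac{2\alpha e^{(R-r)/2}}{\pi(\cosh(\alpha R)-1)}$ and using $(\cosh(\alpha R)-1)^{-1}=2e^{-\alpha R}(1+\Theta(e^{-\alpha R}))$, the term $e^{(\alpha-1/2)r}$ produces the claimed leading contribution $\frac{2}{\pi}\frac{\alpha e^{-\alpha r}}{\alpha-1/2}e^{\frac12(2\alpha-1)(2r-R)}$, while $-e^{(\alpha-1/2)(R-r)}$ produces $-\frac{2}{\pi}\frac{\alpha}{\alpha-1/2}e^{-\alpha r}$.

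I would then fold in the central disk. By \cref{eq:measure-of-origin-ball}, $\mu(\mathcal B_0(R-r)) = (1+\Theta(e^{-\alpha R}))e^{-\alpha r}$ at leading order, and this $+e^{-\alpha r}$ combines with the $-\frac{2}{\pi}\frac{\alpha}{\alpha-1/2}e^{-\alpha r}$ coming from the integral to yield precisely the constant $-\frac{\alpha}{\alpha-1/2}\bigl(\frac2\pi - \frac{\alpha-1/2}{\alpha}\bigr)$ inside the bracket of the statement. Collecting the leading and constant contributions reproduces the right-hand side exactly.

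The main obstacle is controlling the remaining pieces, namely the $\frac{1}{2\alpha+1}$ terms of the integral and the sub-leading corrections of $\mu(\mathcal B_0(R-r))$; some of these are negative and of relative order $e^{-\alpha(R-r)}$, which is larger than the $e^{-\alpha R}$ absorbed by the prefactor. The clean way around this is to invoke the strict bound $\theta_R(x,r) > (2+\delta)\sqrt{e^{R-r-x}}$ of \cref{lem:theta-bounds} instead of the bare constant $2$: for $r>R/2$ the factor $e^{\frac12(2\alpha-1)(2r-R)}-1$ is strictly positive, so the surplus $\frac{\delta}{\pi}\frac{\alpha}{\alpha-1/2}e^{-\alpha r}\bigl(e^{\frac12(2\alpha-1)(2r-R)}-1\bigr)$ gained from this stronger angle bound dominates every negative correction. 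This secures the stated lower bound with the clean constant $\frac2\pi$ and the $(1+\Theta(e^{-\alpha R}))$ prefactor, which is exactly what is needed for the degeneracy lower bound in \cref{the:degeneracy-upper}.
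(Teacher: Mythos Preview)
Your proposal is correct and follows essentially the same route as the paper: both start from the exact decomposition $\mu(\mathcal I(r))=\mu(\mathcal B_0(R-r))+2\int_{R-r}^r\theta_R(x,r)\rho(x)\dd x$, plug in the lower bound $\theta_R(x,r)\ge(2+\delta)\sqrt{e^{R-r-x}}$ from \cref{lem:theta-bounds}, evaluate the same antiderivative, and use the $\delta$-slack to absorb the negative $\tfrac{1}{2\alpha+1}$ corrections before adding the central-disk term. The only cosmetic difference is that you first compute with constant $2$ to hit the target expression exactly and then argue that the extra $\delta$ covers the error, whereas the paper carries $(2+\delta)$ through the integral and drops $\delta$ at the end; the content is identical.
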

\begin{proof}
    The calculations are similar but simpler to those for \cref{lem:inner-neighoubrhood-upper}. Recall that
    \begin{align*}
    \mu(\mathcal{I}(r)) &= \mu(\mathcal{B}_u(R) \cap \mathcal{B}_0(r)) = \mu(\mathcal{B}_0(R-r)) + 2 \int_{R-r}^r \theta_R(x,r)\rho(x) \dd x.
\end{align*}
This, in conjunction with \cref{lem:theta-bounds} yields for a constant $\delta > 0$
\begin{align*}
    \mu(\mathcal{I}(r)) &\geq\frac{(2+\delta)\alpha e^{(R-r)/2}}{\pi(\cosh(\alpha R) - 1)} \int_{R-r}^{r}e^{-x/2}\sinh(\alpha x)\\
        &= \frac{2(2+\delta) \alpha (e^{R/2-r}(2\alpha \cosh(\alpha r) + \sinh(\alpha r)))}{\pi(4\alpha^2 - 1)(\cosh(\alpha R) -1)} \\
         &- \frac{2(2+\delta) \alpha(2\alpha \cosh(\alpha(R-r)) + \sinh(\alpha(R-r)))}{\pi(4\alpha^2 - 1)(\cosh(\alpha R) -1)} \\
        &= \frac{2(2+\delta) \alpha(e^{R/2 - r}((\alpha + 1/2)e^{\alpha r} + (\alpha - 1/2)e^{-\alpha r}))}{\pi(4\alpha^2 - 1)(\cosh(\alpha R) -1)} \\
        &- \frac{2(2+\delta) \alpha((\alpha + 1/2) e^{\alpha (R-r)} + (\alpha - 1/2)e^{-\alpha(R-r)})}{\pi(4\alpha^2 - 1)(\cosh(\alpha R) -1)} \\
        &= \frac{\frac{\alpha(2+\delta)}{\pi(2\alpha -1)}\left(e^{R/2 - (1-\alpha)r} - e^{\alpha(R-r)} \right) - \frac{\alpha(2+\delta)}{\pi(2\alpha + 1)}\left(e^{-\alpha(R - r)} - e^{-(1 +\alpha)r + R/2} \right)}{(\cosh(\alpha R) -1)}\\
        &\stackrel{\delta > 0}{\geq} \frac{\frac{2\alpha}{\pi(2\alpha -1)}\left(e^{R/2 - (1-\alpha)r} - e^{\alpha(R-r)} \right)}{(\cosh(\alpha R) -1)},
    \end{align*}
where the last line follows since $\frac{\delta\alpha}{\pi(2\alpha -1)}\left(e^{R/2 - (1-\alpha)r} - e^{\alpha(R-r)}\right) \in \Theta(1)$ by the hypothesis that $R/2 < r \leq R/2 + \mathcal{O}(1)$ and $\delta > 0$ constant, in contrast to the vanishing term $\frac{\alpha(2+\delta)}{\pi(2\alpha + 1)}\left(e^{-\alpha(R - r)} - e^{-(1 +\alpha)r + R/2}\right) \in o(1)$.

Putting this together with the fact that $\mu(\mathcal{B}_0(R-r)) = \int_{0}^{R -r} \frac{\alpha \sinh{(\alpha x)}}{\cosh{(\alpha R)} - 1}\dd x = \frac{\cosh(\alpha(R-r) - 1}{\cosh{(\alpha R)} - 1}$ (see e.g. \cref{eq:measure-of-origin-ball}) it follows that
\begin{align*}
    \mu(\mathcal{I}(r)) &= \mu(\mathcal{B}_u(R) \cap \mathcal{B}_0(r)) = \mu(\mathcal{B}_0(R-r)) + 2 \int_{R-r}^r \theta_R(x,r)\rho(x) \dd x\\
    &\geq \left(1+ \Theta\left(e^{-\alpha R}\right)\right)\left(e^{-\alpha r} + \frac{2\alpha}{\pi(\alpha -1/2)}\left(e^{-(\alpha-1/2)R - (1-\alpha)r} - e^{-\alpha r} \right)\right)\\
    &=\left(1+ \Theta\left(e^{-\alpha R}\right)\right)\frac{\alpha e^{-\alpha r}}{(\alpha - 1/2)}\left(\frac{2}{\pi}e^{\frac{1}{2}(2\alpha - 1)(2r-R)} - \left(\frac{2}{\pi} - \frac{(\alpha - 1/2)}{\alpha}\right)\right),
\end{align*}
where we used $\frac{1}{\cosh{(\alpha R)} - 1} = 2\exp({-\alpha R})\left(1+ \Theta\left(e^{-\alpha R}\right)\right)$. This concludes the proof.
\end{proof}

\subsection{Bounds on the degeneracy}

\degeneracyUpper*
\begin{proof}
We shall apply \cref{obs:deg-upper-inner} to upper bound the degeneracy by the maximal inner-degree. We accomplish this via the neighbourhood of a vertex that is formed within the inner-ball. Recall that \cref{lem:radius-maximal-innerdegree} gives us the radius $r^*$ by which the inner-neighbourhood $\mathcal{I}(r)$ is maximal. Moreover, by \cref{lem:inner-neighoubrhood-upper} we have an upper-bound on the inner-neighbourhood. Thus we upper bound the  measure of any inner-ball by plugging $r^*$ into the upper bound of $\mu(\mathcal{I}(r))$ to see that
\begin{align}\label{eq:upper-maximal-inner}
\begin{split}
    \mu(\mathcal{I}(r^*)) &\leq \left(1+ \Theta\left(e^{-\alpha R}\right)\right)\frac{\left(\frac{\gamma(1-\alpha)}{\alpha\eta}\right)^{\frac{\alpha}{2\alpha -1}}\alpha e^{-\alpha R/2}}{\alpha - 1/2}\left( \frac{\alpha\eta}{(1-\alpha)} - \eta\right)\\
    &= \left(1+ \Theta\left(e^{-\alpha R}\right)\right)\frac{2\alpha(\gamma(\frac{1}{\alpha}-1))^{\frac{\alpha}{2\alpha-1}}\eta^{\frac{\alpha-1}{2\alpha-1}}}{1-\alpha}e^{-\alpha R/2}.
\end{split}
\end{align}
We proceed by considering two different cases for $\alpha$, namely $1/2 < \alpha < 9/10$ and $9/10 \leq \alpha < 1$, and show that the largest inner-degree is at most $(4/3)^{\alpha}n^{1-\alpha}e^{-\alpha C/2} = ((4/3)^{\alpha} + o(1))\coresize$ \whp in both cases. By \cref{obs:deg-upper-inner} this proves the statement.

\noindent\textbf{Case 1} [$1/2 < \alpha < 9/10$]: Recall that by \cref{lem:radius-maximal-innerdegree}, $r^*$ is monotonically increasing in $\alpha$. Then, letting $\alpha$ approach $1/2$ from above, and setting $\gamma = \frac{1}{\sqrt{2}}$ and $\eta = \frac{1}{2\alpha} - (1-\frac{4}{3\sqrt{3}})(\frac{4}{3})^{(\alpha - 1/2)} - (\frac{4}{3\sqrt{3}} - \frac{1}{\sqrt{2}})2^{(\alpha - 1/2)}$, we get 
$
    r^* \geq R/2 + 1/2 > R/2 + \frac{1}{2}\log(2).
$
This satisfies all necessary conditions for \cref{lem:inner-neighoubrhood-upper} when $\Delta \geq \log(\sqrt{2})$. Applying \eqref{eq:upper-maximal-inner} and noticing that it \sam{it?} is monotonically increasing in $\alpha$, 
\begin{align}\label{eq:ordinary light}
     \mu(\mathcal{I}(r^*)) < \left(\frac{4}{3}\right)^{\alpha}n^{-\alpha}e^{-\alpha C/2}
\end{align}
for $\alpha < 9/10$.
Taking the expectation and a Chernoff bound we have that \whp~the inner-neighbourhood of a vertex with radius $r^*$ is at most $((4/3)^{\alpha}+o(1))\coresize$. By a union bound this holds for any vertex with any radius $r$ since $\mu(\mathcal{I}(r)) \leq \mu(\mathcal{I}(r^*))$.

\noindent\textbf{Case 2} [$9/10 \leq \alpha < 1$]: Taking $\gamma = \frac{2}{3}$ and $\eta = \frac{1}{2\alpha} - (1-\frac{4}{3\sqrt{3}})(\frac{4}{3})^{(\alpha - 1/2)} - (\frac{4}{3\sqrt{3}} - \frac{1}{\sqrt{2}})2^{(\alpha - 1/2)} - (\frac{1}{\sqrt{2}} -\frac{2}{3})2^{(2\alpha -1)}$,
we have via \cref{lem:radius-maximal-innerdegree} that for $\alpha \geq 9/10$,
\begin{align*}
    r^* \geq R/2 + 3/4 > R/2 + \log(2).
\end{align*}
Then, for such $\gamma$, $\eta$ and $\alpha$ we again obtain
\eqref{eq:ordinary light}, but now for $9/10 \leq \alpha < 1$.

The proof for the upper bound is then finished by another combination of a Chernoff and union bound, similarly to the previous case.

We prove the lower bound in two steps. First, we derive a radius $r^*$, such that a point with radius $r^*$ entails a relatively large lower bound for the measure of its inner-ball. Then, in a second step, we show that, \whp, there exists a vertex with radial coordinate within small radial distance $\xi$ to $r^*$. This then implies that the inner-degree of this vertex to be close to that of a vertex with radius $r^*$. This results in a lower bound for the inner-degree of a hyperbolic random graph \whp By \cref{lem:deg-lower-inner}, a lower bound on the maximal inner-degree then directly translates into a lower bound for the degeneracy \whp\ as given in the statement concluding our proof.

Taking $\gamma = \frac{2}{\pi}$ and $\eta = \frac{2}{\pi} - \frac{(\alpha - 1/2)}{\alpha}$ from \cref{lem:inner-neighoubrhood-lower} and plugging into \cref{lem:radius-maximal-innerdegree}, we get that
\begin{align*}
    r^* = R/2 + \frac{\log\left(\frac{\alpha\eta}{\gamma(1-\alpha)}\right)}{2\alpha - 1} = R/2 + \frac{\log\left(\frac{4\alpha + \pi - 2\pi\alpha}{4(1-\alpha)}\right)}{2\alpha - 1} \in R/2 + \Theta(1).
\end{align*}

We now show that there exists, \whp, a vertex with radius within the interval $r^*$ and $r^* + \xi$ and lower bound $\mu(\mathcal{I}(r^* + \xi))$. The result then follows by considering the expected number vertices within $\mathcal{I}(r^* + \xi)$ and finally using a chernooff bound to achieve the desired \whp result. 

With hindsight we choose $\xi = \log\left(1+\frac{\log^2(n)}{n^{1-\alpha}}\right)/\alpha$ and we consider the set of points $\mathcal{A}:=\mathcal{B}_0(r^* + \xi)\setminus\mathcal{B}_0(r^*)$. For our choice of $\xi$, and using $r^* > R/2$, we then get
\begin{align*}
    \mu(\mathcal{A}) &= \int_{r^*}^{r^* + \xi} \frac{\sinh(\alpha x)}{\cosh(\alpha R) - 1} = \frac{\cosh(\alpha(r^* + \xi)) - \cosh(\alpha r^*)}{\cosh(\alpha R) - 1}\\
    &\geq (1+\Theta(e^{-\alpha R}))(e^{-\alpha(R/2 - \xi)} - e^{-\alpha R/2})\\
    &=  (1+\Theta(e^{-\alpha R}))(e^{-\alpha R/2}e^{\log\left(1+\frac{\log^2(n)}{n^{1-\alpha}}\right)} - e^{-\alpha R/2})\\
    &= (1+\Theta(e^{-\alpha R}))(n^{-\alpha}e^{-\alpha C/2}{\left(1+\frac{\log^2(n)}{n^{1-\alpha}}\right)} - n^{-\alpha}e^{-\alpha C/2})\\
&= (1+\Theta(e^{-\alpha R}))\left(e^{-\alpha C/2}{\frac{\log^2(n)}{n}}\right) \in \omega(\log(n)/n).
\end{align*}
Notice that the expected number of vertices in $\mathcal{A}$ is then $\omega(\log(n))$ so using a Chernoff bound there are $\omega(\log(n))$ vertices in $\mathcal{A}$ \whp It is left to lower bound the inner-degree of any vertex contained in $\mathcal{A}$.

Making use of \cref{lem:inner-neighoubrhood-lower} in conjunction with $r^* = R/2 + \frac{\log\left(\frac{4\alpha + \pi - 2\pi\alpha}{4(1-\alpha)}\right)}{2\alpha - 1}$ and $\xi = \log\left(1+\frac{\log^2(n)}{n^{1-\alpha}}\right) / \alpha$ we have
\begin{align*}
    \mu(\mathcal{I}(r^* + \xi)) &\geq \left(1+ \Theta\left(e^{-\alpha R}\right)\right)\frac{\alpha e^{-\alpha (r^* + \xi)}}{\alpha - 1/2}\left(\frac{2}{\pi} e^{\frac{1}{2}(2\alpha - 1)(2(r^* + \xi)-R)} - \left(\frac{2}{\pi} - \frac{(\alpha - 1/2)}{\alpha}\right)\right)\\
    &= (1 - o(1))\frac{2\alpha(\frac2\pi(\frac{1}{\alpha}-1))^{\frac{\alpha}{2\alpha-1}}\left(\frac{2}{\pi} - \frac{(\alpha - 1/2)}{\alpha}\right)^{\frac{\alpha-1}{2\alpha-1}}}{1-\alpha}e^{-\alpha R/2}.
\end{align*}
Since $R = 2\log(n) + C$, we note that for any vertex $a \in V \cap \mathcal{A}$, the expected inner-degree $\mathbb{E}[\Gamma(a)]$ is at least
\begin{align*}
    \mathbb{E}[\Gamma(a)] \geq n\mu(\mathcal{I}(r^* + \xi)) &\geq \frac{(4-o(1))}{\pi}\left(\frac{2\left(1-\alpha\right)}{\frac{\pi}{2}-\alpha\left(\pi-2\right)}\right)^{\frac{1-\alpha}{2\alpha-1}} n^{1-\alpha}e^{-\alpha C/2}.
\end{align*}
Observe that the expected number of vertices is lower bounded by $n^{\Omega(1)}$. Hence, a final application of a Chernoff bound shows that there exists a vertex with inner-degree at least $(1 -\frac{1}{\log(n)})\mathbb{E}[\Gamma(a)]$ \whp This 
gives the stated lower bound by applying \cref{lem:deg-lower-inner} and recalling $\coresize = (1\pm o(1)) n^{1-\alpha}e^{-\alpha C/2}$ \whp  
\end{proof}

\subsection{Approximation algorithm}

\approxAlgo*

\begin{proof}
    We construct a colouring using the degeneracy. We do so following Matula and Beck~\cite{Matula1983} and compute a \emph{smallest-last vertex ordering}. To bound the run time of the algorithm we apply \cite[Lemma 1]{Matula1983} that states a run time of $\mathcal{O}(|E| + |V|)$. The average degree for hyperbolic random graph is $\mathcal{O}(1)$ \whp (see \cite[Corollary 17]{hrg-spectral}). Hence $|E| \in \mathcal{O}(n)$ \whp and by definition $|V| = n$. We conclude that it requires at most $\mathcal{O}(n)$ time \whp~to compute the smallest-last vertex ordering of a hyperbolic random graph. 

    Having the smallest-last vertex ordering in hand we colour the graph in the following greedy fashion: let $v_1,\cdots,v_n$ be the smallest-last vertex ordering of the vertices. Iterating through the vertices in this order, let $i \in [n]$ and colour vertex $v_i$ at time step $i$ with the smallest colour available that previously has not been assigned to any neighbour of $v_i$. More formally, consider the induced subgraph $G[V_i]$ where $V_i = \{v_1, v_2, .. v_i\}$ and set the colour for $v_i$ to be the colour $f(v_i) = \min{(\{\mathbb{N} \setminus N_i\})}$ where $N_i = \{f(u) | u \in V_{i-1} \wedge \{u,v_i\} \in E(G)\}$. This concludes the description of the algorithm. 
    
    The correctness of the algorithm is immediate by observing that when we assign a colour to a vertex $v_i$, this colour is different from the colours of all previously coloured neighbours of $v_i$. Notice that, given the smallest-last vertex ordering $v_1,\cdots, v_n$, the colouring can be constructed in $\mathcal{O}(n)$ \whp To back up this claim, we recall that $|E| \in \mathcal{O}(n)$ \whp and that the algorithm iterates through $n$ vertices while it checks the colour of all neighbours. Thus, for each edge $\{u, v_i\} \in E$, the algorithm checks the colour assigned to $u$ at step $i$ and over all iterations these are $|E| \in \mathcal{O}(n)$ checks \whp Again, we conclude with a run time of $\mathcal{O}(|E| + |V|)$ which boils down to $\mathcal{O}(n)$ \whp This concludes the correctness and our claimed running time $\mathcal{O}(n)$.

    We finish the proof by showing the claimed approximation ratio of $(4/3)^{\alpha}$. To this end, we first consider the number of colours used by our algorithm. Recall that, by \cref{the:degeneracy-upper}, the degeneracy of a hyperbolic random graph is at most $\kappa(G) \leq (1+o(1))\left(\frac{4}{3}\right)^{\alpha}\coresize$ \whp By  \cite[Lemma 4]{Matula1983}, the number of colours used via the smallest-last vertex ordering is upper bounded by the degeneracy $\kappa(G) + 1$. Moreover, the minimal amount of colours for $G$ are given by its chromatic number $\chi(G)$. Thus, an approximation ratio is given by $\frac{\kappa(G) + 1}{\chi(G)} \leq \frac{\kappa(G) + 1}{\coresize}$ since $\chi(G) \geq \coresize$ by \cref{lem:lower-clique-number}. Since the upper bound on $\kappa(G)$ and the lower bound on $\omega(G)$ holds \whp, a union over their complementary events does not occur \whp The \coresize for $\degen/\coresize$ cancels out yielding the desired approximation factor~\whp 
\end{proof}

\subsection{Geometric inhomogeneous random graphs}
\girgInner*

\begin{proof}
The upper bound follows by ordering the vertices $V = 
v_1,\cdots,v_n$ in ascending order by their respective weights so that for $i,j \in [n]$ and $i < j$ it holds $w_{v_i} \leq w_{v_j}$. Since in this ordering the vertex $u^*$ with largest inner-degree has the largest degree among the vertices with larger index it follows $\degen \leq |\innerdeg(\ustar)|$.   

For the lower bound we show that for every ordering of the vertices $V$, there must be a vertex with at least $(1-o(1))|\innerdeg(\ustar)|$ neighbours of larger index \whp To this end we use that for a fixed $w$ and $w_v \geq w_u$ that
\begin{align}\label{eq:yes}
    \Prob{x \in N(u) \cap w_x \geq w} &= 2\int_{w}^{\infty}t(y,w_u)f(y) dy\notag\\
    &\leq 2\int_{w}^{\infty}t(y,w_v)f(y) dy = \Prob{x \in N(v) \cap w_x \geq w},
\end{align}
since $t(y,w_v) \geq t(y,w_u)$ for $w_v \geq w_v$.
As an additional ingredient we use
\begin{equation}\label{eq:no}
    |\innerdeg(u^*)| \geq \clique \geq \coresize \in n^{\Omega(1)} \text{ \whp}
\end{equation}

Let $w^* = w_{\ustar}$ and we consider the set $U = \{v \in V | w_v \geq w^*\}$ and let $X_v$ be a random variable where $X_v = |\{u \in N(v)| u \in U\}|$. We then have
\begin{align*}
    \E{X_v} \geq \E{|\innerdeg(\ustar)|} \in n^{\Omega(1)}. \tag{by \eqref{eq:yes} and \eqref{eq:no}}
\end{align*}
Fixing a vertex $v \in U$ it holds $X_v \geq (1-1/\log(n))\E{|\innerdeg(\ustar)|}$ \whp using a Chernoff bound. A union bound now yields that this holds for all vertices in $U$. Now, in any ordering of $V$, the vertex of $U$ with lowest index has at least $(1-1/\log(n))\E{|\innerdeg(\ustar)|}$ neighbours with larger larger index \whp 
\end{proof}

\girgGap*

\begin{proof}
Fix any $\alpha \in (1/2, 1)$ as given in the statement. Draw the hyperbolic random graph  $H \sim \mathcal{G}(n, \alpha, C)$ and the geometric inhomogeneous random graph $G \sim \mathcal{G}(n, 2\alpha + 1, \lambda)$. Using \cref{the:girg-degeneracy} for the degeneracy of the GIRG we obtain~\whp $$\kappa(G)/\coresize \geq (2 - o(1)) (2(1-\alpha))^{({1 - \alpha})/{(2\alpha - 1)}}.$$
In contrast, upper bounding the degeneracy of the HRG with \cref{the:degeneracy-upper} reveals~\whp
$$
\kappa(H)/\sigma(H) \leq ((4/3)^{\alpha} - o(1)).
$$
Combining the two inequalities, it now follows that $
 \kappa(G)/\coresize  -  \kappa(H)/\sigma(H)  
    $
    is bounded away from $0$~\whp, which is what we sought out to show.
\end{proof}
\end{document}